\newtheorem{theorem}{Theorem}
\newtheorem{proposition}{Proposition}
\newtheorem{lemma}{Lemma}
\newtheorem{corollary}{Corollary}
\acrodef{2D}{two-dimensional}
\acrodef{3D}{three-dimensional}
\acrodef{4G}{fourth generation}
\acrodef{5G}{fifth generation}
\acrodef{6G}{sixth generation}
\acrodef{ADT}{angle diversity transmitter}
\acrodef{ADR}{angle diversity receiver}
\acrodef{ANSI}{American national standards institute }
\acrodef{AP}{access point}
\acrodef{APD}{avalanche photodiode}
\acrodef{AWGN}{additive white Gaussian noise}
\acrodef{AWGR}{high port count arrayed waveguide grating router}
\acrodef{BER}{bit error ratio}
\acrodef{CSI}{channel state information}
\acrodef{CPC}{compound parabolic concentrator}
\acrodef{CoMP}{coordinated multipoint}
\acrodef{DC}{direct current}
\acrodef{DCO{-}OFDM}{direct current-biased optical orthogonal frequency division multiplexing}
\acrodef{DFB}{distributed feedback laser}
\acrodef{EGC}{equal gain combining}
\acrodef{FEC}{forward error correction}
\acrodef{FF}{fill factor}
\acrodef{FFT}{fast Fourier transform}
\acrodef{FOV}{field of view}
\acrodef{FSO}{free space optical}
\acrodef{IEC}{International electrotechnical commission}
\acrodef{IFFT}{inverse fast Fourier transform}
\acrodef{IM{-}DD}{intensity modulation and direct detection}
\acrodef{IM{/}DD}{intensity modulation and direct detection}
\acrodef{IR}{infrared}
\acrodef{ISI}{inter-spot interference}
\acrodef{JT}{joint transmission}
\acrodef{LSC}{luminescent solar concentrator}
\acrodef{LD}{laser diode}
\acrodef{LED}{light-emitting diode}
\acrodef{LiFi}{light fidelity}
\acrodef{MPE}{maximum permissible exposure}
\acrodef{MHP}{most hazardous position}
\acrodef{MRC}{maximal ratio combining}
\acrodef{NIR}{near infrared}
\acrodef{NOMA}{non-orthogonal multiple access}
\acrodef{PAM}{pulse amplitude modulation}
\acrodef{PIN}{positive-intrinsic-negative}
\acrodef{PSD}{power spectral density}
\acrodef{PD}{photodiode}
\acrodef{DCO{-}OFDM}{DC biased optical orthogonal frequency division multiplexing}
\acrodef{OFDM}{orthogonal frequency division multiplexing}
\acrodef{OFDMA}{orthogonal frequency division multiple access}
\acrodef{OOK}{on-off keying}
\acrodef{OW}{optical wireless}
\acrodef{OWC}{optical wireless communication}
\acrodef{QAM}{quadrature amplitude modulation}
\acrodef{RF}{radio frequency}
\acrodef{RIN}{relative intensity noise}
\acrodef{RSS}{received signal strength}
\acrodef{SINR}{signal-to-interference-plus-noise ratio}
\acrodef{SNR}{signal-to-noise ratio}
\acrodef{TIA}{transimpedance amplifier }
\acrodef{TEM}{transverse electromagnetic}
\acrodef{VCSEL}{vertical cavity surface emitting laser}
\acrodef{VLC}{visible light communication}
\acrodef{WiFi}{wireless fidelity}
\acrodef{NCD}{no constraint on dimensions}
\acrodef{MCD}{moderately constrained dimensions}
\acrodef{SCD}{strictly constrained dimensions}
\acrodef{FFOT}{fused fibre-optic taper}
\newcommand\raisepunct[1]{\,\mathpunct{\raisebox{0.4ex}{#1}}}
\begin{document}

\title{Design and Optimisation of High-Speed Receivers for 6G Optical Wireless Networks}

%
% \author{Elham~Sarbazi,~\IEEEmembership{Member,~IEEE,}
% 	Hossein~Kazemi,~\IEEEmembership{Member,~IEEE,}
% 	Mohammad~Dehghani~Soltani,~\IEEEmembership{Member,~IEEE,}
% 	Majid~Safari,~\IEEEmembership{Member,~IEEE,}
% 	and~Harald~Haas,~\IEEEmembership{Fellow,~IEEE,}
% 	and~\textcolor{red}{add new names},~\IEEEmembership{Fellow,~IEEE}}% <-this % stops a space

\author{\IEEEauthorblockN{Elham Sarbazi, Hossein Kazemi, Michael Crisp, Taisir El-Gorashi, \\Jaafar Elmirghani, Richard Penty, Ian White, Majid Safari and Harald Haas\vspace{-20pt}}
\thanks{This work has been submitted to the IEEE for possible publication. Copyright may be transferred without notice, after which this version may no longer be accessible.}
\thanks{Part of this work has been presented at IEEE Global Communications Conference (GLOBECOM 2022), 4–8 Dec 2022 \cite{ESarbazi2022Design}.}}

%\author{Michael~Shell,~\IEEEmembership{Member,~IEEE,}
%        John~Doe,~\IEEEmembership{Fellow,~OSA,}
%        and~Jane~Doe,~\IEEEmembership{Life~Fellow,~IEEE}% <-this % stops a space
%\thanks{M. Shell was with the Department
%of Electrical and Computer Engineering, Georgia Institute of Technology, Atlanta,
%GA, 30332 USA e-mail: (see http://www.michaelshell.org/contact.html).}% <-this % stops a space
%\thanks{J. Doe and J. Doe are with Anonymous University.}% <-this % stops a space
%\thanks{Manuscript received April 19, 2005; revised August 26, 2015.}}

% The paper headers
% \markboth{Journal of \LaTeX\ Class Files,~Vol.~14, No.~8, August~2015}%
% {Shell \MakeLowercase{\textit{et al.}}: Bare Demo of IEEEtran.cls for IEEE Journals}

\maketitle

%%%%%%%%%%%%%%%%%%%%%%%%%%%%%%%%%%%%%%%%%%%%%%%%%%%%%%%%%%%%%%%%%%%%%%%%%%%%%%%%%%%%%%%%%%%%%%%%%%%%
%%%%%%%%%%%%%%%%%%%%%%%%%%%%%%%%%%%%%%%%%%%%%%%%%%%%%%%%%%%%%%%%%%%%%%%%%%%%%%%%%%%%%%%%%%%%%%%%%%%%
\begin{abstract}
To achieve multi-Gb/s data rates in 6G optical wireless access networks based on narrow infrared (IR) laser beams, a high-speed receiver with two key specifications is needed: a sufficiently large aperture to collect the required optical power and a wide field of view (FOV) to avoid strict alignment issues. This paper puts forward the systematic design and optimisation of multi-tier non-imaging angle diversity receivers (ADRs) composed of compound parabolic concentrators (CPCs) coupled with photodiode (PD) arrays for laser-based optical wireless communication (OWC) links. Design tradeoffs include the gain-FOV tradeoff for each receiver element and the area-bandwidth tradeoff for each PD array. The rate maximisation is formulated as a non-convex optimisation problem under the constraints on the minimum required FOV and the overall ADR dimensions to find optimum configuration of the receiver bandwidth and FOV, and a low-complexity optimal solution is proposed. The ADR performance is studied using computer simulations and insightful design guidelines are provided through various numerical examples. An efficient technique is also proposed to reduce the ADR dimensions based on CPC length truncation. It is shown that a compact ADR with a height of $\leq0.5$~cm and an effective area of $\leq0.5$~cm$^2$ reaches a data rate of $12$~Gb/s with a half-angle FOV of $30^\circ$ over a $3$~m link distance.
\end{abstract}

% Note that keywords are not normally used for peerreview papers.
\begin{IEEEkeywords}
Laser-based optical wireless communication (OWC), angle diversity receiver (ADR), non-imaging optics, rate maximisation, 6G, compact receiver design, vertical cavity surface emitting laser (VCSEL).
\end{IEEEkeywords}

%%%%%%%%%%%%%%%%%%%%%%%%%%%%%%%%%%%%%%%%%%%%%%%%%%%%%%%%%%%%%%%%%%%%%%%%%%%%%%%%%%%%%%%%%%%%%%%%%%%%
%%%%%%%%%%%%%%%%%%%%%%%%%%%%%%%%%%%%%%%%%%%%%%%%%%%%%%%%%%%%%%%%%%%%%%%%%%%%%%%%%%%%%%%%%%%%%%%%%%%%
\setstretch{1.45} 
\section{Introduction} \label{Sec:1}
\Ac{6G} wireless networks are envisioned to create an advanced communication infrastructure to support ubiquitous mobile ultra-broadband, ultra-high-speed with low-latency communication, and ultra-high data density services \cite{Chowdhury6G}. \Ac{OWC} based on \ac{IR} laser beams is considered as an enabling technology for \ac{6G} with the aim to achieve data rates of multi-Gb/s per user towards the realisation of Terabit/s wireless access networks \cite{Koonen:2020:NarrowBeamOWC, HKazemi2020tb, ESarbazi2020tb, HKazemi2022tb}. The accomplishment of such an ambitious goal by means of narrow laser beams requires to rethink common approaches used for conventional \ac{LiFi} receivers which are primarily tailored to the wide Lambertian emission profile of inexpensive light sources. Such receivers typically use silicon \acp{PD} with an intrinsically wide \ac{FOV} and a relatively large photosensitive area, whose detection bandwidth does not exceed few tens of MHz.
Optical wireless receivers consist of an optical concentrator, a photodetector and a \ac{TIA} \cite{StephenBAlexander1997}. Two types of frequently used photodetectors are \acp{APD} and \ac{PIN} diodes. While \acp{APD} offer higher sensitivities due to their internal current gain mechanism, \ac{PIN} diodes have a simpler structure and a much higher bandwidth. Unlike \acp{APD}, \ac{PIN} diodes do not need a high bias voltage to operate, which renders them a more qualified option for integration with battery-operated mobile devices. Though a \ac{TIA} with a high transimpedance gain is required to amplify the photocurrent generated by a \ac{PIN} \ac{PD} to compensate for the lack of internal current gain \cite{Personick1973Receiver}. Receivers based on \acp{APD} operate in shot noise-limited regime, and those using \ac{PIN} \acp{PD} are thermal noise-limited.

An optical concentrator is composed of an imaging or a non-imaging component \cite{JosephKahn1997}. Imaging optics is often used to form an image of a light source on the detector plane. Imaging components can be manufactured in compact form and they can provide a diffraction limited spatial resolution. However, they have a limited acceptance angle and hence a very strict requirement for alignment. This necessitates the use of complex alignment systems with prohibitive implementation costs especially for mobile applications \cite{YKaymak2018survey, MAbadi2019space, MFernandes2021adaptive, FGuiomar2022coherent}. Another disadvantage of imaging components is that their performance severely degrades with imperfect manufacturing and assembly errors.
With non-imaging optics, image formation is not a concern, instead the main purpose is efficient light collection and concentration. Compared to their imaging counterparts, non-imaging components have a wider acceptance angle for a given concentration gain and they are less prone to alignment issues. Also, they are generally larger in size, however, their size can be reduced at the cost of a slight loss in their concentration performance \cite{RWinston2005,P_DTsonev2019,Winston1975Principles,ARabl1976Optical,Welford1978Optics}. Therefore, non-imaging optics offers more flexibility to meet the design requirements. 

The challenges of implementing Gb/s optical wireless receivers for narrow laser beams entail the design of a high-performance optical front-end system for efficient collection and detection of the incident light at high speed. Mobile devices need a compact and low-cost receiver with two key specifications: a large aperture to capture sufficient optical power and a wide \ac{FOV} to keep connection with the transmitter by eliminating the strict alignment requirement. Designing such a receiver is associated with a twofold challenge. Foremost, there is a tradeoff between the bandwidth and the photosensitive area of a \ac{PD}. Thus, a high bandwidth comes with a small area, which restricts the collected optical power. The low collection efficiency of a small \ac{PD} can be compensated by using appropriate light-focusing optics to boost the received \ac{SNR}, in which case a governing tradeoff between the gain and \ac{FOV} of the optical component comes into play according to the etendue conservation law \cite{RWinston2005}. Hence, an improved \ac{SNR} can be obtained in exchange for a reduced \ac{FOV} and designing a high-speed optical wireless receiver realising a large collecting aperture and a wide \ac{FOV} at once remains a major challenge.

%---------------------------------------------------------------------------------------------------
\subsection{Related Works}
To date, several studies in the literature have addressed the existing challenges in the design of high-speed optical wireless receivers and have reported the achieved \ac{FOV} and data rates through various experimental works. In the following, we review an anthology of closely connected works and identify the remaining issues and research gaps.

In \cite{TKoonen2020Novel,TKoonen2021Beam}, Koonen \textit{et al.} proposed an imaging receiver design based on a Fresnel lens with a large aperture of diameter $50$~mm and a full-angle \ac{FOV} of $10^{\circ}$, whereby they experimentally demonstrated a data rate of $1$ Gb/s. In \cite{Umezawa2018High, Umezawa2021Array}, Umezawa \textit{et al.} designed and fabricated an $8\times8$ array of \ac{PIN} \acp{PD} followed by an aspheric lens of $15$ mm in diameter, offering a full-angle \ac{FOV} of $6^{\circ}$. The authors also reported the achievement of $25$ Gb/s transmission over a $10$~m link. These studies corroborated that an array of \acp{PD} equipped with an optical concentrator can yield an enhanced \ac{SNR} and a wider \ac{FOV} compared with a single \ac{PD} using similar optics. 

In \cite{SCollins2014F}, Collins \textit{et al.} proposed the use of a fluorescent concentrator made of a quantum dot material for free space optical communications. They showed that such a concentrator can realise an optical gain $50$ times higher than an etendue conserving concentrator with the same \ac{FOV}. In \cite{PManousiadis2016wide}, Manousiadis \textit{et al.} demonstrated the performance of a flat fluorescent optical antenna coupled with an \ac{APD}, attaining a \ac{FOV} of $60^\circ$ and an optical gain of $12$. In \cite{YDong2017nanopatterned}, Dong \textit{et al.} exploited a similar design to achieve the same \ac{FOV} with an optical gain of $3.2$. Although this type of optical receivers are used as a means of circumventing the etendue limitation, they have a low modulation bandwidth of no greater than $50$~MHz due to the long fluorescence lifetime. This bandwidth is not adequate for multi-Gb/s receiver design. In addition, fluorescent receivers are only applicable to specific wavelengths in the visible light spectrum. State-of-the-art high-speed receiver designs are discussed as follows.

%---------------------------------------------------------------------------------------------------
In \cite{OAlkhazragi2021wide, BOoi2022wide}, Alkhazrag and Ooi \textit{et al.} proposed and demonstrated an imaging receiver structure based on \acp{FFOT}, featuring a \ac{FOV} of $25^{\circ}$ and an optical gain of $120$. Unlike conventional imaging optics, \acp{FFOT} comprise hundreds of thousands of tapered optical fibres and as a result they have larger dimensions and a longer height as compared to the focal length of imaging lenses. On the upside, they can provide ease of alignment. However, the main disadvantage of \acp{FFOT} is that they are not widely available and their manufacturing cost is high. In \cite{Pham2022Auto}, Pham \textit{et al.} designed an imaging receiver with automatic alignment for beam-steered \ac{IR} light communication links. The proposed receiver is constructed of an imaging lens and a single high-speed photodetector while the whole setup is equipped with a motorised actuator. Although the half-angle \ac{FOV} is only $0.6^\circ$, the actuator automatically adjusts the receiver orientation by using a control algorithm to ensure that the incident light is within the \ac{FOV}. With this receiver setup, the authors reported data rates of up to $2$~Gb/s for a half-power beam diameter of $12$~cm. However, such a design involves bulky and costly components which is not suitable for compact and low-cost receiver design purposes.

In \cite{MDSoltani2022high}, Soltani \textit{et al.} proposed a multi-element imaging receiver design based on an \textit{array of arrays} structure for laser-based \ac{OWC} links. In this structure, each receiver element is composed of a \ac{PD} array and a commercial aspheric lens by Thorlabs \cite{DataSheetAsphericLens}, and multiple receiver elements are put together to form the outer array of \ac{PD} arrays to improve the overall optical gain performance. In order to preserve the receiver bandwidth, each \ac{PD} is assumed to be followed by a separate \ac{TIA} of its own and \ac{MRC} or threshold-based \ac{EGC} is then applied to process the output signals of individual \acp{PD}. The authors formulated an optimisation problem aiming to find the optimum configuration of the proposed structure in terms of the \ac{PD} side length, the spacing between the \acp{PD} in each array and the lens to array distance subject to specific design constraints including the \ac{FOV} and \ac{BER} requirements. For a minimum required full-angle \ac{FOV} of $15^{\circ}$ and a \ac{BER} of $10^{-3}$, the optimum solution suggests a design with $4$~cm$^2$ total area and a height of $1.8$~mm, achieving data rates of $23.8$~Gb/s using \ac{OOK} modulation and $21.1$~Gb/s based on \ac{DCO{-}OFDM}. However, the optimised design uses an aggregate of $64$ lenses and over $2300$ \acp{PD} and \acp{TIA}, resulting in a high hardware complexity. Moreover, with the considered lens having a focal length of $820$~{\textmu}m, the receiver performance is optimised within a $43$~{\textmu}m distance from the lens, and as a consequence it is highly sensitive to any displacements in the optical assembly. Any alignment error of a few tens of micrometer degrades the promised \ac{FOV} as well as the maximum delivered data rate. Such a design is extremely challenging to implement if not impractical, and it comes at a prohibitive fabrication cost.

%---------------------------------------------------------------------------------------------------
\subsection{Contributions}
\Acp{ADR} are well-known as a promising solution to provide a wide \ac{FOV} and a high optical gain simultaneously for \ac{OWC} systems \cite{JosephKahn1997, JCarruther2000angle, ZChen2014angle}. An \ac{ADR} design relieves strict alignment requirements and enables connectivity to multiple access points. Besides, the use of \acp{ADR} not only improves the mobility performance and allows for a seamless handover, it also addresses the beam obstruction issues. For laser-based \ac{OWC} systems that demand highly efficient light concentration, a non-imaging receiver design based on \acp{CPC} is more appealing as it provides maximum concentration gain determined by the etendue conservation law and more flexibility in the range of acceptance angles.

In this paper, we consider a multi-tier non-imaging \ac{ADR} architecture based on \acp{CPC} where each \ac{CPC} is coupled with a \ac{PD} array. After a thorough analytical modelling of the receiver optics and the subsequent signal combining schemes, we present the underlying design tradeoffs that control the receiver bandwidth and \ac{FOV}. Then, we proceed to rate maximisation problems to find the optimum receiver configuration. Problem formulations are presented under a unified optimisation framework taking into account the \ac{FOV} constraint as well as additional constraints on the receiver dimensions. To the best of our knowledge, this is the first study that addresses the interplay between the design tradeoffs systematically and deals with the optimisation of the non-imaging \ac{ADR} configuration for laser-based \ac{OWC} systems. Furthermore, we put forward a modified \ac{ADR} design based on truncated \acp{CPC} as an efficient approach to attain compact receiver dimensions. The proposed approach is applied to portable devices such as laptops and smart phones in which a confined space is available for \ac{6G} receiver integration. 

The main contributions of this paper are summarised as follows:
\begin{itemize}
\item An in-depth study of the fundamental tradeoffs in high-speed receivers and identifying the interrelated design tradeoffs in a non-imaging \ac{ADR}.
\item Proposing a two-stage signal processing scheme for the multi-element \ac{ADR} design to reduce the implementation complexity.
\item Building a unified and tractable analytical framework for evaluating the receiver performance considering various design parameters. The same framework applies to both short-range and long-range laser-based \ac{OWC} links regardless of the link distance.
\item Formulating and solving non-convex optimisation problems to acquire the optimum \ac{ADR} configuration under various constraints on its performance and dimensions.
\item Providing insightful and detailed discussions on the feasibility and scalability of the receiver design using extensive computer simulation results.
\item Proposing an effective modification to downsize the overall \ac{ADR} dimensions, leading to a low-complexity and compact receiver design for \ac{6G} laser-based optical wireless receivers.
\end{itemize}

The remainder of this paper is organised as follows. In Section~\ref{Sec:2}, the beam propagation model and eye safety considerations are introduced. In Section~\ref{Sec:3}, details of the receiver architecture and design tradeoffs are presented. In Section~\ref{Sec:4}, the achievable rate analysis and rate maximisation problems are discussed. In Section~\ref{Sec:5}, numerical examples with relevant discussions are provided. In Section~\ref{Sec:6}, the modified \ac{ADR} design and its performance evaluation are presented. Finally, in Section~\ref{Sec:7}, concluding remarks are drawn and future research directions are identified.

%%%%%%%%%%%%%%%%%%%%%%%%%%%%%%%%%%%%%%%%%%%%%%%%%%%%%%%%%%%%%%%%%%%%%%%%%%%%%%%%%%%%%%%%%%%%%%%%%%%%
%%%%%%%%%%%%%%%%%%%%%%%%%%%%%%%%%%%%%%%%%%%%%%%%%%%%%%%%%%%%%%%%%%%%%%%%%%%%%%%%%%%%%%%%%%%%%%%%%%%%
\setstretch{1.4} 
\section{System Model} \label{Sec:2}
In this study, we consider a single beam optical wireless link with a transmitter composed of a \acf{VCSEL} followed by a plano-convex lens. The receiver is located at a distance $D$ from the transmitter. 
%---------------------------------------------------------------------------------------------------
\subsection{Gaussian Beam Propagation} 
In an optical \ac{IM{/}DD} system, information bits are modulated onto the light intensity. The \ac{VCSEL} is assumed to have a Gaussian intensity profile. A Gaussian beam is primarily characterised by two parameters: beam waist $w_0$ and wavelength $\lambda$. Assuming the beam is travelling along the $z$ axis, the intensity distribution is given by \cite{Saleh2019}: 
\begin{equation}
	I(r,z) = \frac{2P_\mathrm{t}}{\pi w^2(z)}{\exp{\left( -\dfrac{2r^2}{w^2(z)} \right )}} \raisepunct{,}
	\label{Eq:2_1}
\end{equation}
where $P_\mathrm{t}$ is the transmit optical power, and $r$ and $z$ are the radial and axial positions, respectively. The beam radius $w(z)$ is expressed as \cite{Saleh2019}:
\begin{equation}
	w(z) = w_0 \sqrt{1+ {\left( \dfrac{z}{z_\mathrm{R}}\right)}^2}.
	\label{Eq:2_2}
\end{equation}
In addition, the Rayleigh range $z_\mathrm{R}$ is given by:
\begin{equation}
	z_\mathrm{R} = \frac{\pi {w^2_0} n}{\lambda} \raisepunct{,}
	\label{Eq:2_3}
\end{equation}
where $n$ represents the refractive index of the medium \cite{Saleh2019}. 

In this work, we only need the beam parameters after the lens transformation for the analysis and optimisation of the receiver performance. When refracted by a lens, the incident Gaussian beam is transformed into another Gaussian beam characterised by a different set of parameters: $w' _0$, $z'_\mathrm{R}$ and $w'(z)$. Details of how these parameters are obtained based on the parameters of a plano-convex lens can be found in \cite{Saleh2019}. The received optical power of the transformed beam within a circle of radius $\rho_0$ on the transverse plane at the receiver location (i.e., at $z=D$) is:
\begin{equation}
    P_{\mathrm{r}} = P_{\mathrm{t}} \left({1-\exp{\left( -\dfrac{2\rho_0 ^2}{({w'}(D))^2} \right )}} \right) \raisepunct{.}
    \label{Eq:Pr}
\end{equation}

%---------------------------------------------------------------------------------------------------
\subsection{Eye Safety Considerations} 
The maximum optical power emitted from a \ac{VCSEL} is subject to eye safety regulations. The laser safety regulations have been defined by the \ac{IEC} 60825 standard and the \ac{ANSI} \cite{IEC_Std608251,ANSI-Z136.4}. In these standards, the so-called \ac{MPE} is used as a metric for specifying the irradiance limit of a laser source, which must not be exceeded for the corresponding laser class. The \ac{MPE} value depends on the laser wavelength, the exposure time and the size of the apparent source. The latter is quantified using the subtense angle $\alpha$ which is defined as the plane angle subtended by an apparent source as viewed from a point in space. The subtense angle a critical parameter for eye safety assessment. It is a measure of the angular extent of the image formed on the retina and indicates how sharply light is focused on the retina. According to \cite{IEC_Std608251}, if $\alpha < 1.5$~mrad at the measurement distance, the source is classified as a \textit{point source}, otherwise as an \textit{extended source}.

A laser source is considered to be safe, if at any position in space, the fraction of the received power passing through the pupil of the human eye is less than the corresponding \ac{MPE} value multiplied by the pupil area. Therefore, the laser safety analysis often includes determining the \ac{MHP}, which is where the optical power collected by the pupil aperture is at its maximum. If the eye safety condition is met at this point, it ascertains that anywhere else in space is also eye-safe \cite{Henderson2003}. In this work, we have taken the eye safety limits into account by closely following the approach presented in \cite{Henderson2003}.

%%%%%%%%%%%%%%%%%%%%%%%%%%%%%%%%%%%%%%%%%%%%%%%%%%%%%%%%%%%%%%%%%%%%%%%%%%%%%%%%%%%%%%%%%%%%%%%%%%%%
%%%%%%%%%%%%%%%%%%%%%%%%%%%%%%%%%%%%%%%%%%%%%%%%%%%%%%%%%%%%%%%%%%%%%%%%%%%%%%%%%%%%%%%%%%%%%%%%%%%%
\begin{figure*}[t!]
    \centering
    \begin{minipage}[b]{\linewidth}
    \centering
    \subfloat[\label{Fig:Rx_a} 3D view ($1$-tier ADR)]{\includegraphics[height=0.14\textheight, keepaspectratio=true]{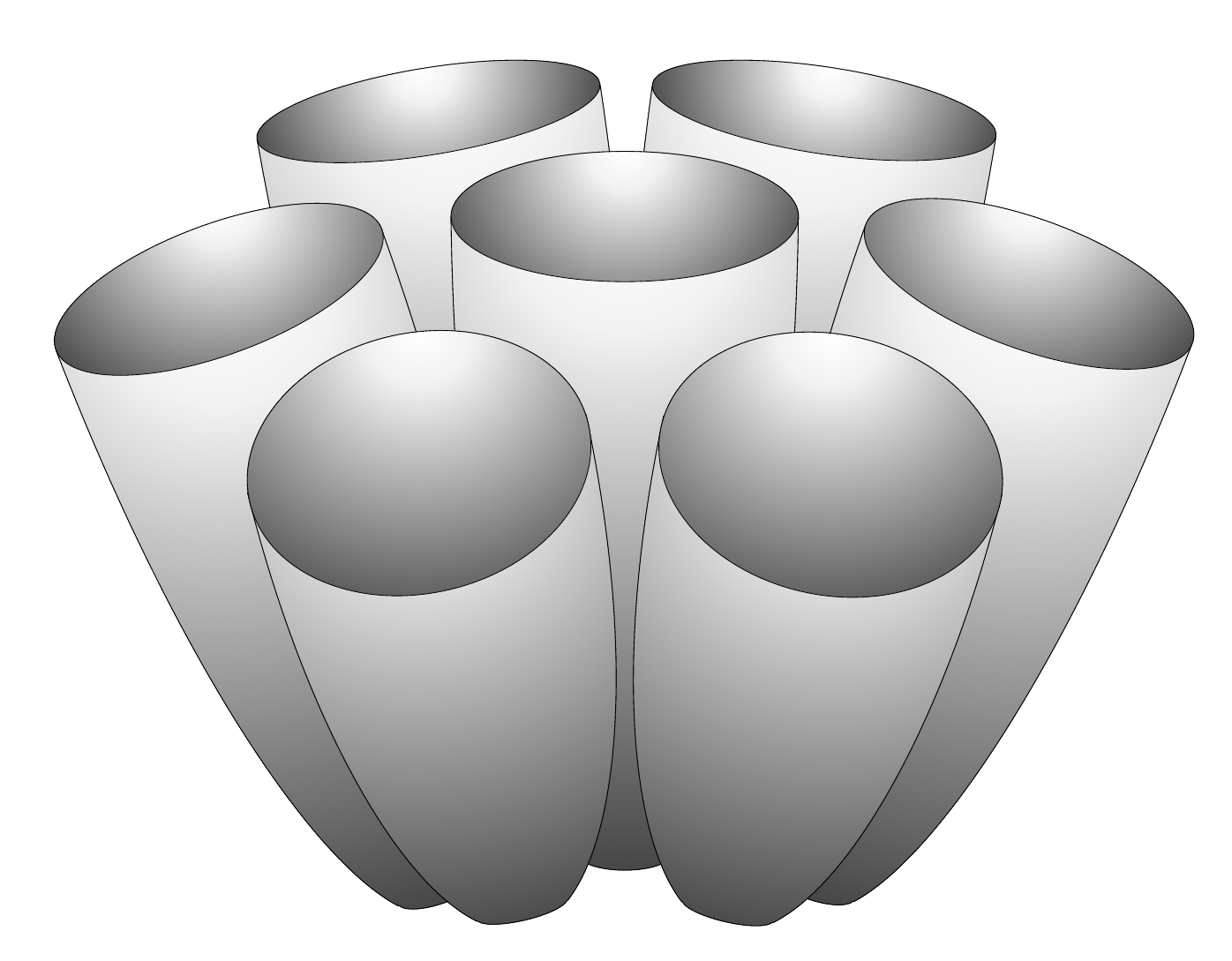}}\quad
    \subfloat[\label{Fig:Rx_b} top view ($1$-tier ADR)]{\includegraphics[height=0.16\textheight, keepaspectratio=true]{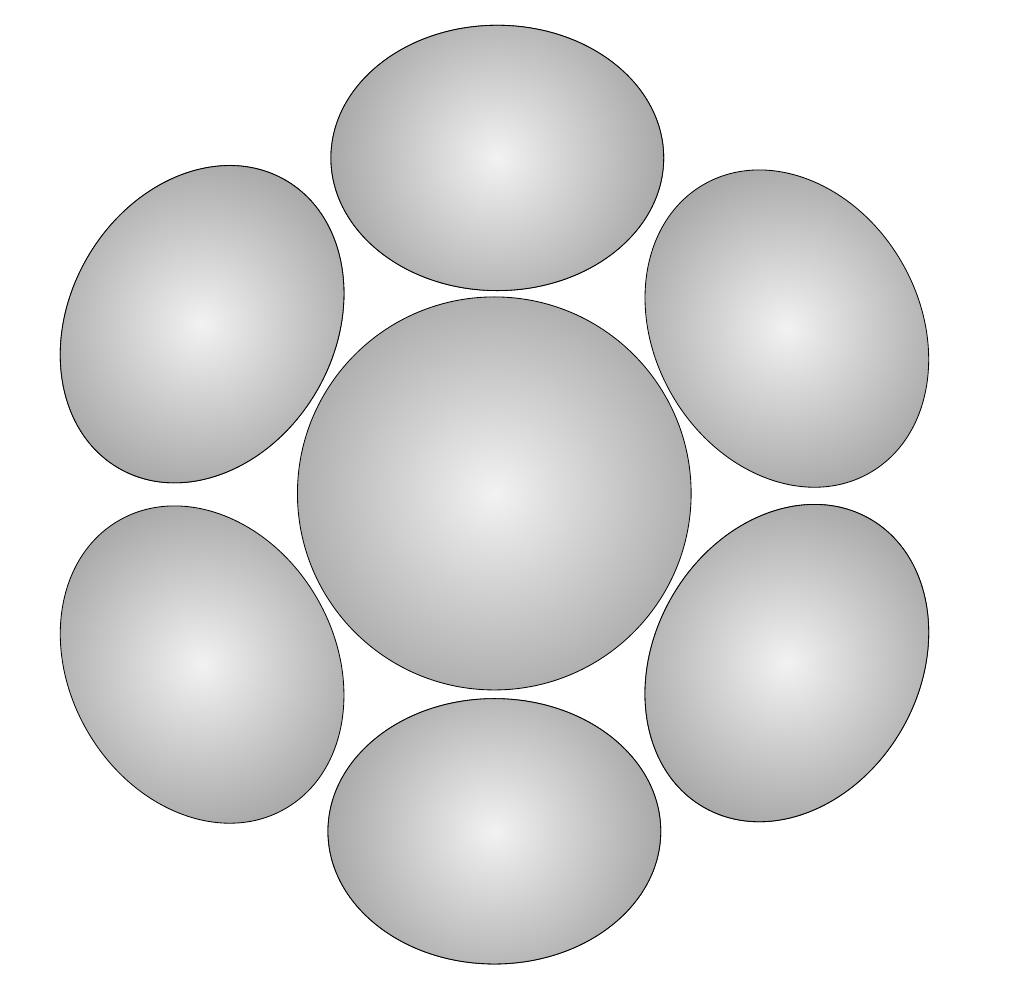}}\quad
    \subfloat[\label{Fig:Rx_c} top view ($2$-tier ADR)]{\includegraphics[height=0.2\textheight, keepaspectratio=true]{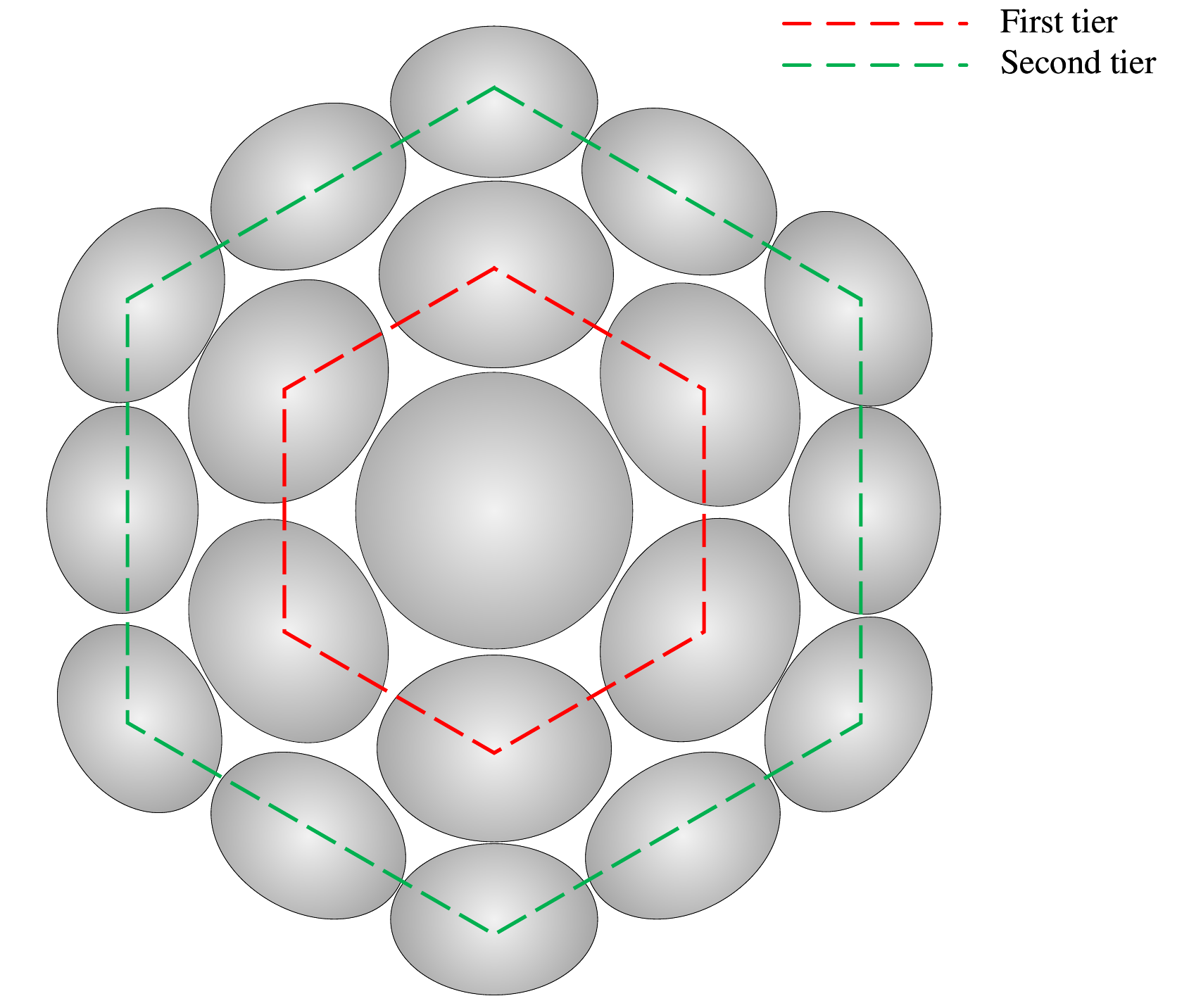}}
    \end{minipage}
    \caption{Angle diversity receiver structure.}
    \label{Fig:Rx}
    \vspace{-20pt}
\end{figure*}

\section{High-Speed Receivers for 6G OWC} \label{Sec:3}
%---------------------------------------------------------------------------------------------------
\subsection{Angle Diversity Receiver}
In this study, we consider a non-imaging \ac{ADR} design by using \acp{CPC}. \acp{CPC} are non-imaging concentrators that are commonly used in applications requiring efficient light collection. Among various imaging or non-imaging optical components, \acp{CPC} can approach the maximum theoretical concentration gain determined by the law of conservation of etendue \cite{RWinston2005}. In addition, they enable light collection over a relatively wide range of incident angles, thereby providing more flexible designs in terms of the range of acceptance angles as well as alignment tolerances in comparison with imaging optical components such as lenses. Hence, the principal advantage of angle diversity reception based on \acp{CPC} is that it allows the receiver to realise a high optical gain and a wide \ac{FOV} simultaneously.

Figs.~\subref*{Fig:Rx_a} and \subref*{Fig:Rx_b}, respectively, illustrate the \ac{3D} structure and the top view of an \ac{ADR} consisting of seven identical receiver elements oriented in the desired spatial directions. We refer to this design as $1$-tier \ac{ADR}. The overall \ac{FOV} can be improved by incorporating additional tiers in the \ac{ADR} structure. To this end, we add tiers according to a hexagonal layout. Let $N_\mathrm{tier}$ denote the number of tiers. The total number of receiver elements is given by:
\begin{equation}
    N_\mathrm{ADR} = 1 + \sum_{i=1}^{N_\mathrm{tier}} 6i.
\end{equation}
For instance, Fig.~\subref*{Fig:Rx_c} shows a $2$-tier \ac{ADR} which consists of $N_\mathrm{ADR}=19$ elements.

Fig.~\subref*{Fig:Rx_Elem} depicts the detailed schematic diagram for each \ac{ADR} element, comprising a \ac{CPC} paired with a \ac{2D} array of square-shaped \ac{PIN} \acp{PD}. Each \ac{PD} has a photosensitive area of $A_\mathrm{PD}=D_\mathrm{PD}^2$, with $D_\mathrm{PD}$ denoting the side length of the photosensitive area. The \ac{PD} array is a square array of size $\sqrt{N_\mathrm{PD}}\times\sqrt{N_\mathrm{PD}}$, where $N_\mathrm{PD}$ is the total number of \acp{PD} per array. For a given \ac{FF} such that $0<\mathrm{FF}\leq1$, the \ac{PD} array has a total area of: 
\begin{equation*}
A = \frac{N_\mathrm{PD}A_\mathrm{PD}}{\mathrm{FF}} \raisepunct{.}
\end{equation*}
To avoid compromising the receiver bandwidth, the output signals of the \ac{PD} array for an \ac{ADR} element are combined together using \ac{EGC} by assuming that every \ac{PD} is independently equipped with a \ac{TIA}, as shown in Fig.~\subref*{Fig:Rx_Elem}. Subsequently, the \ac{MRC} method is employed to combine the output signals of the $N_\mathrm{ADR}$ elements. This is depicted in Fig.~\subref*{Fig:Rx_MRC} for a $1$-tier \ac{ADR} with $N_\mathrm{ADR}=7$. The rationale for the proposed two-stage array processing approach is twofold. First, the use of \ac{EGC} for the \ac{PD} array aims to maximise the coupling efficiency between the light collection at the entrance aperture and the light detection (i.e., optical-to-electrical conversion) by the \ac{PD} array at the exit aperture of the \ac{CPC} under the design considerations. Second, the angular diversity paths are efficiently utilised by applying \ac{MRC} to the \ac{ADR} elements to maximise the received \ac{SNR}. This way the receiver implementation complexity due to the \ac{SNR} estimation per branch and the computation of \ac{MRC} coefficients scales with the number of receiver elements, $N_\mathrm{ADR}$. Since \ac{CPC} is an essential constituent of the non-imaging \ac{ADR}, we briefly describe its working principle and geometry in the following.

\begin{figure*}[t!]
    \centering
    \begin{minipage}[b]{\linewidth}
    \centering
    \subfloat[\label{Fig:Rx_Elem} Stage 1] {\includegraphics[height=0.22\textheight, keepaspectratio=true]{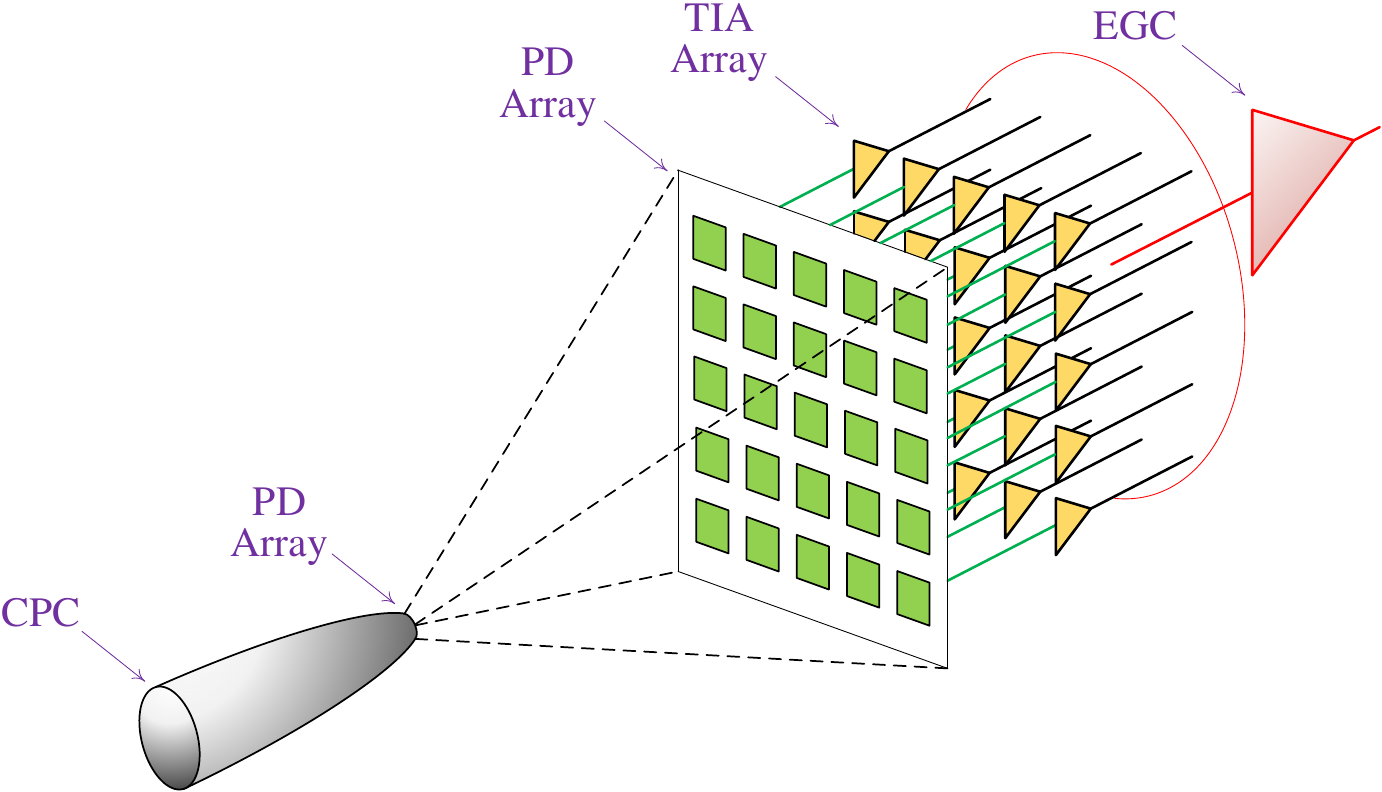}}\hspace{25pt}
    \subfloat[\label{Fig:Rx_MRC} Stage 2] {\includegraphics[height=0.22\textheight, keepaspectratio=true]{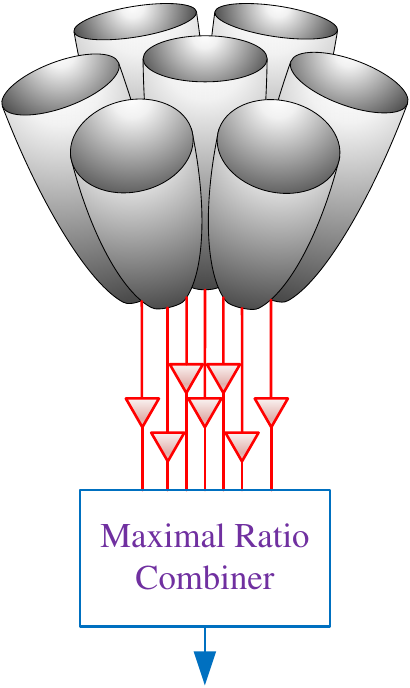}}
    \end{minipage}
    \caption{Two-stage ADR array processing: (a) Stage 1: EGC of the PD array output signals for an ADR element, (b) Stage 2: $7$-branch MRC for a $1$-tier ADR}
    \label{Fig:Rx_Array}
    \vspace{-20pt}
\end{figure*}

Fig.~\ref{Fig:CPC_CrossSection} shows the cross section of a \ac{CPC}. The outer surface of a \ac{CPC} is a parabolic arc rotated about an axis known as the \textit{CPC axis} or the \textit{rotational axis}. The outer surface is circularly symmetric about the \ac{CPC} axis with the desired diameter of the entrance aperture, $D_1$, at its light collecting end, and the diameter of the exit aperture, $D_2$, at its light concentration end. With this geometry, the incident light impinging on the entrance aperture finds its way to the exit aperture via multiple internal reflections, provided it is within the acceptance angle of the \ac{CPC}, $\theta_\mathrm{CPC}$.
The optical gain is measured by the geometrical concentration ratio, which is defined as the ratio between the areas of the entrance and exit apertures \cite{RWinston2005}, as follows:
\begin{equation}
    G_\mathrm{CPC} = \left(\frac{D_1}{D_2} \right)^2 = \frac{n^2 _\mathrm{CPC}}{\sin^2{\theta_\mathrm{CPC}}} \raisepunct{,}
    \label{Eq:CPC_gain}
\end{equation}
where $n_\mathrm{CPC}$ denotes the refractive index of the material used to fabricate the \ac{CPC} or fill it with. Here, $n_\mathrm{CPC}=1$ for a reflective hollow \ac{CPC}, and $n_\mathrm{CPC}>1$ for a dielectric \ac{CPC}. The second equality in \eqref{Eq:CPC_gain} represents the maximum theoretical gain for an ideal non-imaging concentrator\footnote{The maximum concentration gain is deduced from the edge-ray principle of non-imaging optics on the basis of the Snell's law and Fermat's principle \cite{RWinston2005}.}. The overall dimensions of the \ac{CPC}-based \ac{ADR} depends primarily on the dimensions of a \ac{CPC}. This includes the entrance aperture area which is equal to $\pi D_1^2/4$, and the length also referred to as the height of a \ac{CPC} which is given by \cite{RWinston2005}:
\begin{equation}
    L_\mathrm{CPC} = \frac{D_1+D_2}{2\tan{\theta_\mathrm{CPC}}} \raisepunct{.}
    \label{Eq:CPC_length}
\end{equation}
\begin{figure}[t!]
    \centering
    {\includegraphics[width=0.7\linewidth, keepaspectratio=true]{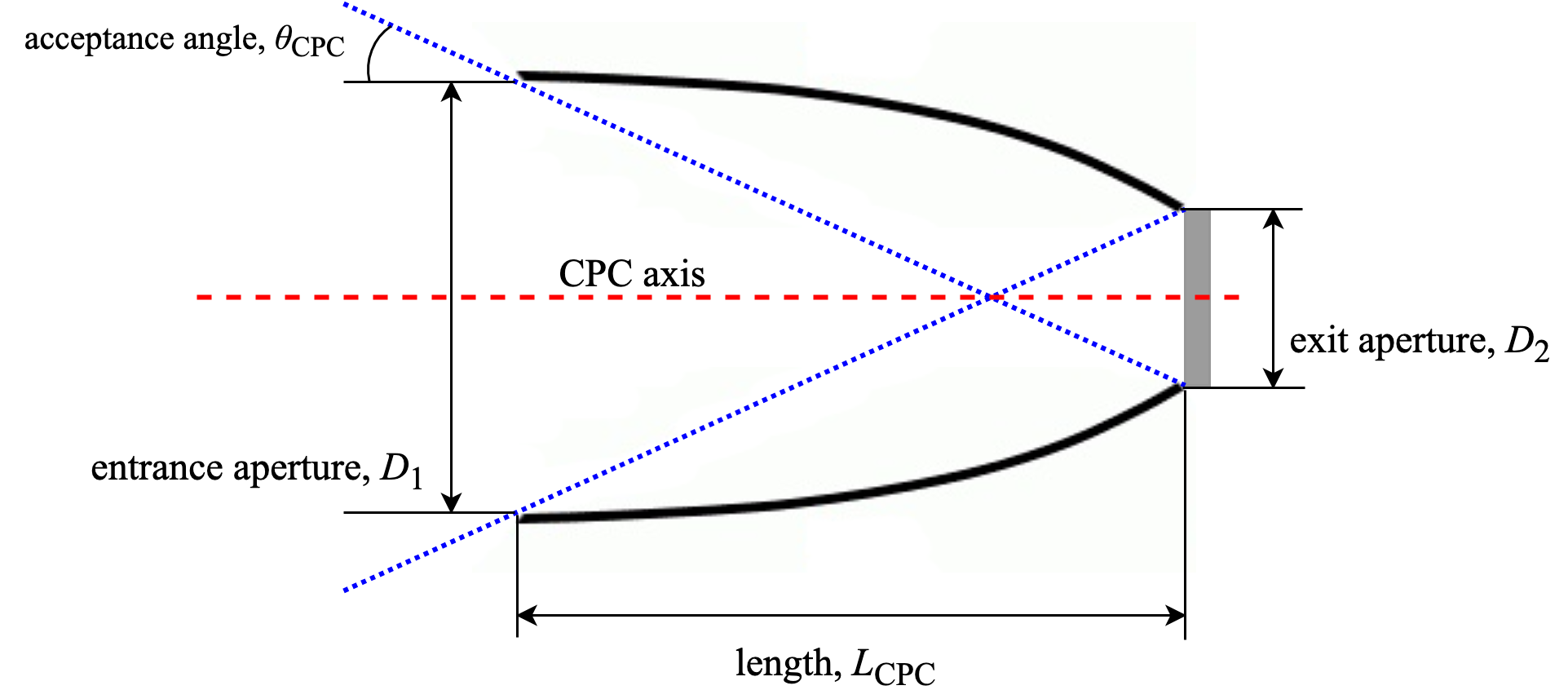}}
    \caption{Cross-section view of a CPC.}
    \label{Fig:CPC_CrossSection}
    \vspace{-20pt}
\end{figure}

%---------------------------------------------------------------------------------------------------
\subsection{Design Tradeoffs} 
Designing receivers with the aim of achieving Gb/s data rates presents a twofold challenge: 1) area-bandwidth tradeoff and 2) gain-\ac{FOV} tradeoff. These tradeoffs are briefly explained in the following.
%---------------------------------------------------------------------------------------------------
\subsubsection{Area-Bandwidth Tradeoff} \label{Sec:3_AreaBW}
The bandwidth of a \ac{PD} is expressed as \cite{StephenBAlexander1997}:
\begin{equation}
    B=\dfrac{1}{\sqrt{\left(2\pi R_{\mathrm{L}} C_\mathrm{p} \right)^2+\left( \dfrac{\ell}{0.44 v_{\mathrm{s}}}\right)^2}} \raisepunct{,}
    \label{Eq:BW}
\end{equation}
where $R_{\mathrm{L}}$ is junction series resistance plus the load resistance of the \ac{TIA}, $\ell$ denotes the thickness of the depletion region, and $v_{\mathrm{s}}$ is the carrier saturation velocity. Also, $C_\mathrm{p}$ is the junction capacitance given by $C_\mathrm{p} = \dfrac{\epsilon_0 \epsilon_{\mathrm{r}} A_\mathrm{PD}}{\ell}$ where $\epsilon_0$ is the permittivity in vacuum, $\epsilon_{\mathrm{r}}$ is the relative permittivity of the semiconductor, and $A_\mathrm{PD}$ denotes the area of the depletion region which constitutes the PD effective area. According to \eqref{Eq:BW}, as $\ell$ decreases, the junction capacitance $C_\mathrm{p}$ and therefore the left term in the denominator increases, however, the right term corresponding to the transit time of the PD decreases. Hence, there exists an optimum thickness for the depletion region which yields the maximum PD bandwidth. For this optimum value of $\ell$, denoted by $\ell_{\mathrm{opt}}$, the PD bandwidth is described as \cite{StephenBAlexander1997}:
\begin{equation}
    B=\dfrac{1}{\sqrt{\dfrac{4\pi\epsilon_0\epsilon_\mathrm{r} R_\mathrm{L}}{0.44 v_\mathrm{s}} A_\mathrm{PD}}} \raisepunct{.}
    \label{Eq:BW2}
\end{equation}
The area-bandwidth tradeoff for a \ac{PD} is readily represented by \eqref{Eq:BW2}. Note that this is in fact an upper bound of the PD bandwidth and any value of $\ell \neq \ell_{\mathrm{opt}}$ results in a lower bandwidth. For a square shaped \ac{PD} with the side length $D_{\mathrm{PD}} = \sqrt{A_{\mathrm{PD}}}$, \eqref{Eq:BW2} turns into:
\begin{equation}
    B=\dfrac{1}{K_\mathrm{PD} D_{\mathrm{PD}}} \raisepunct{,}
    \label{Eq:BW_Dpd}
\end{equation}
where $K_\mathrm{PD} = \sqrt{\dfrac{4\pi\epsilon_0\epsilon_\mathrm{r} R_\mathrm{L}}{0.44 v_\mathrm{s}}}$. Fig.~\ref{Fig:B_Dpd} illustrates the bandwidth $B$ versus $D_{\mathrm{PD}}$ for such a \ac{PD}. This graph is plotted based on \eqref{Eq:BW_Dpd} using the parameter values adopted from \cite{StephenBAlexander1997}. As shown in Fig.~\ref{Fig:B_Dpd}, $B$ exhibits a rapidly decreasing behaviour with respect to $D_\mathrm{PD}$. Therefore, in order to maximise the \ac{PD} bandwidth, the junction capacitance has to be minimised. Specifically, for bandwidths higher than $10$~GHz, a side length of less than $50$~{\textmu}m is required. This necessitates a very small \ac{PD} area and imposes a major challenge in the design of laser-based optical wireless receivers, since free-space optical signals need to be collected, aligned with and coupled into the confined photosensitive area of a miniaturised \ac{PD} with minimal loss.
\begin{figure*}[t!]
    \centering
    \begin{minipage}[b]{0.49\linewidth}
    {\includegraphics[width=\textwidth, keepaspectratio=true]{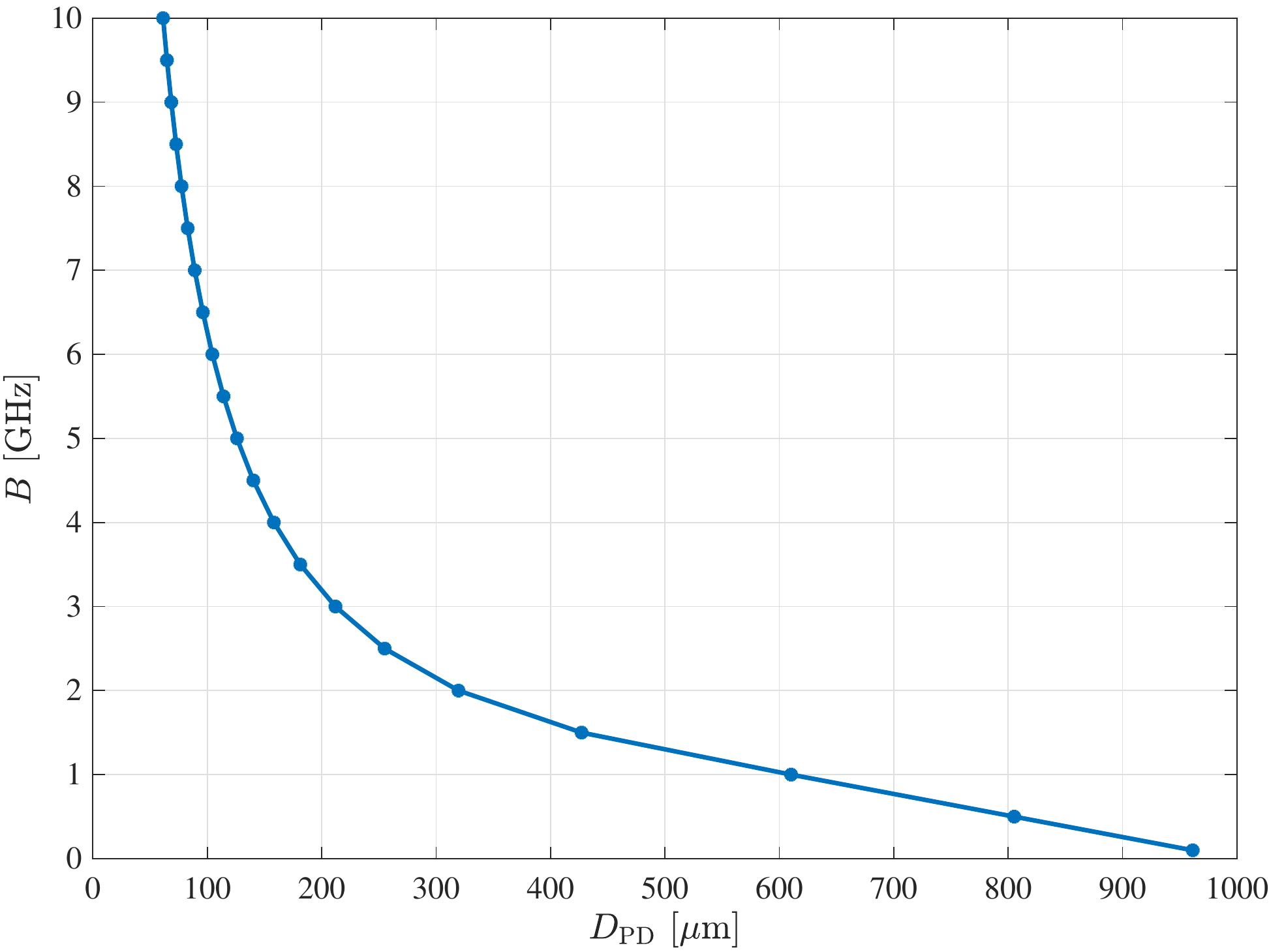}}
    \vspace{-20pt}
    \caption{PD bandwidth versus its side length.}
    \label{Fig:B_Dpd}
    \end{minipage}\hfill
    \begin{minipage}[b]{0.49\linewidth}
    \includegraphics[width=\textwidth, keepaspectratio=true]{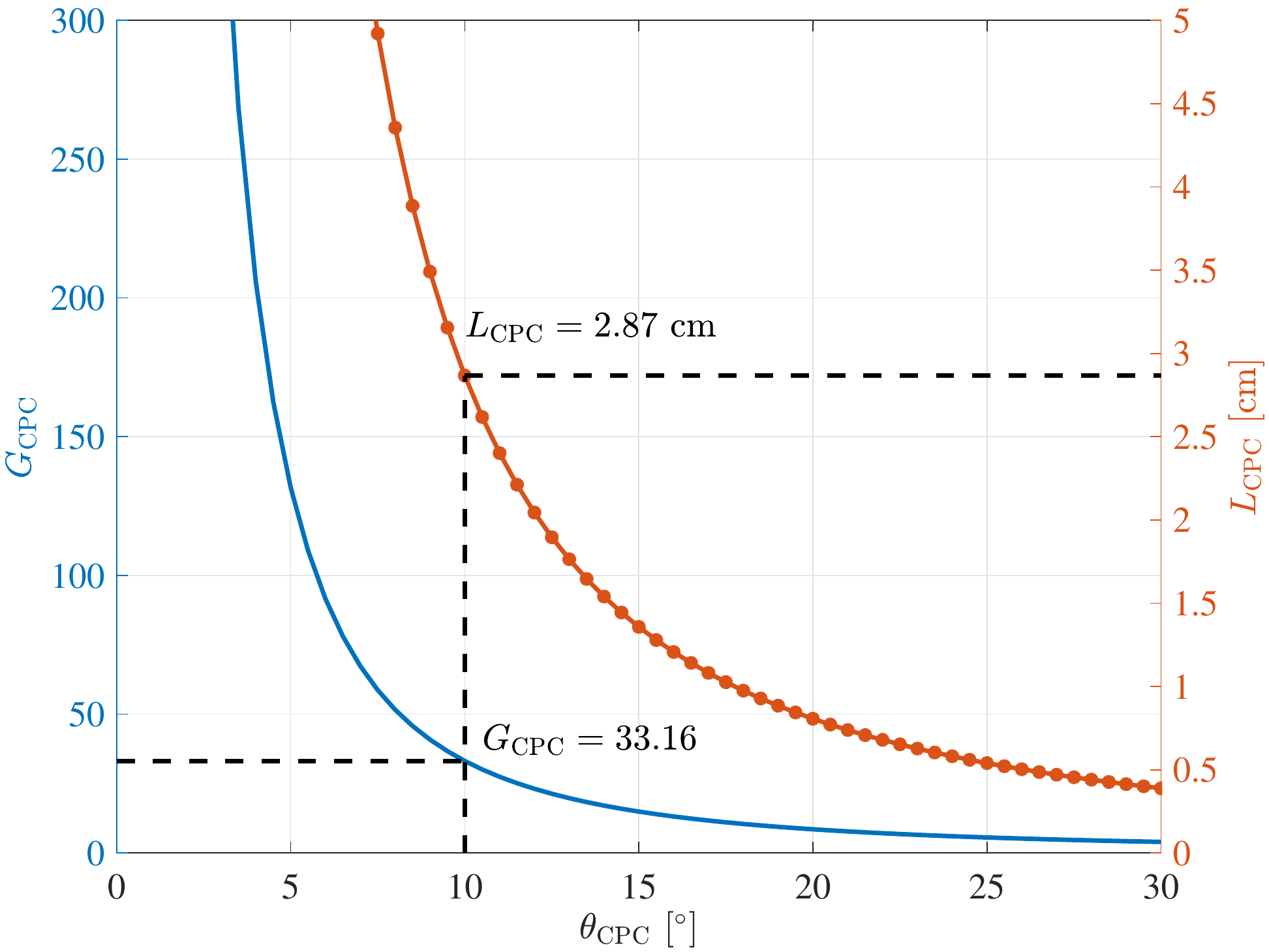}
    \vspace{-20pt}
    \caption{Gain-\ac{FOV} tradeoff of a CPC with $n_\mathrm{CPC}=1$ and $D_2 = 1.5$~mm.}
    \label{Fig:G_FOV_CPC}
    \end{minipage}
    \vspace{-20pt}
\end{figure*}
%---------------------------------------------------------------------------------------------------
\subsubsection{Gain-FOV Tradeoff} 
The small area of a high-bandwidth \ac{PD} can be compensated by using appropriate imaging or non-imaging optics. This improves the optical power collection efficiency of the optical receiver and thus the received \ac{SNR}. 
However, the use of light concentrators for increasing the collection area limits the receiver \ac{FOV} due to the law of conservation of etendue \cite{RWinston2005}. Fig.~\ref{Fig:G_FOV_CPC} illustrates the gain-FOV tradeoff for a CPC, where the gain and the length of a \ac{CPC} with $D_2=1.5$~mm are plotted against the acceptance angle, $\theta_\mathrm{CPC}$. This \ac{CPC} provides an optical gain of about $33$ for $\theta_\mathrm{CPC}=10^\circ$. Such a \ac{CPC} is $2.87$~cm long and it has an effective collection area of about $0.58$~cm$^2$.

%---------------------------------------------------------------------------------------------------
\subsection{Design Parameters}
The primary design parameters under consideration are the \ac{PD} bandwidth $B$, the half-angle \ac{FOV} of the \ac{ADR}, the total number of \acp{PD} per array $N_\mathrm{PD}$, and the number of ADR tiers $N_\mathrm{tier}$. The secondary parameters and how they are related to the primary design parameters are enlisted in the following.

Based on the \ac{ADR} structure, as shown in Fig.~\ref{Fig:Rx}, the half-angle \ac{FOV} is given by:
\begin{equation}
    \mathrm{FOV} = \theta_\mathrm{CPC} + N_\mathrm{tier} \theta_\mathrm{tilt},
    \label{Eq:FOV_theta_tilt}
\end{equation}
where $\theta_\mathrm{tilt}$ is the tilt angle between the adjacent ADR tiers. Assuming that the acceptance cones of the adjacent ADR tiers are touching but not overlapping, $\theta_{\mathrm{tilt}} = 2\theta_\mathrm{CPC}$. This assumption ensures a full angular acceptance for the ADR design.
Therefore, according to \eqref{Eq:FOV_theta_tilt}, the half-angle FOV of the ADR is related to the acceptance angle of \acp{CPC} by $\mathrm{FOV} = \theta_\mathrm{CPC} (2N_\mathrm{tier}+1)$. Consequently, the secondary parameter $\theta_\mathrm{CPC}$ is expressed in terms of the primary design parameter $\mathrm{FOV}$ as:
\begin{equation}
    \theta_\mathrm{CPC}= \dfrac{\mathrm{FOV}}{2N_\mathrm{tier}+1}\raisepunct{.}
    \label{Eq:FOV_theta}
\end{equation}
Noting that the \ac{ADR} \ac{FOV} satisfies $\mathrm{FOV}\leq\dfrac{\pi}{2}$, a useful corollary is deduced from \eqref{Eq:FOV_theta}.
%-----------------------------------------
% Corollary 1
%-----------------------------------------
\begin{corollary}\label{Corollary:1}
For the \ac{ADR} design with $N_\mathrm{tier}$ tiers, $\theta_\mathrm{CPC}$ is upper bounded as follows:
\begin{equation*}
    \theta_\mathrm{CPC} \leq \frac{\pi}{6} \raisepunct{.}
\end{equation*}
\end{corollary}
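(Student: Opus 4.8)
The plan is to derive the bound directly from equation~\eqref{Eq:FOV_theta} together with the stated physical constraint $\mathrm{FOV} \leq \pi/2$. From \eqref{Eq:FOV_theta} we have $\theta_\mathrm{CPC} = \mathrm{FOV}/(2N_\mathrm{tier}+1)$, so the largest possible value of $\theta_\mathrm{CPC}$ for a fixed number of tiers is obtained by inserting the maximal admissible FOV, namely $\mathrm{FOV} = \pi/2$, which gives $\theta_\mathrm{CPC} \leq \pi/(2(2N_\mathrm{tier}+1))$.

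Next I would observe that the denominator $2N_\mathrm{tier}+1$ is minimised over all valid tier counts. Since any meaningful ADR has at least one tier (the structure in Fig.~\ref{Fig:Rx} with $N_\mathrm{tier}\geq 1$), the smallest value of $2N_\mathrm{tier}+1$ is $3$, attained at $N_\mathrm{tier}=1$. Substituting this worst case yields $\theta_\mathrm{CPC} \leq \pi/(2\cdot 3) = \pi/6$, which is the claimed bound. For any $N_\mathrm{tier} > 1$ the bound is only tighter, so the single inequality $\theta_\mathrm{CPC}\leq \pi/6$ holds uniformly.

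The argument is essentially a one-line monotonicity observation, so there is no real analytical obstacle. The only point requiring a small amount of care is the convention on $N_\mathrm{tier}$: the statement says ``the \ac{ADR} design with $N_\mathrm{tier}$ tiers,'' and the clean constant $\pi/6$ only emerges if $N_\mathrm{tier}\geq 1$. I would make this explicit — either by noting that a zero-tier ADR is just a single element (a degenerate case not covered by the multi-tier design under discussion), or by simply remarking that the bound is stated for the worst case $N_\mathrm{tier}=1$ and improves for larger tier counts. With that convention fixed, the chain $\theta_\mathrm{CPC} = \mathrm{FOV}/(2N_\mathrm{tier}+1) \leq (\pi/2)/(2N_\mathrm{tier}+1) \leq (\pi/2)/3 = \pi/6$ completes the proof.
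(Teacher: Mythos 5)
Your proof is correct and follows exactly the route the paper intends: the corollary is stated as an immediate consequence of $\mathrm{FOV}\leq\pi/2$ and $\theta_\mathrm{CPC}=\mathrm{FOV}/(2N_\mathrm{tier}+1)$ with $N_\mathrm{tier}\geq 1$, which is precisely your chain of inequalities. Your explicit remark about the $N_\mathrm{tier}\geq 1$ convention is a reasonable point of care that the paper leaves implicit (its smallest design is the $1$-tier, $7$-element ADR), but it does not change the argument.
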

\noindent The diameter of the exit aperture of the \ac{CPC}, $D_2$, is related to the \ac{PD} array parameters via:
\begin{equation}
D_2 = D_\mathrm{PD} \sqrt{\dfrac{N_\mathrm{PD}}{\mathrm{FF}}} \raisepunct{,}
\end{equation}
where $\mathrm{FF}$ denotes the PD array \ac{FF}. From \eqref{Eq:BW_Dpd}, $D_\mathrm{PD} = \dfrac{1}{K_\mathrm{PD} B}$. Thus, $D_2$ can be expressed as:
\begin{equation}
    D_2 = \dfrac{1}{K_\mathrm{PD} B} \sqrt{\dfrac{N_\mathrm{PD}}{\mathrm{FF}}} \raisepunct{.} 
    \label{Eq:CPC_D2}
\end{equation}
Based on \eqref{Eq:CPC_gain}, the diameter of the entrance aperture of the \ac{CPC} is then given by:
\begin{equation}
    D_1 = D_2 \left(\dfrac{n_\mathrm{CPC}}{\sin\theta_\mathrm{CPC}}\right) = \dfrac{1}{K_\mathrm{PD} B} \sqrt{\dfrac{N_\mathrm{PD}}{\mathrm{FF}}} \left(\dfrac{n_\mathrm{CPC}}{\sin\theta_\mathrm{CPC}}\right) \raisepunct{.}  
    \label{Eq:CPC_D1}
\end{equation}
Substituting \eqref{Eq:CPC_D2} and \eqref{Eq:CPC_D1} into \eqref{Eq:CPC_length} yields:
\begin{equation}
    L_\mathrm{CPC} = 
    \dfrac{1}{2 K_\mathrm{PD} B} \sqrt{\dfrac{N_\mathrm{PD}}{\mathrm{FF}}} \left(\dfrac{\dfrac{n_\mathrm{CPC}}{\sin \theta_\mathrm{CPC}}+1}{\tan \theta_\mathrm{CPC}}\right) \raisepunct{.}
    \label{Eq:L_CPC2}
\end{equation}
The overall size of an \ac{ADR} with $N_\mathrm{tier}$ tiers can be described by the length $L_\mathrm{ADR} \approx L_\mathrm{CPC}$, and the top collection area $A_\mathrm{ADR}$. By using \eqref{Eq:CPC_D1} and \eqref{Eq:L_CPC2}, these are derived as:
\begin{equation}
    L_\mathrm{ADR} \approx \dfrac{K_1}{B} \left(\dfrac{n_\mathrm{CPC}+\sin \theta_\mathrm{CPC}}{\sin \theta_\mathrm{CPC}\tan \theta_\mathrm{CPC}}\right),
    \label{Eq:L_ADR}
\end{equation}
\begin{equation}
    A_\mathrm{ADR} 
    = \frac{\pi D^2 _1}{4} \left(1 + \sum_{i=1}^{N_\mathrm{tier}} 6i\cos(2i\theta_\mathrm{CPC}) \right) 
    = \frac{K_2}{B^2 \sin^2\theta_\mathrm{CPC}} \left(1 + \sum_{i=1}^{N_\mathrm{tier}} 6i\cos(2i\theta_\mathrm{CPC}) \right),
    \label{Eq:A_ADR}
\end{equation}
where $K_1=\dfrac{1}{2 K_\mathrm{PD}}\sqrt{\dfrac{N_\mathrm{PD}}{\mathrm{FF}}}$ and $K_2=\dfrac{\pi N_{\mathrm{PD}} n_\mathrm{CPC}^2}{4\mathrm{FF} K_\mathrm{PD}^2}$. Note that the tilt angle of the $i$th ADR tier relative to the CPC axis of the central ADR element is $\theta_i=i\theta_\mathrm{tilt}=2i\theta_\mathrm{CPC}$.

%%%%%%%%%%%%%%%%%%%%%%%%%%%%%%%%%%%%%%%%%%%%%%%%%%%%%%%%%%%%%%%%%%%%%%%%%%%%%%%%%%%%%%%%%%%%%%%%%%%%
%%%%%%%%%%%%%%%%%%%%%%%%%%%%%%%%%%%%%%%%%%%%%%%%%%%%%%%%%%%%%%%%%%%%%%%%%%%%%%%%%%%%%%%%%%%%%%%%%%%%
\section{Performance Analysis and Optimisation} \label{Sec:4}
For performance analysis, given the triple $(B,\mathrm{FOV},N_\mathrm{PD})$, we first need to determine the \ac{ADR} dimensions by taking the following steps. We thereupon proceed with performance optimisation.
\begin{enumerate}
\item For the given $\mathrm{FOV}$, the \ac{CPC} acceptance angle is obtained from \eqref{Eq:FOV_theta}. 
\item For the given $B$, the corresponding $D_\mathrm{PD}$ is obtained according to \eqref{Eq:BW_Dpd}.
\item The exit and entrance apertures of the \ac{CPC} are calculated based on \eqref{Eq:CPC_D2} and \eqref{Eq:CPC_D1}.
\item The height $L_\mathrm{ADR}$ and the overall area $A_\mathrm{ADR}$ are evaluated based on \eqref{Eq:L_ADR} and \eqref{Eq:A_ADR}. 
\end{enumerate}

In a fully aligned link, the middle element of the receiver mainly collects the incident optical power. To calculate the received optical power, the effective area of the receiver is approximated by the entrance aperture area of a CPC, within a circle of radius $\dfrac{D_1}{2}$. The corresponding $P_\mathrm{r}$ is thereby calculated based on \eqref{Eq:Pr} for $\rho_0=\dfrac{D_1}{2}$ in combination with \eqref{Eq:FOV_theta} and \eqref{Eq:CPC_D1}. It follows that:
\begin{equation}
    \mkern-18mu P_{\mathrm{r}} =\mathrm{FF} \times P_{\mathrm{t}} \left( {1-\exp{\left( -\dfrac{N_\mathrm{PD} n^2 _\mathrm{CPC}}{2 \mathrm{FF} \left[K_\mathrm{PD} B \sin\left(\dfrac{\mathrm{FOV}}{2N_\mathrm{tier}+1}\right) {w'}(D) \right]^2} \right)}} \right) \raisepunct{.}
    \label{Eq:Pr2}
\end{equation}

To ensure a high spectral efficiency with intensity modulation and direct detection, we assume the use of \ac{DC}-biased optical \ac{OFDM} in conjunction with adaptive \ac{QAM}. By properly choosing the variance of the \ac{OFDM} signal and the \ac{DC} bias, the electrical \ac{SNR} is given by:
\begin{equation}
    \mathrm{SNR} = \frac{\left(R_\mathrm{PD}P_\mathrm{r}\right)^2}{\sigma_\mathrm{n}^2} \raisepunct{,}
    \label{Eq:SNR}
\end{equation}
where $R_\mathrm{PD}$ is the PD responsivity; and $\sigma_\mathrm{n}^2 = N_0 B$ is the total noise variance with $N_0$ denoting the total noise \ac{PSD}. The total noise \ac{PSD} is \cite{ESarbazi2020tb}:
\begin{equation}
	N_0 = \frac{4\kappa T}{R_\mathrm{L}}F_\mathrm{n} N_\mathrm{PD} + 
	2q_\mathrm{e} R_\mathrm{PD}P_\mathrm{r}+
	\mathrm{RIN}\left(R_\mathrm{PD}P_\mathrm{r}\right)^2.
	\label{Eq:N0}
\end{equation}
In (\ref{Eq:N0}), the first term corresponds to the thermal noise, the second term is the shot noise \ac{PSD} of the receiver, and the third term is the \ac{PSD} of the \ac{RIN} of the \ac{VCSEL}, defined as the mean square of intensity fluctuations of the laser light normalised to the squared average intensity \cite{LColdren2012}. Also, $\kappa$ is the Boltzmann constant; $T$ is temperature in Kelvin; $R_\mathrm{L}$ is the load resistance; $F_\mathrm{n}$ is the noise figure of the \ac{TIA}; and $q_\mathrm{e}$ is the elementary charge.
The proposed receiver architecture entails using \ac{PD}-\ac{TIA} pairs, as shown in Fig.~\subref*{Fig:Rx_Elem}. In this case, the total noise is essentially dominated by the receiver thermal noise \cite{ESarbazi2020tb}. Therefore, (\ref{Eq:N0}) can be approximated by $N_0 \approx \dfrac{4\kappa T}{R_\mathrm{L}}F_\mathrm{n} N_\mathrm{PD}$. The achievable rate is given by:
\begin{equation}
    R = B\log_2\left(1+\frac{\mathrm{SNR}}{\Gamma}\right) = B\log_2\left(1 + \frac{\left(R_\mathrm{PD}P_\mathrm{r}\right)^2}{\Gamma N_0 B }\right),
    \label{Eq:Rate}
\end{equation}
where $B$ is the single-sided bandwidth of the OWC system which is determined by the receiver bandwidth\footnote{Here, the system bandwidth is assumed to be limited by the receiver as typically a large modulation bandwidth is available at the \ac{VCSEL}-based transmitter.}; and $\Gamma$ denotes the \ac{SNR} gap required to guarantee the target \ac{BER} performance. 

%---------------------------------------------------------------------------------------------------
\subsection{Rate Maximisation}
We now formulate an optimisation problem to maximise the achievable data rate for a \textit{given} PD array size $N_\mathrm{PD}$ and \ac{ADR} size $N_\mathrm{ADR}$. The aim of this optimisation is to find optimum values of $B$ and $\mathrm{FOV}$ while satisfying a number of design constraints.

%---------------------------------------------------------------------------------------------------
\subsubsection{With FOV constraint}
Consider the case where the only constraint in place is the minimum \ac{FOV} requirement. In this case, the rate maximisation problem can be stated as:
\begin{subequations}
\label{Eq:Max_1}
\begin{align}   
    \operatorname*{arg\;max}_{(B,\mathrm{FOV})} \ \ \ & R = B\log_2\left(1 + \frac{\left(R_\mathrm{PD}P_\mathrm{r}\right)^2}{\Gamma N_0 B }\right) \label{Eq:Max_1a}\\
    \rm{s.t.} \ \ \ & {\mathrm{FOV}} \geq \mathrm{FOV}_{\mathrm{min}} \label{Eq:Max_1b}
\end{align}
\end{subequations}
where $P_\mathrm{r}$ is given by \eqref{Eq:Pr2} which is a function of $B$ and $\mathrm{FOV}$. Due to the non-convexity of \eqref{Eq:Max_1a} in $(B,\mathrm{FOV})$, this is a non-convex optimisation problem. Let the objective function be denoted by $R = f(B,\mathrm{FOV})$ for brevity. The following proposition brings out a key characteristic of $R$.
%-----------------------------------------
% Proposition 1
%-----------------------------------------
\begin{proposition}\label{Proposition:1}
$R$ is a monotonically decreasing function of $\mathrm{FOV}$.
\end{proposition}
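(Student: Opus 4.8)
The plan is to prove the claim by a short chain of monotonicity arguments, treating $B$ as fixed (so that $\mathrm{FOV}$ is the only free variable in $R=f(B,\mathrm{FOV})$) and using the thermal-noise-dominated regime adopted above, in which $N_0\approx\frac{4\kappa T}{R_\mathrm{L}}F_\mathrm{n}N_\mathrm{PD}$ depends on neither $B$ nor $\mathrm{FOV}$. The idea is: $\mathrm{FOV}\mapsto\theta_\mathrm{CPC}$ is increasing, this makes the argument of the exponential in \eqref{Eq:Pr2} increase toward $0$, hence $P_\mathrm{r}$ decreases, and since $R$ in \eqref{Eq:Rate} is increasing in $P_\mathrm{r}$ for fixed $B$, the composition is decreasing in $\mathrm{FOV}$.

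I would carry this out in the following order. First, from \eqref{Eq:FOV_theta}, $\theta_\mathrm{CPC}=\mathrm{FOV}/(2N_\mathrm{tier}+1)$ is strictly increasing in $\mathrm{FOV}$, and by Corollary~\ref{Corollary:1} it lies in $(0,\pi/6]\subset(0,\pi/2)$, an interval on which $\sin$ is positive and strictly increasing; hence $\sin\theta_\mathrm{CPC}$, and therefore the bracketed quantity $\big[K_\mathrm{PD}B\sin(\tfrac{\mathrm{FOV}}{2N_\mathrm{tier}+1}){w'}(D)\big]^2$ in \eqref{Eq:Pr2}, is strictly increasing in $\mathrm{FOV}$ at fixed $B$. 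Second, the numerator $N_\mathrm{PD}n_\mathrm{CPC}^2/(2\mathrm{FF})$ of the exponent is a positive constant, so the exponent is negative and strictly increases toward $0$; consequently the exponential strictly increases, $1-\exp(\cdot)$ strictly decreases, and thus $P_\mathrm{r}$ is a strictly decreasing — and always strictly positive — function of $\mathrm{FOV}$. Third, write $R=B\log_2\!\big(1+\tfrac{R_\mathrm{PD}^2}{\Gamma N_0 B}\,P_\mathrm{r}^2\big)$; for fixed $B$ the coefficient $R_\mathrm{PD}^2/(\Gamma N_0 B)$ is a positive constant, so $R$ is strictly increasing in $P_\mathrm{r}>0$. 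Composing the last two steps yields that $R$ is (strictly) decreasing in $\mathrm{FOV}$. An equivalent route is to compute $\partial R/\partial\mathrm{FOV}$ by the chain rule and verify it is negative, but the composition argument sidesteps the algebra.

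The only genuinely delicate point is the status of $N_0$: in the exact expression \eqref{Eq:N0} the shot-noise and RIN terms depend on $P_\mathrm{r}$, hence on $\mathrm{FOV}$, which would spoil the clean ``constant coefficient'' step. I would handle this exactly as the paper does, invoking the \ac{PD}--\ac{TIA} architecture of Fig.~\subref*{Fig:Rx_Elem} to justify the approximation $N_0\approx\frac{4\kappa T}{R_\mathrm{L}}F_\mathrm{n}N_\mathrm{PD}$; under this approximation $N_0$ is a genuine constant and the argument above goes through verbatim. Everything else is elementary monotonicity of $\sin$, $\exp$, and $\log$, so no further obstacle is anticipated.
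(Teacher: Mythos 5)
Your proposal is correct and follows the same chain of dependencies as the paper's own proof in Appendix~\ref{App:1} ($\mathrm{FOV}\to\theta_\mathrm{CPC}\to\sin\theta_\mathrm{CPC}\to P_\mathrm{r}\to R$, with Corollary~\ref{Corollary:1} guaranteeing the angle stays in the range where the trigonometric step is monotone); the only difference is that you establish each link by elementary monotone composition, whereas the paper computes the partial derivatives explicitly and reads off their signs, which is a cosmetic rather than substantive distinction. Your treatment of $N_0$ likewise matches the paper, which adopts the thermal-noise-dominated approximation $N_0\approx\frac{4\kappa T}{R_\mathrm{L}}F_\mathrm{n}N_\mathrm{PD}$ before stating the rate expression used in the optimisation.
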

\begin{proof}
    See Appendix~\ref{App:1}.
\end{proof}
\noindent The following lemma establishes a key result that is used to simplify the non-convex optimisation problem in \eqref{Eq:Max_1}.
%-----------------------------------------
% Lemma 1
%-----------------------------------------
\begin{lemma}\label{Lemma:1}
The solution to the rate maximisation problem in \eqref{Eq:Max_1} lies at the boundary of the feasible region.
\end{lemma}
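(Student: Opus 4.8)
The plan is to obtain the lemma as an almost immediate consequence of Proposition~\ref{Proposition:1}. First I would make the feasible region of \eqref{Eq:Max_1} explicit. Collecting the design restrictions stated so far (positivity of the bandwidth, the minimum‑FOV requirement \eqref{Eq:Max_1b}, and $\mathrm{FOV}\leq\tfrac{\pi}{2}$ from the ADR geometry), the feasible set is
\begin{equation*}
\mathcal{F} = \left\{ (B,\mathrm{FOV}) : B>0,\ \mathrm{FOV}_{\mathrm{min}} \leq \mathrm{FOV} \leq \tfrac{\pi}{2} \right\},
\end{equation*}
whose boundary $\partial\mathcal{F}$ contains the face $\{\mathrm{FOV}=\mathrm{FOV}_{\mathrm{min}}\}$ associated with the active constraint \eqref{Eq:Max_1b}.

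Next I would argue by contradiction. Suppose $(B^\star,\mathrm{FOV}^\star)$ is an optimal solution with $\mathrm{FOV}^\star>\mathrm{FOV}_{\mathrm{min}}$. Then $(B^\star,\mathrm{FOV}_{\mathrm{min}})\in\mathcal{F}$ as well, and since $R=f(B,\mathrm{FOV})$ is strictly decreasing in $\mathrm{FOV}$ for fixed $B$ by Proposition~\ref{Proposition:1}, we get $f(B^\star,\mathrm{FOV}_{\mathrm{min}})>f(B^\star,\mathrm{FOV}^\star)$, contradicting optimality. Hence every optimiser satisfies $\mathrm{FOV}^\star=\mathrm{FOV}_{\mathrm{min}}$ and therefore lies on $\partial\mathcal{F}$. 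Equivalently, the two‑dimensional problem \eqref{Eq:Max_1} reduces to the one‑dimensional search $\max_{B>0} f(B,\mathrm{FOV}_{\mathrm{min}})$, which is precisely what the subsequent low‑complexity solution exploits.

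Two points need care. First, one must check that Proposition~\ref{Proposition:1} delivers a \emph{strict} decrease in $\mathrm{FOV}$, not merely a weak one; otherwise the contradiction fails and the statement would have to be read as ``there exists an optimal solution on the boundary.'' This is the only genuine obstacle, and it is settled by inspecting the sign of $\partial f/\partial\mathrm{FOV}$ used in the proof of Proposition~\ref{Proposition:1}: through \eqref{Eq:Pr2}, enlarging $\mathrm{FOV}$ shrinks $\sin(\mathrm{FOV}/(2N_\mathrm{tier}+1))$, hence strictly reduces $P_\mathrm{r}$ and thus $R$ whenever $P_\mathrm{t}>0$. Second, for the reduction to be meaningful one should confirm that $B\mapsto f(B,\mathrm{FOV}_{\mathrm{min}})$ attains its supremum; this follows since $R\to 0$ as $B\to 0^{+}$ (the prefactor $B$ vanishes while $P_\mathrm{r}$ in \eqref{Eq:Pr2} saturates) and as $B\to\infty$ (the exponent in \eqref{Eq:Pr2} is $O(1/B^{2})$, so $P_\mathrm{r}=O(1/B^{2})$ and $R=O(1/B^{4})$), leaving a maximiser at an interior $B$. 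Everything beyond the strictness check is routine once monotonicity in $\mathrm{FOV}$ is in hand.
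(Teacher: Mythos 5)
Your argument is correct and follows essentially the same route as the paper: both rest entirely on Proposition~\ref{Proposition:1} to conclude that, for any fixed $B$, the objective is maximised at $\mathrm{FOV}=\mathrm{FOV}_{\mathrm{min}}$, so the optimum lies on the boundary. Your additional checks (strictness of $\partial R/\partial\mathrm{FOV}<0$, which Appendix~\ref{App:1} indeed establishes, and existence of a maximiser in $B$) are sound but go beyond what the paper's one-line proof records.
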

%-----------------------------------------
% Proof of Lemma 1
%-----------------------------------------
\begin{proof}
From Proposition~\ref{Proposition:1}, it can be concluded that for any given value of $B = B_0$, the objective function $R=f(B_0,\mathrm{FOV})$ under $\mathrm{FOV}\geq\mathrm{FOV}_\mathrm{min}$ takes its maximum value at $(B_0,\mathrm{FOV}_\mathrm{min})$. Therefore, $R$ is always maximised at the boundary for $\mathrm{FOV} = \mathrm{FOV}_\mathrm{min}$.
\end{proof}
%-----------------------------------------
% Theorem 1
%-----------------------------------------
\begin{theorem}\label{Theorem:1}
The optimisation problem in \eqref{Eq:Max_1} is simplified to an unconstrained single variable optimisation problem:
\begin{equation}
    \operatorname*{arg\;max}_{B} \ \ \  R = f(B, \mathrm{FOV}_{\mathrm{min}}).
    \label{Eq:Max_1s}
\end{equation}
\end{theorem}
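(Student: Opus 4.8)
The statement is essentially a corollary of Lemma~\ref{Lemma:1}, so the plan is to make the reduction explicit rather than to do any new analysis. First I would recall that the only constraint in \eqref{Eq:Max_1} is the FOV lower bound \eqref{Eq:Max_1b}, so the feasible region is the set $\{(B,\mathrm{FOV}) : \mathrm{FOV}\geq\mathrm{FOV}_{\mathrm{min}}\}$ (together with the implicit physical range $B>0$ and $\mathrm{FOV}\leq\pi/2$), whose boundary in the $\mathrm{FOV}$ direction is the hyperplane $\mathrm{FOV}=\mathrm{FOV}_{\mathrm{min}}$. Lemma~\ref{Lemma:1} tells us any maximiser satisfies $\mathrm{FOV}=\mathrm{FOV}_{\mathrm{min}}$, so the second coordinate is pinned down and only $B$ remains free.

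Next I would argue the equivalence of the two optimisation problems directly from Proposition~\ref{Proposition:1}. For an arbitrary feasible point $(B,\mathrm{FOV})$, monotone decrease of $R$ in $\mathrm{FOV}$ gives $f(B,\mathrm{FOV})\le f(B,\mathrm{FOV}_{\mathrm{min}})$; taking the supremum over all feasible points yields $\sup_{(B,\mathrm{FOV})} f(B,\mathrm{FOV}) = \sup_{B} f(B,\mathrm{FOV}_{\mathrm{min}})$, and any $B^\star$ attaining the right-hand side, paired with $\mathrm{FOV}_{\mathrm{min}}$, attains the left-hand side. Hence solving \eqref{Eq:Max_1} is equivalent to solving $\arg\max_B f(B,\mathrm{FOV}_{\mathrm{min}})$, which is exactly \eqref{Eq:Max_1s}: a single-variable problem with no remaining inequality constraint on $B$ beyond its physical domain, i.e.\ unconstrained in the optimisation sense.

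Since Proposition~\ref{Proposition:1} and Lemma~\ref{Lemma:1} are already in hand, there is no real obstacle here; the only point requiring a line of care is noting that the reduction preserves the \emph{set} of optimal $B$ values (not merely the optimal objective value), which follows because $f(B,\mathrm{FOV})<f(B,\mathrm{FOV}_{\mathrm{min}})$ strictly whenever $\mathrm{FOV}>\mathrm{FOV}_{\mathrm{min}}$, so no maximiser can lie off the boundary. I would close by remarking that this one-dimensional problem can then be handled by elementary means (e.g.\ a line search over $B$, or first-order conditions on $f(\cdot,\mathrm{FOV}_{\mathrm{min}})$), which motivates the low-complexity optimal solution promised in the introduction.
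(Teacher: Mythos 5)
Your argument is correct and follows essentially the same route as the paper, which simply invokes Lemma~\ref{Lemma:1} to fix $\mathrm{FOV}=\mathrm{FOV}_{\mathrm{min}}$ and thereby reduce \eqref{Eq:Max_1} to the single-variable problem \eqref{Eq:Max_1s}. Your additional care about the supremum equality and the preservation of the set of maximisers is a sound elaboration of the same one-line reduction.
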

%-----------------------------------------
% Proof of Theorem 1
%-----------------------------------------
\begin{proof}
It readily follows from Lemma~\ref{Lemma:1} that \eqref{Eq:Max_1} for $\mathrm{FOV}=\mathrm{FOV}_{\mathrm{min}}$ reduces to \eqref{Eq:Max_1s}.
\end{proof}
\noindent Although the optimisation problem of \eqref{Eq:Max_1s} is still non-convex, its solution can be numerically computed using a one-dimensional search over the range of interest for $B$.
%---------------------------------------------------------------------------------------------------
\subsubsection{With FOV and overall dimensions constraint}
When there are additional constraints on the physical dimensions of the \ac{ADR} as well as the \ac{FOV} constraint, the rate maximisation problem is formulated as:
\begin{subequations}
\label{Eq:Max_2}
\begin{align}   
    \operatorname*{arg\;max}_{(B,\mathrm{FOV})} \ \ \ & R = f(B,\mathrm{FOV}) \label{Eq:Max_2a}\\
    \rm{s.t.} \ \ \ & {\mathrm{FOV}} \geq \mathrm{FOV}_{\mathrm{min}} \label{Eq:Max_2b}\\
    & L_\mathrm{ADR} \leq L_{\mathrm{max}} \label{Eq:Max_2c}\\
    & A_\mathrm{ADR} \leq A_{\mathrm{max}} \label{Eq:Max_2d}
\end{align}
\end{subequations}
The second and the third constraints of \eqref{Eq:Max_2} are intended to fulfil the design objectives for the top area and the overall height of the \ac{ADR}. The objective function in \eqref{Eq:Max_2a} and the two constraints in \eqref{Eq:Max_2c} and \eqref{Eq:Max_2d} are all non-convex. As a result, the optimisation problem in \eqref{Eq:Max_2} is non-convex.
By substituting \eqref{Eq:L_ADR} in \eqref{Eq:Max_2c} and \eqref{Eq:A_ADR} in \eqref{Eq:Max_2d}, both with equality, the optimisation variable $B$ is separately derived as an explicit function of $\mathrm{FOV}$ as follows:
\begin{equation}
B = f_\mathrm{L}(\mathrm{FOV}) = \dfrac{K_1}{L_{\mathrm{max}}} \left[ \dfrac{n_\mathrm{CPC} + \sin\left(\dfrac{\mathrm{FOV}}{2N_\mathrm{tier}+1}\right)}{\sin\left(\dfrac{\mathrm{FOV}}{2N_\mathrm{tier}+1}\right) \tan\left(\dfrac{\mathrm{FOV}}{2N_\mathrm{tier}+1}\right)} \right] \raisepunct{,}
\label{Eq:f_L}
\end{equation}
\begin{equation}
B = f_\mathrm{A}(\mathrm{FOV}) = \dfrac{1}{\sin\left(\dfrac{\mathrm{FOV}}{2N_\mathrm{tier}+1}\right)}\sqrt{\dfrac{K_2}{A_\mathrm{max}} \left[ 1 + \sum_{i=1}^{N_\mathrm{tier}} 6i\cos{\left(\dfrac{\mathrm{2FOV}}{2N_\mathrm{tier}+1}\right)} \right]} \raisepunct{.}
\label{Eq:f_A}
\end{equation}
Note that \eqref{Eq:f_L} and \eqref{Eq:f_A} represent boundaries of the two constraints on $L_\mathrm{ADR}$ and $A_\mathrm{ADR}$.
%-----------------------------------------
% Proposition 2
%-----------------------------------------
\begin{proposition}\label{Proposition:2}
$L_{\mathrm{ADR}}$ and $A_{\mathrm{ADR}}$ are monotonically decreasing functions of $B$ and $\mathrm{FOV}$.
\end{proposition}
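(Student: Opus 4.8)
The plan is to establish the claimed monotonicity by direct inspection of the closed-form expressions for $L_\mathrm{ADR}$ and $A_\mathrm{ADR}$ in \eqref{Eq:L_ADR} and \eqref{Eq:A_ADR}, treating the dependence on $B$ and on $\mathrm{FOV}$ separately. The $B$-dependence is the easy half: both expressions factor as a positive constant (or a positive $\mathrm{FOV}$-only factor) divided by $B$ in the case of $L_\mathrm{ADR}$, and divided by $B^2$ in the case of $A_\mathrm{ADR}$, so $\partial L_\mathrm{ADR}/\partial B < 0$ and $\partial A_\mathrm{ADR}/\partial B < 0$ hold trivially once one notes that the remaining factors are strictly positive for $\theta_\mathrm{CPC}\in(0,\pi/6]$ (which is guaranteed by Corollary~\ref{Corollary:1}).

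For the $\mathrm{FOV}$-dependence, I would first reduce to a statement about $\theta_\mathrm{CPC}$: by \eqref{Eq:FOV_theta}, $\theta_\mathrm{CPC}$ is a strictly increasing linear function of $\mathrm{FOV}$ for fixed $N_\mathrm{tier}$, so it suffices to show that $L_\mathrm{ADR}$ and $A_\mathrm{ADR}$ are decreasing in $\theta_\mathrm{CPC}$ on $(0,\pi/6]$. For $L_\mathrm{ADR}$, write the bracketed factor as $g(\theta_\mathrm{CPC}) = \dfrac{n_\mathrm{CPC}+\sin\theta_\mathrm{CPC}}{\sin\theta_\mathrm{CPC}\tan\theta_\mathrm{CPC}} = n_\mathrm{CPC}\cot\theta_\mathrm{CPC}\csc\theta_\mathrm{CPC} + \cos\theta_\mathrm{CPC}/\sin\theta_\mathrm{CPC}\cdot\cot\theta_\mathrm{CPC}$ — more cleanly, $g(\theta) = n_\mathrm{CPC}\,\dfrac{\cos\theta}{\sin^2\theta} + \dfrac{\cos^2\theta}{\sin^2\theta}$; each summand is a product/composition of functions that are manifestly decreasing on $(0,\pi/2)$ (as $\theta$ grows, $\cos\theta$ decreases and $\sin\theta$ increases, so both $\cos\theta/\sin^2\theta$ and $\cos^2\theta/\sin^2\theta = \cot^2\theta$ decrease), hence $g$ is strictly decreasing, and $L_\mathrm{ADR} = (K_1/B)\,g(\theta_\mathrm{CPC})$ inherits this. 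For $A_\mathrm{ADR}$, the prefactor $1/\sin^2\theta_\mathrm{CPC}$ is strictly decreasing in $\theta_\mathrm{CPC}$, so the only thing to check is that the bracketed sum $S(\theta_\mathrm{CPC}) = 1 + \sum_{i=1}^{N_\mathrm{tier}} 6i\cos(2i\theta_\mathrm{CPC})$ does not increase fast enough to overturn this.

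The main obstacle is precisely this last point: $S(\theta_\mathrm{CPC})$ is itself decreasing in $\theta_\mathrm{CPC}$ over the relevant range (each $\cos(2i\theta_\mathrm{CPC})$ is decreasing as long as $2i\theta_\mathrm{CPC}\le\pi$, and since $\theta_\mathrm{CPC}\le\pi/6$ and $i\le N_\mathrm{tier}$ one has $2i\theta_\mathrm{CPC}\le 2N_\mathrm{tier}\cdot\pi/6$, which need \emph{not} be $\le\pi$ when $N_\mathrm{tier}\ge 4$), so a naive termwise argument fails for large $N_\mathrm{tier}$. The cleanest fix is to observe that $\mathrm{FOV}\le\pi/2$ forces $2i\theta_\mathrm{CPC} = 2i\,\mathrm{FOV}/(2N_\mathrm{tier}+1) \le 2N_\mathrm{tier}\,\mathrm{FOV}/(2N_\mathrm{tier}+1) < \mathrm{FOV}\le\pi/2 < \pi$ for every $i\le N_\mathrm{tier}$, so in fact every cosine argument lies in $(0,\pi/2)$ and each term $6i\cos(2i\theta_\mathrm{CPC})$ is genuinely strictly decreasing; hence $S$ is strictly decreasing, the product $A_\mathrm{ADR} = (K_2/B^2)\,S(\theta_\mathrm{CPC})/\sin^2\theta_\mathrm{CPC}$ is a product of two positive strictly decreasing functions of $\theta_\mathrm{CPC}$, and monotonic decrease in $\mathrm{FOV}$ follows by the chain rule. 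I would present the argument by stating these two sub-claims (decreasing in $B$; decreasing in $\theta_\mathrm{CPC}$, hence in $\mathrm{FOV}$) and deferring the routine differentiations, or the termwise sign checks, to an appendix.
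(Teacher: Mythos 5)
Your proof is correct and follows the same overall route as the paper's: the $B$-monotonicity is read off from the $1/B$ and $1/B^2$ prefactors, and the $\mathrm{FOV}$-monotonicity is reduced via \eqref{Eq:FOV_theta} to monotonicity in $\theta_\mathrm{CPC}$ on $(0,\pi/6]$. The only substantive difference is presentational: the paper computes $\dfrac{\partial L_\mathrm{ADR}}{\partial \mathrm{FOV}}$ and $\dfrac{\partial A_\mathrm{ADR}}{\partial \mathrm{FOV}}$ explicitly and checks the sign of each term, whereas you decompose each expression into a sum and product of manifestly positive, strictly decreasing elementary functions, which is arguably cleaner and sidesteps the bookkeeping in differentiating the cosine sum. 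Two remarks. First, there is a harmless algebra slip in your decomposition of the bracket in \eqref{Eq:L_ADR}: one has $\dfrac{n_\mathrm{CPC}+\sin\theta}{\sin\theta\tan\theta} = \dfrac{n_\mathrm{CPC}\cos\theta}{\sin^2\theta} + \cot\theta$, not $\dfrac{n_\mathrm{CPC}\cos\theta}{\sin^2\theta} + \cot^2\theta$; since $\cot\theta$ is also positive and strictly decreasing on $(0,\pi/2)$, the conclusion is unaffected, but the identity should be corrected. Second, your observation that $2i\theta_\mathrm{CPC} \le \dfrac{2N_\mathrm{tier}}{2N_\mathrm{tier}+1}\mathrm{FOV} < \dfrac{\pi}{2}$ for all $i\le N_\mathrm{tier}$ is exactly the justification needed for the termwise monotonicity of the sum $1+\sum_{i}6i\cos(2i\theta_\mathrm{CPC})$; the paper asserts $2i\theta_\mathrm{CPC}<\pi/2$ without derivation, so your version actually fills in a small omitted step.
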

\begin{proof}
    See Appendix~\ref{App:2}.
\end{proof}
\noindent The following lemma aims to unify the three constraints in \eqref{Eq:Max_2b}--\eqref{Eq:Max_2d} based on Proposition~\ref{Proposition:2}.
%-----------------------------------------
% Lemma 2
%-----------------------------------------
\begin{lemma}
The optimisation problem in \eqref{Eq:Max_2} can be reformulated with a single \ac{FOV} constraint:
\begin{subequations}
\label{Eq:Max_2r}
\begin{align}   
    \operatorname*{arg\;max}_{(B,\mathrm{FOV})} \ \ \ & R = f(B,\mathrm{FOV}) \label{Eq:Max_2ra}\\
    \rm{s.t.} \ \ \ & \mathrm{FOV} \geq f_\mathrm{FOV}(B) \label{Eq:Max_2rb}
\end{align}
\end{subequations}
with the boundary function $f_\mathrm{FOV}(B) = \max\left\{ \mathrm{FOV}_\mathrm{min}, f_\mathrm{L}^{-1}(B), f_\mathrm{A}^{-1}(B) \right\}$, where $f_\mathrm{L}^{-1}(B)$ and $f_\mathrm{A}^{-1}(B)$ are the inverse functions of $B = f_\mathrm{L}(\mathrm{FOV})$ and $B = f_\mathrm{A}(\mathrm{FOV})$ given in \eqref{Eq:f_L} and \eqref{Eq:f_A}.
\end{lemma}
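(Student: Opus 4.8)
The plan is to prove the two problems are equivalent by showing they share the same objective and the same feasible set; since \eqref{Eq:Max_2r} uses the objective $R=f(B,\mathrm{FOV})$ unchanged, the whole task reduces to re-expressing the two dimension constraints \eqref{Eq:Max_2c}--\eqref{Eq:Max_2d} as lower bounds on $\mathrm{FOV}$ for each fixed $B$, and then intersecting them with the original constraint \eqref{Eq:Max_2b}.

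First I would fix $B$ and apply Proposition~\ref{Proposition:2}: on the admissible range $\mathrm{FOV}\in(0,\pi/2]$ (equivalently $\theta_\mathrm{CPC}\in(0,\pi/6]$ by Corollary~\ref{Corollary:1}), the map $\mathrm{FOV}\mapsto L_\mathrm{ADR}(B,\mathrm{FOV})$ is strictly decreasing. Hence the set $\{\mathrm{FOV} : L_\mathrm{ADR}(B,\mathrm{FOV})\le L_\mathrm{max}\}$ is an upper sub-interval of the admissible range whose left endpoint is the unique $\mathrm{FOV}$ satisfying $L_\mathrm{ADR}(B,\mathrm{FOV})=L_\mathrm{max}$. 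Since \eqref{Eq:f_L} is by construction the locus $L_\mathrm{ADR}=L_\mathrm{max}$ rewritten as $B=f_\mathrm{L}(\mathrm{FOV})$, that endpoint is exactly $f_\mathrm{L}^{-1}(B)$, so constraint \eqref{Eq:Max_2c} is equivalent, for fixed $B$, to $\mathrm{FOV}\ge f_\mathrm{L}^{-1}(B)$. The identical argument with $A_\mathrm{ADR}$, again monotone in $\mathrm{FOV}$ by Proposition~\ref{Proposition:2}, and with the boundary curve \eqref{Eq:f_A}, shows \eqref{Eq:Max_2d} is equivalent to $\mathrm{FOV}\ge f_\mathrm{A}^{-1}(B)$. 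The required invertibility of $f_\mathrm{L}$ and $f_\mathrm{A}$ follows because the $\mathrm{FOV}$-dependence of $L_\mathrm{ADR}$ and $A_\mathrm{ADR}$ separates multiplicatively from their $B$-dependence (cf.\ \eqref{Eq:L_ADR}, \eqref{Eq:A_ADR}), so Proposition~\ref{Proposition:2} forces those $\mathrm{FOV}$-factors — and therefore $f_\mathrm{L}$ and $f_\mathrm{A}$ themselves — to be strictly monotone.

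Intersecting the three inequalities $\mathrm{FOV}\ge\mathrm{FOV}_\mathrm{min}$, $\mathrm{FOV}\ge f_\mathrm{L}^{-1}(B)$ and $\mathrm{FOV}\ge f_\mathrm{A}^{-1}(B)$ yields the single constraint $\mathrm{FOV}\ge\max\{\mathrm{FOV}_\mathrm{min}, f_\mathrm{L}^{-1}(B), f_\mathrm{A}^{-1}(B)\}=f_\mathrm{FOV}(B)$, which is precisely \eqref{Eq:Max_2rb}. As the objective has not been altered, \eqref{Eq:Max_2} and \eqref{Eq:Max_2r} have the same feasible region and the same objective, hence are equivalent, completing the proof.

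The part that will need the most care is the bookkeeping around the degenerate cases of the two dimension constraints rather than any hard estimate. If a constraint is slack over the entire admissible range, its inverse value would formally fall at or below the lower end of the range and is harmlessly absorbed by the $\max$ with $\mathrm{FOV}_\mathrm{min}$; if a constraint cannot be met even at $\mathrm{FOV}=\pi/2$, then that value of $B$ is infeasible in both \eqref{Eq:Max_2} and \eqref{Eq:Max_2r}, so the reformulation stays consistent. Making these boundary conventions explicit (so that $f_\mathrm{FOV}$ is well defined on all of the relevant $B$-range) is the only subtlety; the monotonicity facts it rests on are all delivered by Proposition~\ref{Proposition:2}.
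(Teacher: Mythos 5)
Your proposal is correct and follows essentially the same route as the paper: it uses the monotonicity of $L_\mathrm{ADR}$ and $A_\mathrm{ADR}$ in $\mathrm{FOV}$ from Proposition~\ref{Proposition:2} to rewrite each dimension constraint as a lower bound $\mathrm{FOV}\geq f_\mathrm{L}^{-1}(B)$ or $\mathrm{FOV}\geq f_\mathrm{A}^{-1}(B)$ for fixed $B$, and then intersects the three half-spaces into the single constraint $\mathrm{FOV}\geq f_\mathrm{FOV}(B)$. The only difference is that you make explicit the invertibility of $f_\mathrm{L}$ and $f_\mathrm{A}$ (via the multiplicative separation of the $B$- and $\mathrm{FOV}$-dependence) and the handling of degenerate cases, details the paper leaves implicit.
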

%-----------------------------------------
% Proof of Lemma 2
%-----------------------------------------
\begin{proof}
Based on Proposition~\ref{Proposition:2}, the dimensions constraints in \eqref{Eq:Max_2c} and \eqref{Eq:Max_2d} can be transformed into their equivalent \ac{FOV} constraints. By defining:
\begin{subequations}
\begin{align}
    \mathcal{S}_1 &= \left\{ {(B,\mathrm{FOV}) \Big\lvert \mathrm{FOV} \geq \mathrm{FOV}_\mathrm{min}}\right\} \raisepunct{,} \\
    \mathcal{S}_2 &= \left\{ {(B,\mathrm{FOV}) \Big\lvert L_\mathrm{ADR}(B,\mathrm{FOV}) \leq L_\mathrm{max}}\right\} = \left\{ {(B,\mathrm{FOV}) \Big\lvert  \mathrm{FOV} \geq {f_\mathrm{L}^{-1}}(B) }\right\} \label{Eq:App3_S1} \raisepunct{,} \\
    \mathcal{S}_3 &= \left\{ {(B,\mathrm{FOV}) \Big\lvert A_\mathrm{ADR}(B,\mathrm{FOV}) \leq A_\mathrm{max}}\Big\} = \Big\{ {(B,\mathrm{FOV}) \Big\lvert  \mathrm{FOV} \geq {f_\mathrm{A}^{-1}}(B) }\right\} \label{Eq:App3_S2} \raisepunct{,}
\end{align}
\end{subequations}
the feasible set of the optimisation problem in \eqref{Eq:Max_2} takes the following form:
\begin{equation}
    \mathcal{S} = \mathcal{S}_1 \cap \mathcal{S}_2 \cap \mathcal{S}_3 = 
    \left\{ (B,\mathrm{FOV}) \Big\lvert \mathrm{FOV} \geq \max\left\{ \mathrm{FOV}_\mathrm{min}, f_\mathrm{L}^{-1}(B), f_\mathrm{A}^{-1}(B) \right\} \right\} \raisepunct{.}
\end{equation}
The boundary of $\mathcal{S}$ as a function of $B$ is:
\begin{equation}
f_\mathrm{FOV}(B) = \max\left\{ \mathrm{FOV}_\mathrm{min}, f_\mathrm{L}^{-1}(B), f_\mathrm{A}^{-1}(B) \right\} \raisepunct{.}
\label{Eq:FOV_unified}
\end{equation}
This completes the proof.
\end{proof}
\noindent The following lemma offers further simplification for the optimisation problem in \eqref{Eq:Max_2r}.
%-----------------------------------------
% Lemma 3
%-----------------------------------------
\begin{lemma}\label{Lemma:2}
The solution to the rate maximisation problem in \eqref{Eq:Max_2r} lies at the boundary of the feasible region.
\end{lemma}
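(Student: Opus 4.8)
The plan is to show that the objective $R = f(B,\mathrm{FOV})$ cannot attain its maximum at an interior point of the feasible region $\mathcal{S}$, so the optimum must lie on the boundary curve $\mathrm{FOV} = f_\mathrm{FOV}(B)$. First I would fix an arbitrary feasible pair $(B_0,\mathrm{FOV}_0)$ that lies strictly inside $\mathcal{S}$, i.e., $\mathrm{FOV}_0 > f_\mathrm{FOV}(B_0)$. By Proposition~\ref{Proposition:1}, $R = f(B_0,\mathrm{FOV})$ is monotonically decreasing in $\mathrm{FOV}$, so decreasing $\mathrm{FOV}$ from $\mathrm{FOV}_0$ down toward $f_\mathrm{FOV}(B_0)$ while holding $B=B_0$ strictly increases $R$ (or at worst keeps it constant, but monotonic decrease gives a strict improvement as long as we move). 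Hence $f(B_0,\mathrm{FOV}_0) < f\bigl(B_0, f_\mathrm{FOV}(B_0)\bigr)$, and the point $\bigl(B_0, f_\mathrm{FOV}(B_0)\bigr)$ is still feasible since it satisfies $\mathrm{FOV} \geq f_\mathrm{FOV}(B)$ with equality. Therefore no interior point can be optimal, which proves the claim.

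The key steps, in order, are: (i) observe that the feasible region of \eqref{Eq:Max_2r} is exactly $\mathcal{S} = \{(B,\mathrm{FOV}) : \mathrm{FOV} \geq f_\mathrm{FOV}(B)\}$ and its boundary is the curve $\mathrm{FOV} = f_\mathrm{FOV}(B)$; (ii) invoke Proposition~\ref{Proposition:1} to assert strict monotone decrease of $R$ in $\mathrm{FOV}$ for each fixed $B$; (iii) for any candidate interior maximiser, project it vertically down to the boundary, keeping $B$ fixed, and note that this projection stays inside $\mathcal{S}$ and strictly increases the objective; (iv) conclude that the supremum over $\mathcal{S}$ is attained (if at all) only on the boundary. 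One should also note that the same argument implicitly relies on $f_\mathrm{FOV}(B)$ being well defined for the relevant range of $B$, which follows from \eqref{Eq:FOV_unified} together with Proposition~\ref{Proposition:2} guaranteeing the monotonicity needed for $f_\mathrm{L}^{-1}$ and $f_\mathrm{A}^{-1}$ to exist.

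The main obstacle I anticipate is not the monotonicity argument itself, which is essentially immediate once Proposition~\ref{Proposition:1} is in hand, but rather being careful about the \emph{geometry of the feasible set}: because $f_\mathrm{FOV}$ is a pointwise maximum of three functions, its boundary is piecewise-defined, and one must make sure that the vertical projection from an interior point genuinely lands on this composite boundary and does not, for instance, leave the region through a different constraint. Since all three sets $\mathcal{S}_1, \mathcal{S}_2, \mathcal{S}_3$ are of the form $\{\mathrm{FOV} \geq (\text{function of }B)\}$ — i.e., each is an ``upper'' region in the $\mathrm{FOV}$ direction — their intersection is also an upper region, so decreasing $\mathrm{FOV}$ at fixed $B$ can only move us toward the boundary and never out the far side; this is precisely what makes the projection argument clean. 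The proof is therefore short: restate that $\mathcal{S}$ is upper-closed in $\mathrm{FOV}$, apply Proposition~\ref{Proposition:1}, and conclude.
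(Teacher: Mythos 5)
Your proof is correct and follows essentially the same route as the paper: fix $B=B_0$, invoke Proposition~\ref{Proposition:1} to see that $R$ decreases in $\mathrm{FOV}$, and conclude that the slice-wise maximum sits at $\bigl(B_0, f_\mathrm{FOV}(B_0)\bigr)$ on the boundary. Your additional remark that $\mathcal{S}$ is an upper set in the $\mathrm{FOV}$ direction (so the vertical projection cannot exit through another constraint) is a worthwhile precision the paper leaves implicit, but it does not change the argument.
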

%-----------------------------------------
% Proof of Lemma 3
%-----------------------------------------
\begin{proof}
Since $R=f(B,\mathrm{FOV})$ is a decreasing function of $\mathrm{FOV}$ as shown in Proposition~\ref{Proposition:1}, for any given value of $B = B_0$, $R=f(B_0,\mathrm{FOV})$ under $\mathrm{FOV}\geq f_\mathrm{FOV}(B_0)$ is maximised at $(B_0,f_\mathrm{FOV}(B_0))$. Hence, the maximum of $R$ is located at the boundary of the feasible region.
\end{proof}
%-----------------------------------------
% Theorem 2
%-----------------------------------------
\begin{theorem} \label{Theorem:2}
The optimisation problem in \eqref{Eq:Max_2r} is simplified to an unconstrained single variable optimisation problem:
\begin{equation}
    \operatorname*{arg\;max}_{B} \ \ \  R = f(B, f_\mathrm{FOV}(B))
    \label{Eq:Max_2s}
\end{equation}
\end{theorem}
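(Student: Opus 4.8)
The plan is to mirror the argument used for Theorem~\ref{Theorem:1}, leveraging Lemma~\ref{Lemma:2} to collapse the two-variable constrained problem in \eqref{Eq:Max_2r} onto its boundary curve. First I would recall that Lemma~\ref{Lemma:2} guarantees that any maximiser $(B^\star, \mathrm{FOV}^\star)$ of \eqref{Eq:Max_2r} lies exactly on the boundary of the feasible set, i.e.\ satisfies $\mathrm{FOV}^\star = f_\mathrm{FOV}(B^\star)$. Consequently, searching over the interior of the feasible region is unnecessary: the optimum is attained within the one-parameter family $\{(B, f_\mathrm{FOV}(B))\}$, and substituting the relation $\mathrm{FOV} = f_\mathrm{FOV}(B)$ into the objective $R = f(B,\mathrm{FOV})$ eliminates the $\mathrm{FOV}$ variable, yielding the single-variable objective $R = f(B, f_\mathrm{FOV}(B))$ of \eqref{Eq:Max_2s}.

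The key steps, in order, are: (i) invoke Lemma~\ref{Lemma:2} to restrict to the boundary $\mathrm{FOV} = f_\mathrm{FOV}(B)$; (ii) verify that this substitution is well-defined as a function of $B$ alone, since $f_\mathrm{FOV}(B) = \max\{\mathrm{FOV}_\mathrm{min}, f_\mathrm{L}^{-1}(B), f_\mathrm{A}^{-1}(B)\}$ requires that $B = f_\mathrm{L}(\mathrm{FOV})$ and $B = f_\mathrm{A}(\mathrm{FOV})$ in \eqref{Eq:f_L} and \eqref{Eq:f_A} admit single-valued inverses; this follows from Proposition~\ref{Proposition:2}, which asserts strict monotonicity of $L_\mathrm{ADR}$ and $A_\mathrm{ADR}$ (hence of $f_\mathrm{L}$ and $f_\mathrm{A}$) in $\mathrm{FOV}$ on the admissible range, itself bounded away from degeneracy by $\theta_\mathrm{CPC}\in(0,\pi/6]$ via Corollary~\ref{Corollary:1}; and (iii) conclude that $\max_{(B,\mathrm{FOV})\in\mathcal{S}} f(B,\mathrm{FOV}) = \max_B f(B, f_\mathrm{FOV}(B))$ over the admissible range of $B$, which is precisely the unconstrained single-variable problem in \eqref{Eq:Max_2s}.

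The main obstacle is step (ii) — ensuring $f_\mathrm{L}^{-1}$ and $f_\mathrm{A}^{-1}$ are genuinely well-defined on the range of $B$ that arises — but this is bookkeeping rather than a deep difficulty, resting entirely on the monotonicity already secured in Proposition~\ref{Proposition:2} together with the boundedness of the feasible $\mathrm{FOV}$ interval. Once that is in place, the reduction is immediate from Lemma~\ref{Lemma:2}, exactly as Theorem~\ref{Theorem:1} followed from Lemma~\ref{Lemma:1}. As before, I would close by remarking that although \eqref{Eq:Max_2s} remains non-convex, it can be solved efficiently by a one-dimensional line search over the range of interest for $B$.
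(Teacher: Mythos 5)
Your proposal is correct and follows essentially the same route as the paper: invoke Lemma~\ref{Lemma:2} to place the optimum on the boundary $\mathrm{FOV}=f_\mathrm{FOV}(B)$ and substitute into the objective. Your extra step (ii) on the well-definedness of $f_\mathrm{L}^{-1}$ and $f_\mathrm{A}^{-1}$ via the monotonicity in Proposition~\ref{Proposition:2} is a reasonable bit of added care, but it is already implicit in the paper's construction of $f_\mathrm{FOV}$ in the preceding lemma, so the paper's proof simply reads off the reduction in one line.
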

%-----------------------------------------
% Proof of Theorem 2
%-----------------------------------------
\begin{proof}
With the aid of Lemma~\ref{Lemma:2}, it suffices to evaluate \eqref{Eq:Max_2r} for $\mathrm{FOV}=f_\mathrm{FOV}(B)$.
\end{proof}
\noindent The solution to the non-convex optimisation problem in \eqref{Eq:Max_2s} can be efficiently computed using numerical methods by way of a one-dimensional search along the $B$ axis.
%%%%%%%%%%%%%%%%%%%%%%%%%%%%%%%%%%%%%%%%%%%%%%%%%%%%%%%%%%%%%%%%%%%%%%%%%%%%%%%%%%%%%%%%%%%%%%%%%%%%
%%%%%%%%%%%%%%%%%%%%%%%%%%%%%%%%%%%%%%%%%%%%%%%%%%%%%%%%%%%%%%%%%%%%%%%%%%%%%%%%%%%%%%%%%%%%%%%%%%%%
\begin{table}[t!]
    \begin{minipage}{.5\linewidth}
    \centering
    \caption{Simulation Parameters}
    \resizebox{\columnwidth}{!}{%
	\begin{tabular}{|c|l|l|}
	\hline
		\textbf{Parameter}    & \textbf{Description}                & \textbf{Value} \\ \hline\hline
		$D$                   & Link distance                       & $3$ m          \\ 
		$w_0$                 & Incident beam waist radius          & $10$ {\textmu}m  \\
% 		$n_\mathrm{Lens}$     & Refractive index of the lens        & $1.55$         \\
% 		$w' _0$               & Transformed beam waist radius       & $4.1$ {\textmu}m  \\
		$\lambda$             & Laser wavelength                    & $950$ nm       \\ 
            $P_\mathrm{t}$        & Optical power of the VCSEL          & $10$ mW        \\
   		$n_\mathrm{CPC}$      & CPC refractive index                & $1.7$          \\
            $R_\mathrm{PD}$       & PD responsivity                     & $0.6$ A$/$W    \\
            $F_\mathrm{n}$        & TIA noise figure                    & $5$ dB   \\		
		$\Gamma$              & SNR gap                             & $2.6$       \\
		$\mathrm{BER}$        & Bit error ratio                     & $3.8\times10^{-3}$      \\ \hline
	\end{tabular}
	\label{Tab:1}}
    \end{minipage}%
    \begin{minipage}{.49\linewidth}
    \renewcommand{\arraystretch}{1.4}%
    \centering
	\caption{ADR Configurations}
	\resizebox{\columnwidth}{!}{%
    \begin{tabular}{l|c|c|c|c|c|}
    \cline{2-6}
        & \textbf{Configuration} & $\boldsymbol{N_\mathrm{tier}}$ & $\boldsymbol{N_\mathrm{ADR}}$ & \textbf{PD Array Size} & \textbf{Total PDs} \\ 
        \hhline{:-::=====:}
        \multicolumn{1}{|l||}{\multirow{3}{*}{\rotatebox[origin=c]{90}{\textbf{Single-Tier}}}}
        & Config. 1              & $1$   & $7$              & $2 \times 2$    & $28$      \\ \cline{2-6}
        \multicolumn{1}{|l||}{}
        & Config. 2              & $1$   & $7$              & $4 \times 4$    & $112$     \\ \cline{2-6}                              
        \multicolumn{1}{|l||}{}
        & Config. 3              & $1$   & $7$              & $8 \times 8$    & $448$     \\
        \hhline{:=::=====:}
        \multicolumn{1}{|l||}{\multirow{3}{*}{\rotatebox[origin=c]{90}{\textbf{Multi-Tier}}}}
        & Config. 4              & $2$   & $19$             & $2 \times 2$    & $76$      \\ \cline{2-6} 
        \multicolumn{1}{|l||}{}
        & Config. 5              & $2$   & $19$             & $4 \times 4$    & $304$     \\ \cline{2-6}                                
        \multicolumn{1}{|l||}{}
        & Config. 6              & $3$   & $37$             & $2 \times 2$    & $148$     \\
        \hhline{|-||-----|}
\end{tabular}
\label{Tab:2}}
\end{minipage} 
\vspace{-20pt}
\end{table}

%%%%%%%%%%%%%%%%%%%%%%%%%%%%%%%%%%%%%%%%%%%%%%%%%%%%%%%%%%%%%%%%%%%%%%%%%%%%%%%%%%%%%%%%%%%%%%%%%%%%
%%%%%%%%%%%%%%%%%%%%%%%%%%%%%%%%%%%%%%%%%%%%%%%%%%%%%%%%%%%%%%%%%%%%%%%%%%%%%%%%%%%%%%%%%%%%%%%%%%%%
\section{Numerical Results and Discussions} \label{Sec:5}
This section provides numerical results for the narrow-beam \ac{OWC} link configuration described in Section~\ref{Sec:3}. The transmitter is composed of a single \ac{VCSEL} followed by a plano-convex lens of focal length $f = 33$~mm. The transmitter design is adopted from \cite{ESarbazi2020tb}. For eye safety assessment, the \ac{MHP} is specified as $10$~cm and the maximum permissible optical power of the \ac{VCSEL} is calculated as $P_\mathrm{t,max} = 16$~mW \cite{Henderson2003}. Here, we assume $P_\mathrm{t} = 10$~mW, which satisfies the eye safety condition. The \ac{VCSEL}-based transmitter provides a Gaussian beam spot of diameter $20$~cm at the receiver, which is located at distance $D=3$~m from the transmitter. Also, without loss of generality, we assume the use of dielectric CPCs, since the optical gain of a dielectric CPC is improved by a factor of $n_\mathrm{CPC} ^2$ compared to a reflective hollow CPC based on \eqref{Eq:CPC_gain}. Moreover, $\mathrm{FF}=0.7$ for each PD array. The rest of the simulation parameters are listed in Table~\ref{Tab:1}. In order to evaluate the receiver design, we use various ADR configurations with different number of tiers and PD array sizes, as introduced in Table~\ref{Tab:2}. For each configuration, the total number of PDs is calculated as $N_\mathrm{ADR}\times N_\mathrm{PD}$.
%---------------------------------------------------------------------------------------------------

\begin{figure*}[t!]
    \centering
    \begin{minipage}[b]{\linewidth}
    \subfloat[\label{Fig:Surf_a} $R(B,\mathrm{FOV})$] {\includegraphics[width=0.33\textwidth, keepaspectratio=true]{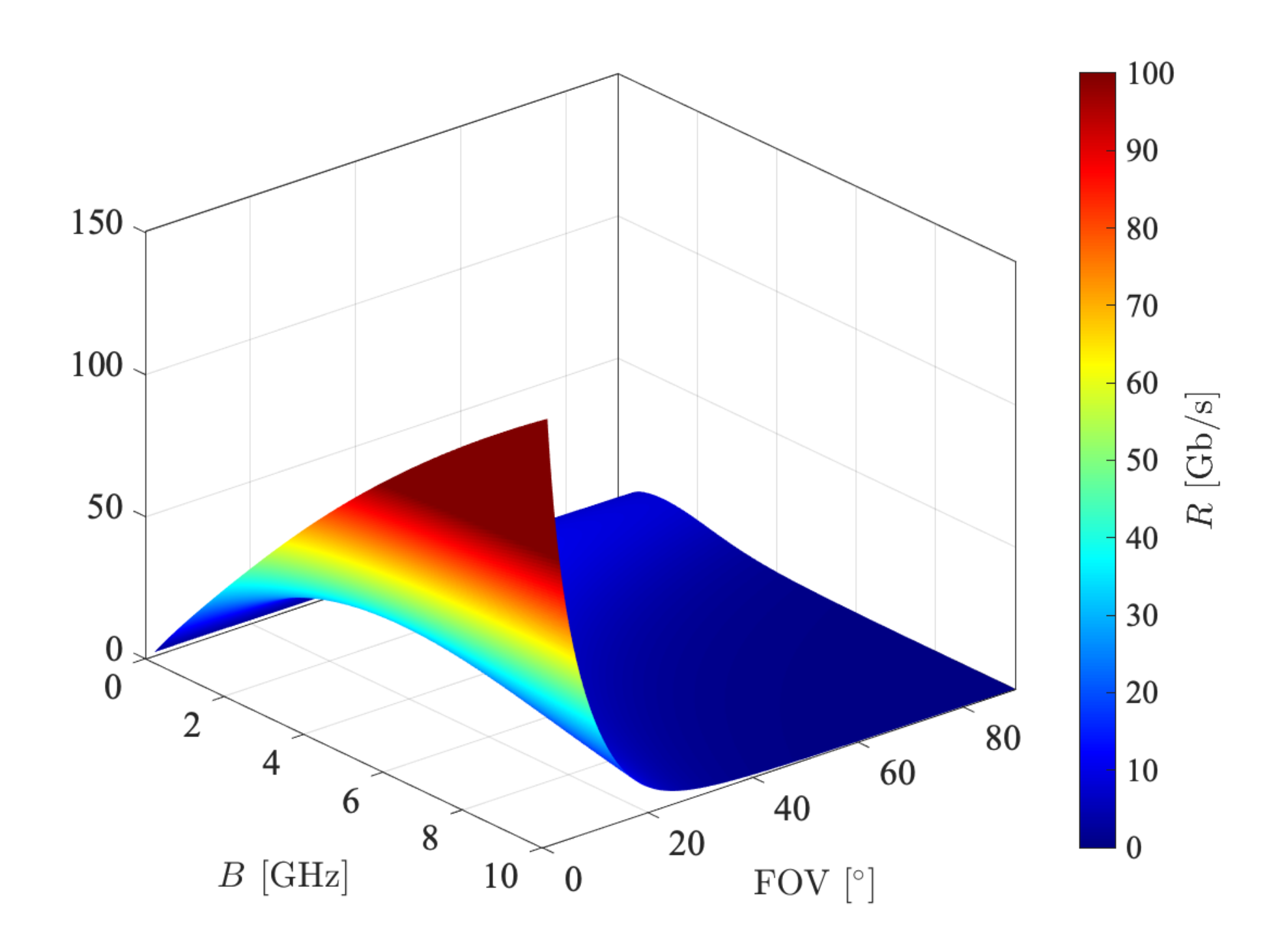}}
    \subfloat[\label{Fig:Surf_b} $L_\mathrm{ADR}(B,\mathrm{FOV})$] {\includegraphics[width=0.33\textwidth, keepaspectratio=true]{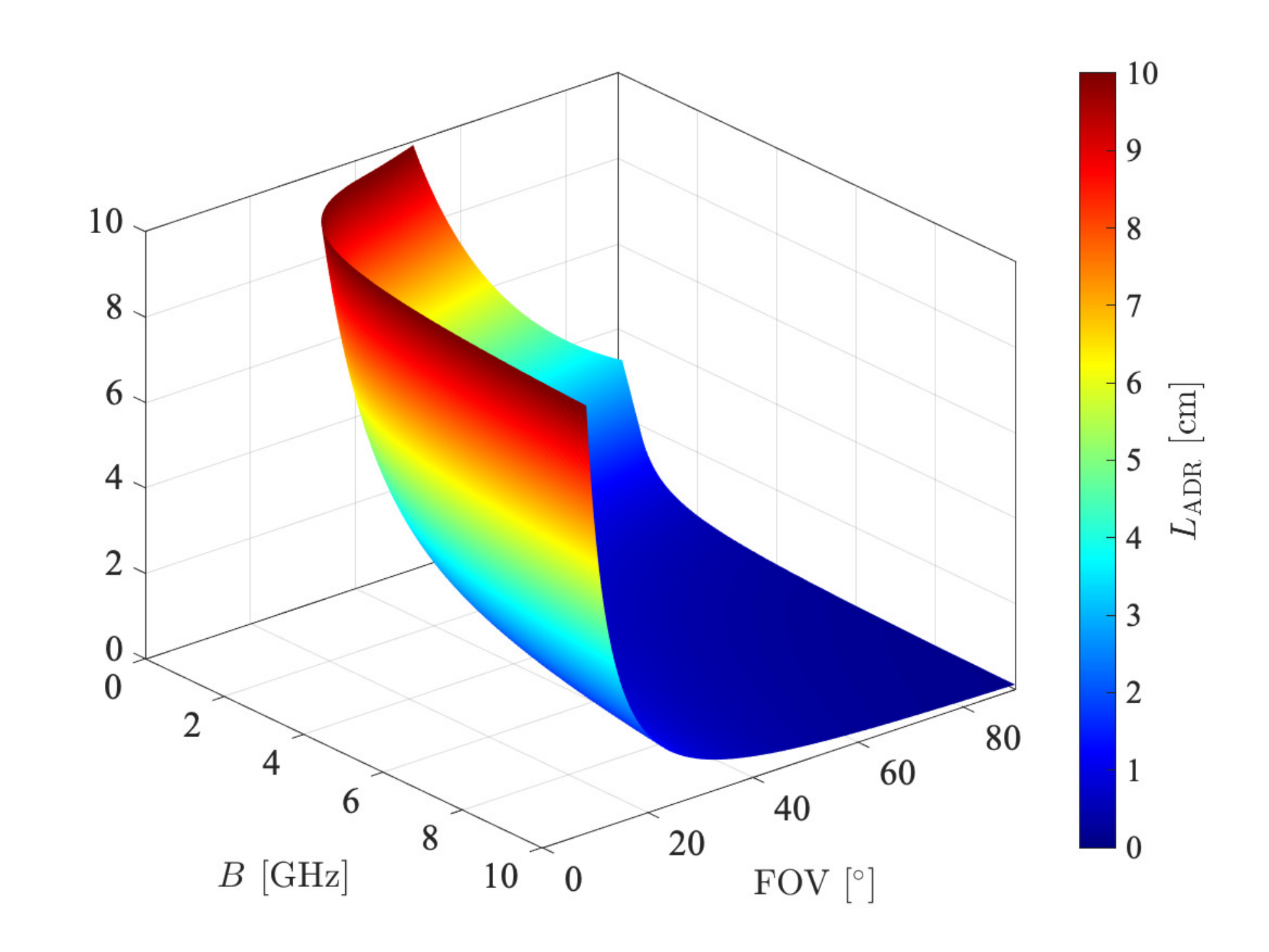}}
    \subfloat[\label{Fig:Surf_c} $A_\mathrm{ADR}(B,\mathrm{FOV})$] {\includegraphics[width=0.33\textwidth, keepaspectratio=true]{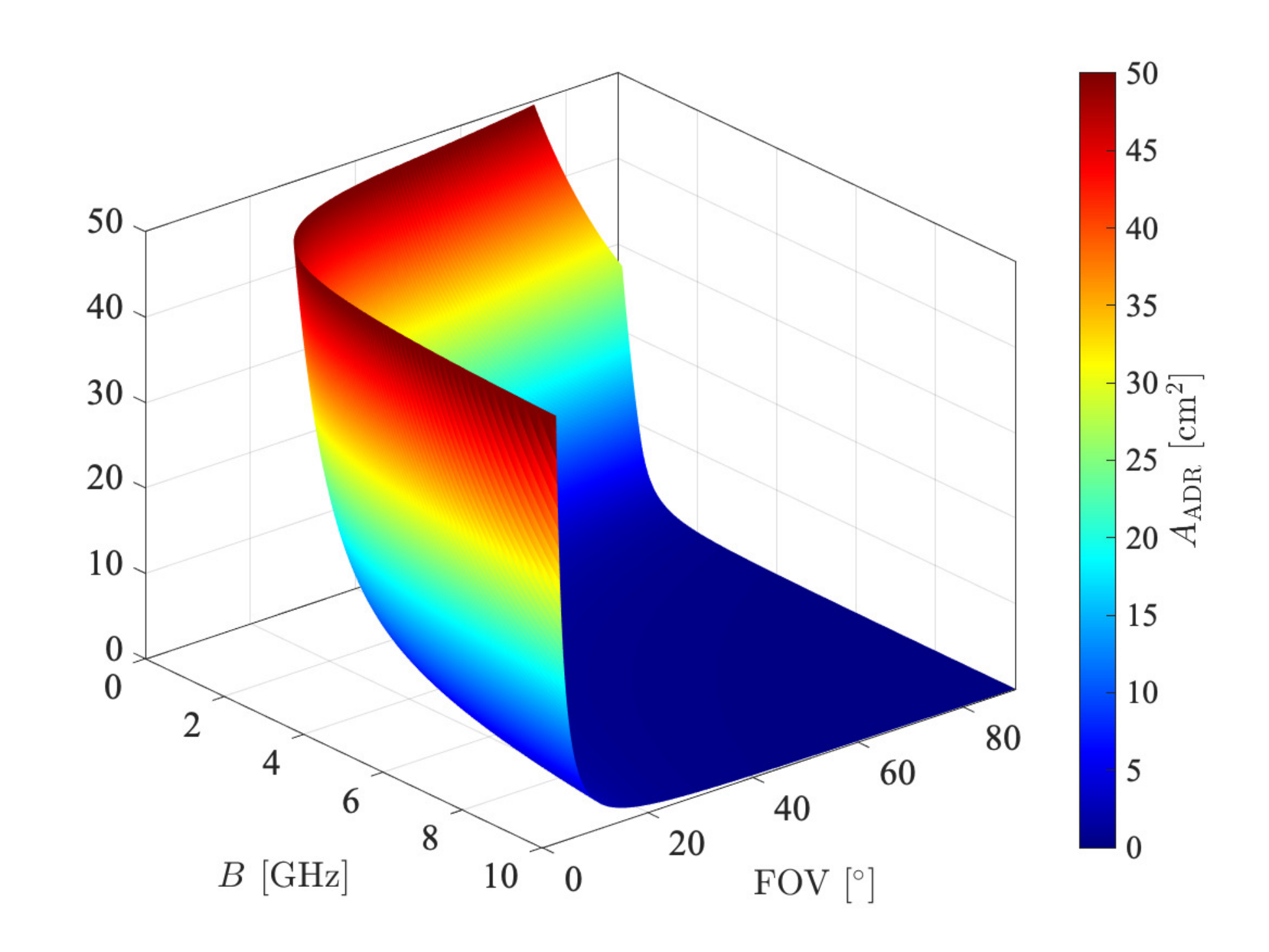}}
    \end{minipage}
    \begin{minipage}[b]{\linewidth}
    \subfloat[\label{Fig:Cont_a} $R(B,\mathrm{FOV})$] {\includegraphics[width=0.33\textwidth, keepaspectratio=true]{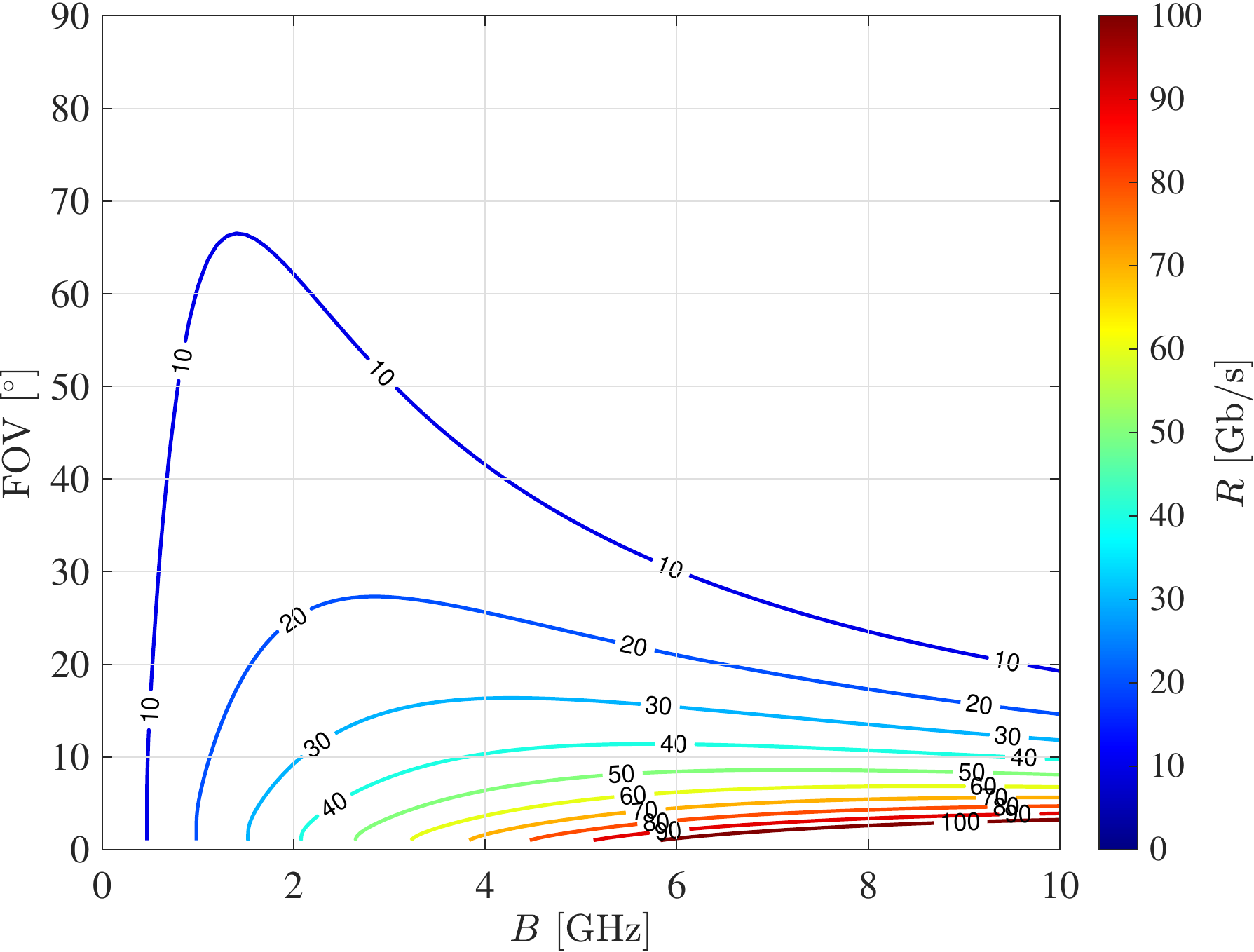}}
    \subfloat[\label{Fig:Cont_b} $L_\mathrm{ADR}(B,\mathrm{FOV})$] {\includegraphics[width=0.33\textwidth, keepaspectratio=true]{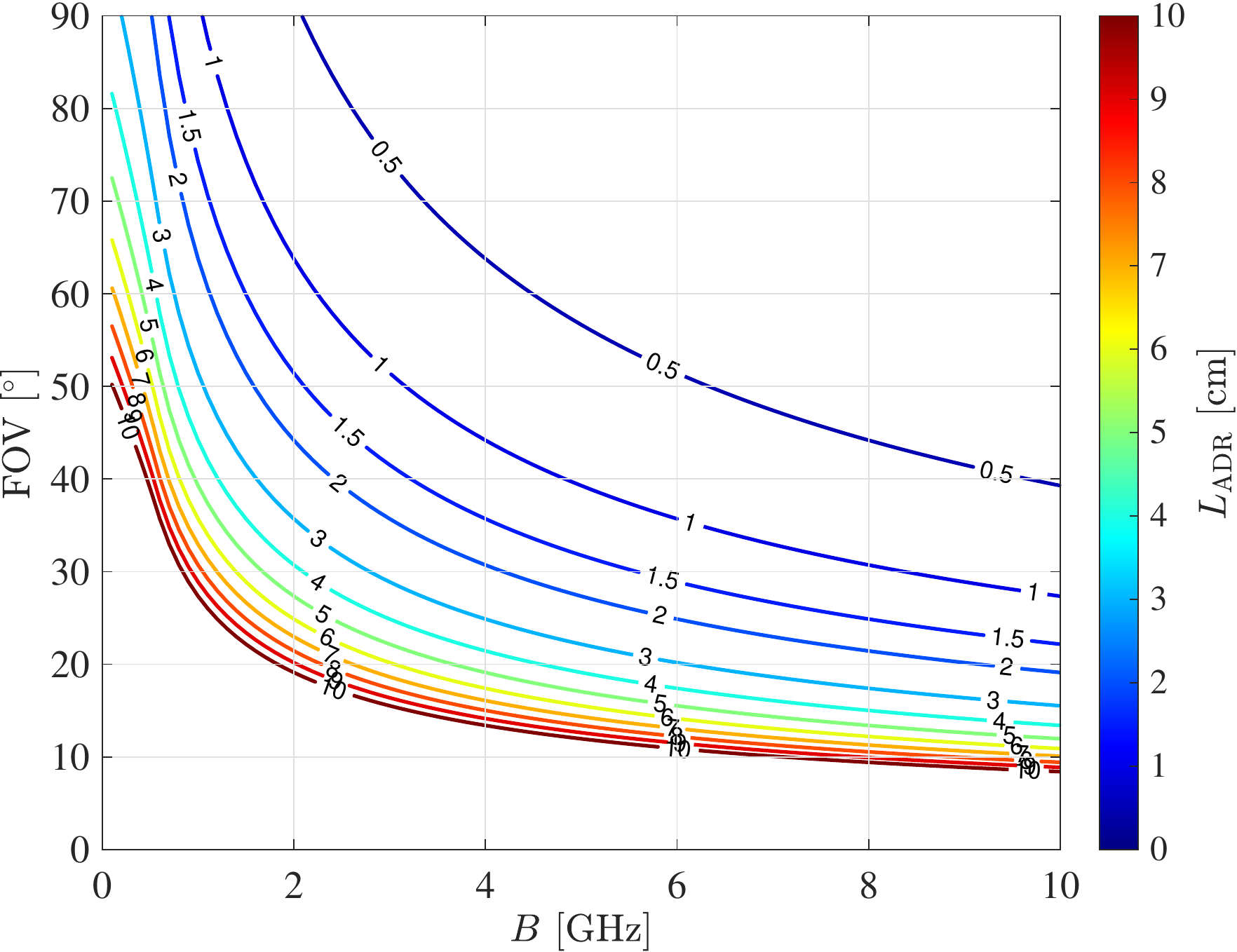}}
    \subfloat[\label{Fig:Cont_c} $A_\mathrm{ADR}(B,\mathrm{FOV})$] {\includegraphics[width=0.33\textwidth, keepaspectratio=true]{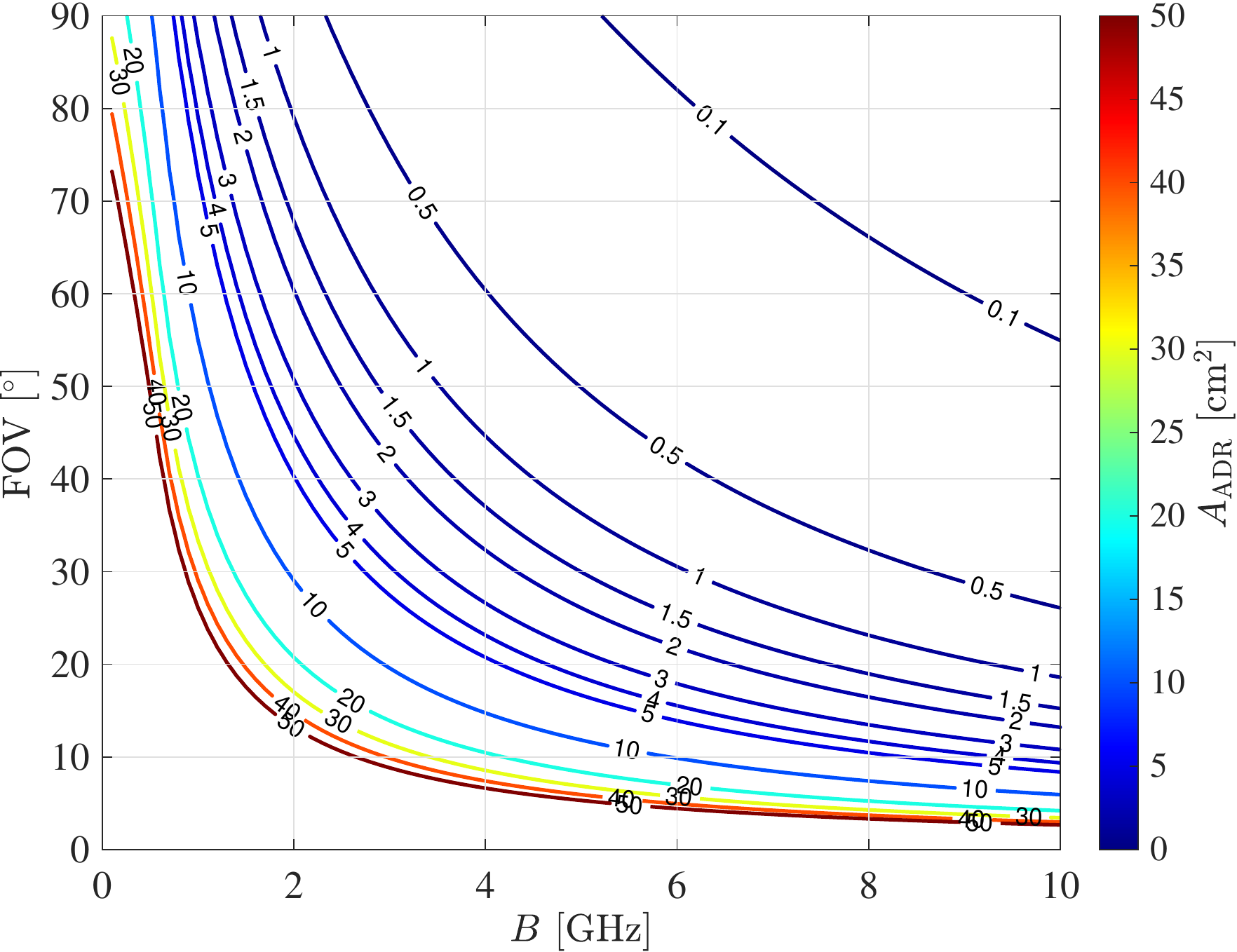}}
    \caption{The achievable data rate $R$, the overall height $L_\mathrm{ADR}$ and the total effective area $A_\mathrm{ADR}$ as a function of the bandwidth $B$ and $\mathrm{FOV}$ for Config.~2: (a)--(c) surface plots, (d)--(f) contour plots.}
    \label{Fig:Cont}
    \end{minipage}
    \vspace{-20pt}
\end{figure*}

%---------------------------------------------------------------------------------------------------
\subsection{Achievable Data Rate and Overall ADR Dimensions}
Fig.~\subref*{Fig:Surf_a}--\subref*{Fig:Surf_c} illustrate \ac{3D} surface plots of the achievable data rate $R$, the total effective area $A_\mathrm{ADR}$ and the overall height of the receiver $L_\mathrm{ADR}$, respectively, as a function of $B$ and $\mathrm{FOV}$ for a $1$-tier ADR with Config.~2, which uses $4\times4$ PD arrays (i.e., $N_\mathrm{PD} = 16$). The results are obtained by evaluating $R$, $L_\mathrm{ADR}$ and $A_\mathrm{ADR}$ based on \eqref{Eq:Rate}, \eqref{Eq:L_ADR} and \eqref{Eq:A_ADR}, as per the four-step procedure laid out in Section~\ref{Sec:4}. To interpret the information contained in these \ac{3D} surfaces, the corresponding contour plots\footnote{A contour plot is a graphical representation of a \ac{3D} surface on a \ac{2D} plane. To clarify, consider a dependent variable $Z$ as a function of two independent variables $X$ and $Y$. A contour line on the $X$-$Y$ plane interconnects all the $(X,Y)$ coordinates corresponding to the same value of $Z$. Alternatively, contour lines can be viewed as intersections of the 3D surface with planes parallel to the $X$-$Y$ plane for different values of $Z$, which are projected onto the $X$-$Y$ plane.} are presented in Fig.~\subref*{Fig:Cont_a}--\subref*{Fig:Cont_c}. From Fig.~\subref*{Fig:Cont_a}, it can be observed how the achievable data rate evolves, indicating the point that attaining higher data rates necessitates lower \acp{FOV}, which in fact underlines a tradeoff between the achievable rate and \ac{FOV}. For example, at $R=10$~Gb/s, $\mathrm{FOV}\leq65^\circ$ is realisable, whereas when $R=20$~Gb/s, the resultant \ac{FOV} is no greater than $28^\circ$. By comparison, the constant-height contours specify regions towards the top-right corner of the $B$-$\mathrm{FOV}$ plane where $L_\mathrm{ADR}$ is less than the given values, as shown in Fig.~\subref*{Fig:Cont_b}. The constant-area contours follow a similar trend, as shown in Fig.~\subref*{Fig:Cont_c}. Comparing this trend with that in Fig.~\subref*{Fig:Cont_a} reveals another tradeoff between scaling down the receiver dimensions and pushing up the achievable data rate. This tradeoff is elucidated via the following discussion on the rate maximisation under joint FOV and dimensions constraints.

\begin{figure}[t!]
    \centering
    \subfloat[\label{Fig:R_B_FixedFOV} ]{\includegraphics[width=0.33\textwidth, keepaspectratio=true]{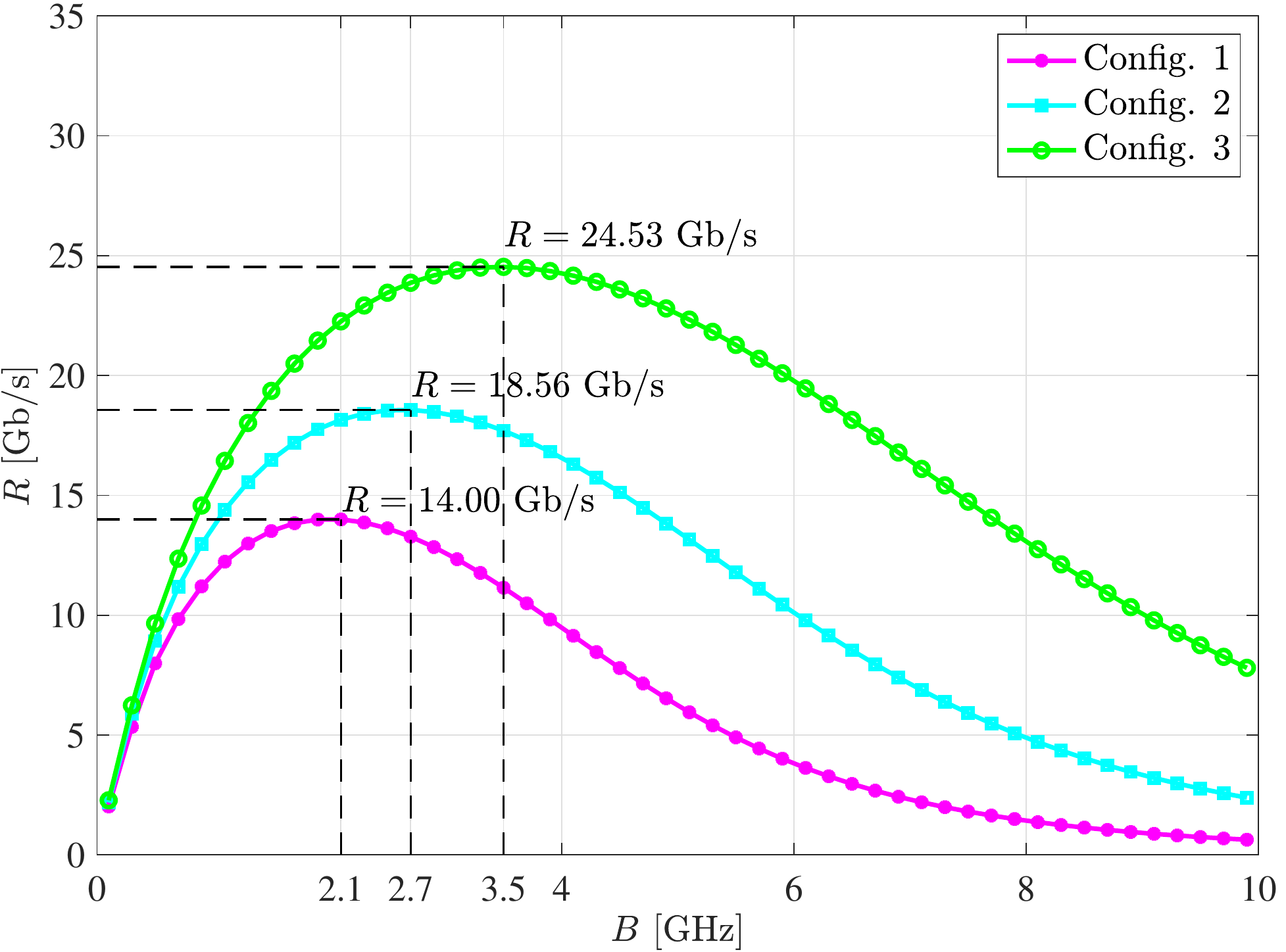}}\hfill
    \subfloat[\label{Fig:L_B_FixedFOV} ]{\includegraphics[width=0.33\textwidth, keepaspectratio=true]{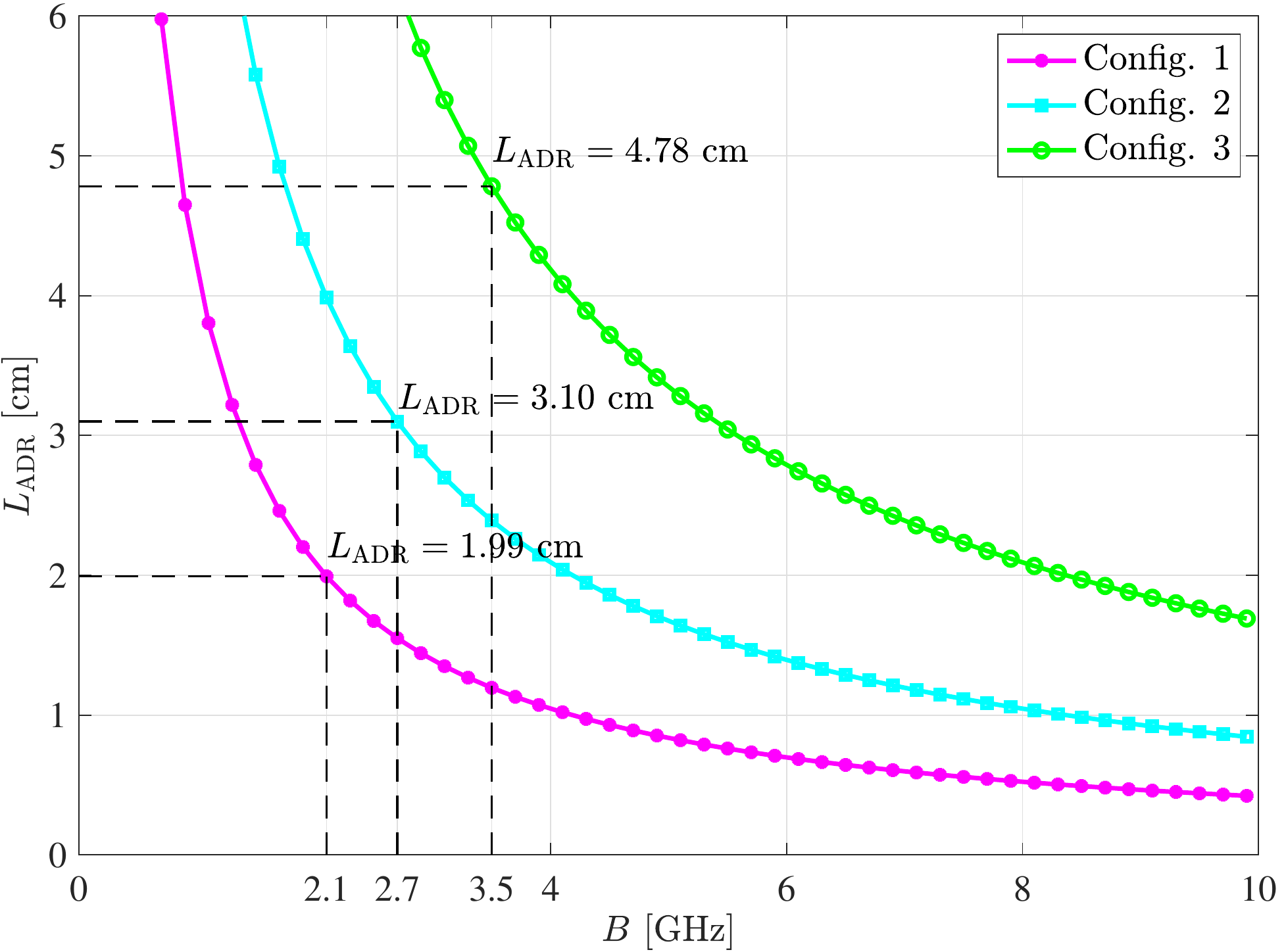}}\hfill
    \subfloat[\label{Fig:A_B_FixedFOV} ]{\includegraphics[width=0.33\textwidth, keepaspectratio=true]{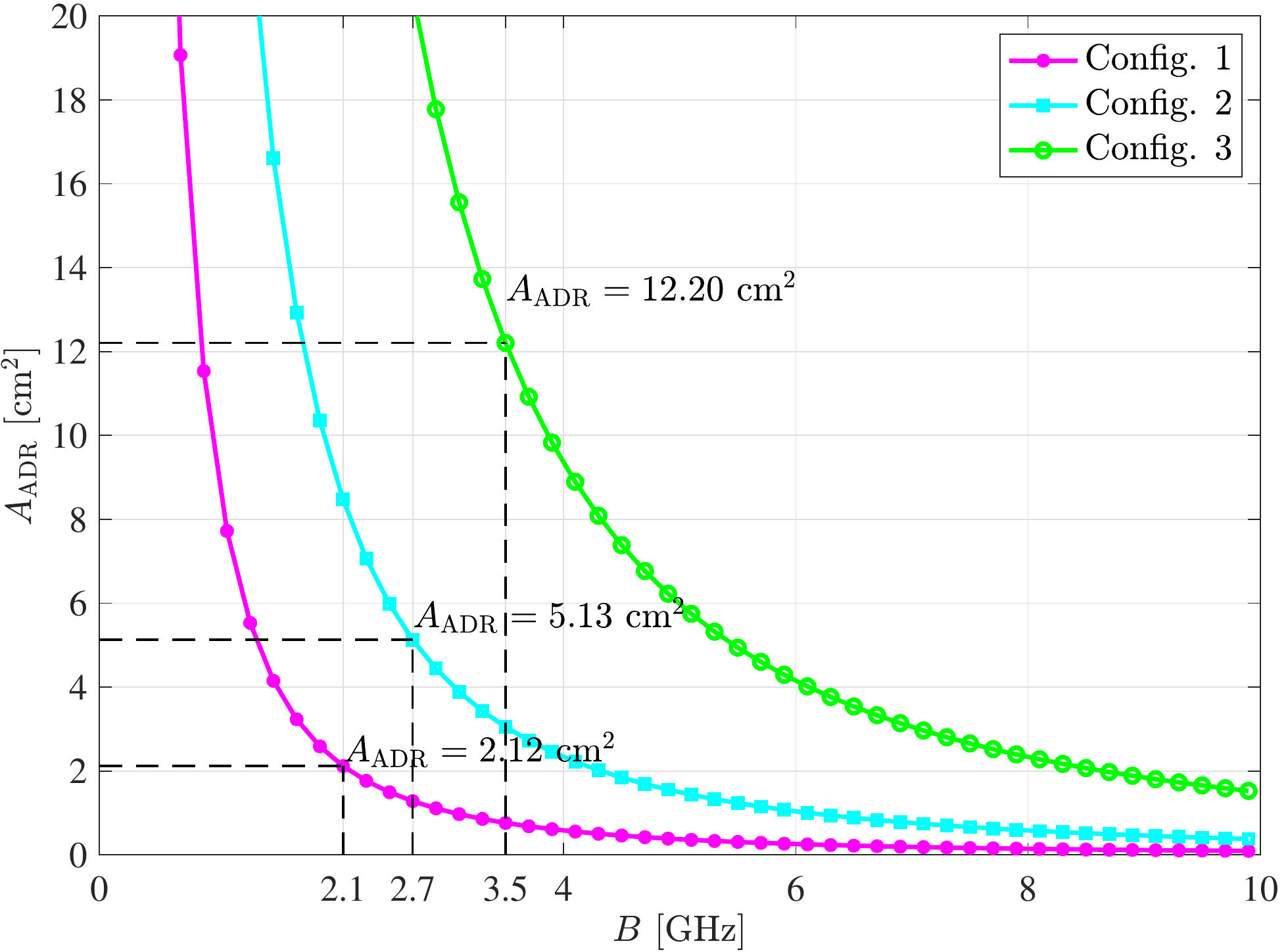}}
    \caption{The achievable data rate $R$, the overall height $L_\mathrm{ADR}$ and the total effective area $A_\mathrm{ADR}$ as a function of the bandwidth $B$ for $\mathrm{FOV}=30^\circ$ for Configs.~1, 2 and 3 based on $2\times2$, $4\times4$ and $8\times8$ \ac{PD} arrays, respectively.}
    \label{Fig:SurfCrossSec_FixedFOV}
    \vspace{-20pt}
\end{figure}

Another perspective to look at the results in Figs.~\subref*{Fig:Surf_a}--\subref*{Fig:Surf_c} is through vertical cross sections of the surface plots for a fixed value of $\mathrm{FOV}$. In Fig.~\ref{Fig:SurfCrossSec_FixedFOV}, $R$, $L_\mathrm{ADR}$ and $A_\mathrm{ADR}$ are plotted against $B$ for $\mathrm{FOV}=30^\circ$, based on \ac{ADR} Configs.~1, 2 and 3 with $2\times2$, $4\times4$ and $8\times8$ \ac{PD} arrays (i.e., $N_\mathrm{PD} = 4,16,64$). From Fig.~\subref*{Fig:R_B_FixedFOV}, it can be seen that for each array size, the achievable data rate has a peak in $2<B<4$~GHz; occurring at $B=2.1,2.7,3.5$~GHz for Configs.~1, 2 and 3, respectively. Increasing the bandwidth beyond these values does not help to improve the achievable data rate, but rather brings about a performance degradation. This phenomenon happens as a result of the underlying tradeoffs governing the \ac{ADR} design. For a given \ac{FOV}, the optical gain of \acp{CPC} and hence the ratio between the entrance and exit aperture areas is fixed in line with the gain-\ac{FOV} tradeoff. The bandwidth is increased by reducing the \ac{PD} area according to the area-bandwidth tradeoff, which in turn lessens the total collection area of the \ac{PD} array for a given array size. Consequently, there is a threshold at which the loss in the power collection efficiency starts to compromise the advantage of a higher bandwidth. Furthermore, it can be observed that the data rate requirement of $R\geq 10$~Gb/s is met by using a $2\times2$ array. The \ac{PD} array size can be increased as a means to improve the overall rate performance, as shown in Fig.~\subref*{Fig:R_B_FixedFOV}, at the expense of enlarging the overall height of the receiver as well as the total effective area, as shown in Figs.~\subref*{Fig:L_B_FixedFOV} and \subref*{Fig:A_B_FixedFOV}. The triple $(R,L_\mathrm{ADR},A_\mathrm{ADR})=(14.00~\text{Gb/s},1.99~\text{cm},2.12~\text{cm}^2)$ corresponds to Config.~1 at $B=2.1$~GHz, which is the best operating point for this configuration in terms of the data rate performance. In this case, it turns out that Config.~1 offers the appropriate \ac{ADR} design taking account of the hardware complexity, as it uses $4$ \acp{PD} per receiver element.
%---------------------------------------------------------------------------------------------------
\subsection{Design Spaces with Rate and FOV Requirements: Single-Tier ADR vs. Multi-Tier ADR}
Fig.~\ref{Fig:Cont_Multi} illustrates the contour plots of the achievable data rate $R$. Figs.~\subref*{Fig:Cont_Multi_a}--\subref*{Fig:Cont_Multi_c} correspond to three $1$-tier ADR designs composed of $7$ CPCs in conjunction with $2\times2$, $4\times4$, and $8\times8$ PD arrays, respectively (i.e., $N_\mathrm{PD}=4,16,64$). Comparing Configs.~1, 2 and 3 confirms that a larger PD array size leads to a higher overall rate performance. To exemplify the rate maximisation under a minimum FOV constraint only according to the optimisation problem in \eqref{Eq:Max_1}, suppose a $\mathrm{FOV}\geq30^{\circ}$ is required. The contour plots in Fig.~\ref{Fig:Cont_Multi} show that the achievable rate is maximised at the point where its curve is tangent to the line $\mathrm{FOV}=30^\circ$, as indicated by dashed black curves. The number displayed next to each one represents the maximum value of $R$ for $\mathrm{FOV} \geq 30^\circ$. It can be observed that the maximum data rate of the $1$-tier ADR design reaches values greater than $10$~Gb/s by means of $2\times2$ , $4\times4$ and $8\times8$ PD arrays, respectively.
%---------------------------------------------------------------------------------------------------
There are three different multi-tier ADR configurations labelled as Configs.~4, 5 and 6 in Table~\ref{Tab:2}. Configs.~4 and 5 are $2$-tier ADRs with $76$ and $304$ PDs in total, respectively, while Config.~6 is a $3$-tier ADR having $148$ PDs. The contour plots of $R$ for these ADRs are shown and compared in Fig.~\ref{Fig:Cont_Multi}. Config.~3, which is a $1$-tier ADR with $448$ PDs, is also included as a benchmark for comparison. Suppose the design requirements are $R \geq 10$~Gb/s and $\mathrm{FOV}\geq 30^\circ$, in line with the desired specifications for \ac{6G} optical wireless networks \cite{ESarbazi2020tb}. These inequalities jointly define a design space on the $B$-$\mathrm{FOV}$ plane as highlighted in a grey shade in Fig.~\ref{Fig:Cont_Multi}. The design space brings forth the flexibility to choose the PD array parameters while fulfilling the design objectives.

\begin{figure}[t!]
    \captionsetup[subfigure]{justification=centering}
    \centering
    \subfloat[\label{Fig:Cont_Multi_a} Config. 1 \\($N_\mathrm{tier}=1$ and $2\times2$ PD array)] {\includegraphics[width=0.33\textwidth, keepaspectratio=true]{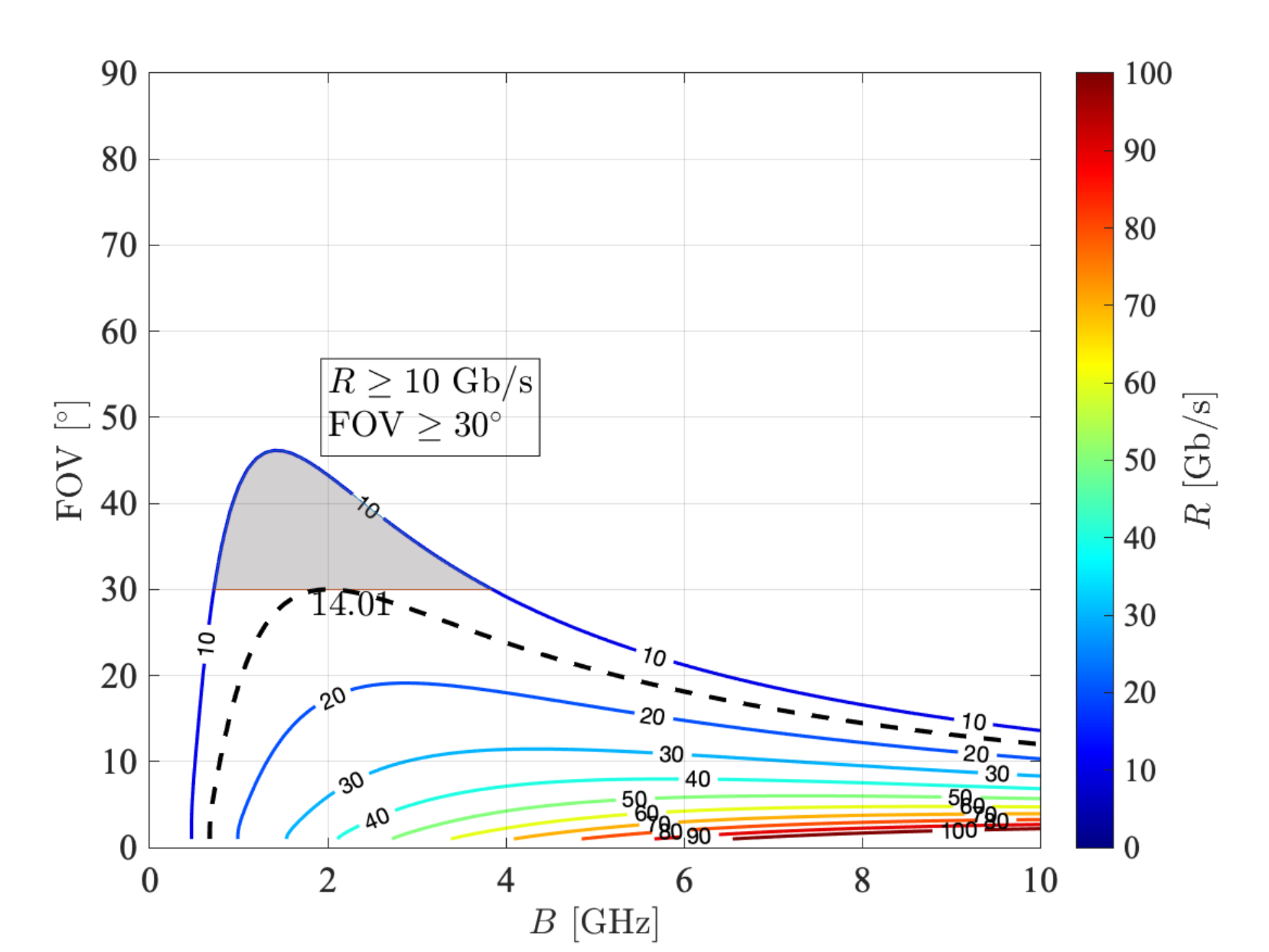}}
    \subfloat[\label{Fig:Cont_Multi_b} Config. 2 \\($N_\mathrm{tier}=1$ and $4\times4$ PD array)] {\includegraphics[width=0.33\textwidth, keepaspectratio=true]{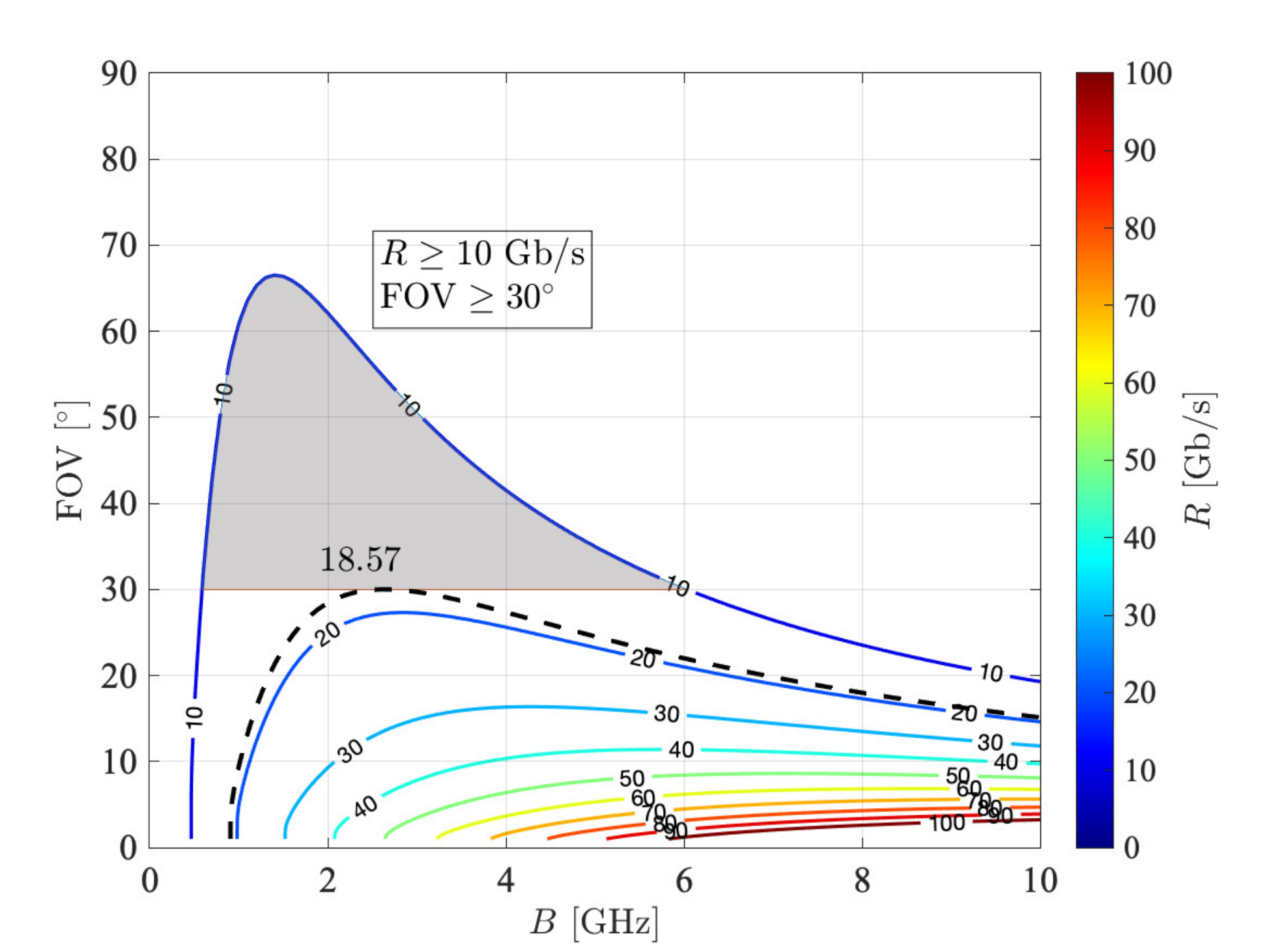}}
    \subfloat[\label{Fig:Cont_Multi_c} Config. 3 \\($N_\mathrm{tier}=1$ and $8\times8$ PD array)]{\includegraphics[width=0.33\textwidth, keepaspectratio=true]{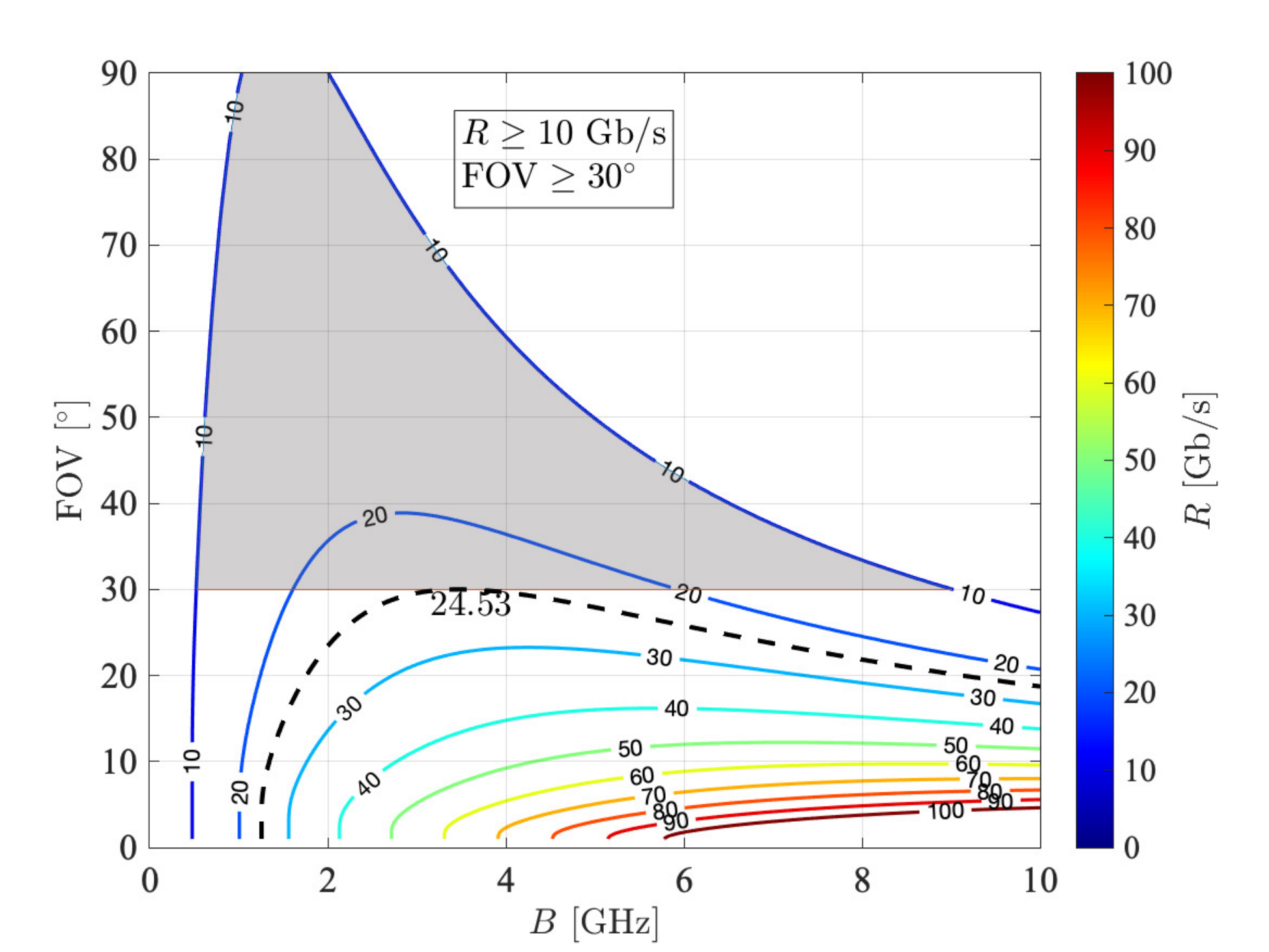}}\\
    \subfloat[\label{Fig:Cont_Multi_d} Config. 4 \\($N_\mathrm{tier}=2$ and $2\times2$ PD array)]{\includegraphics[width=0.33\textwidth, keepaspectratio=true]{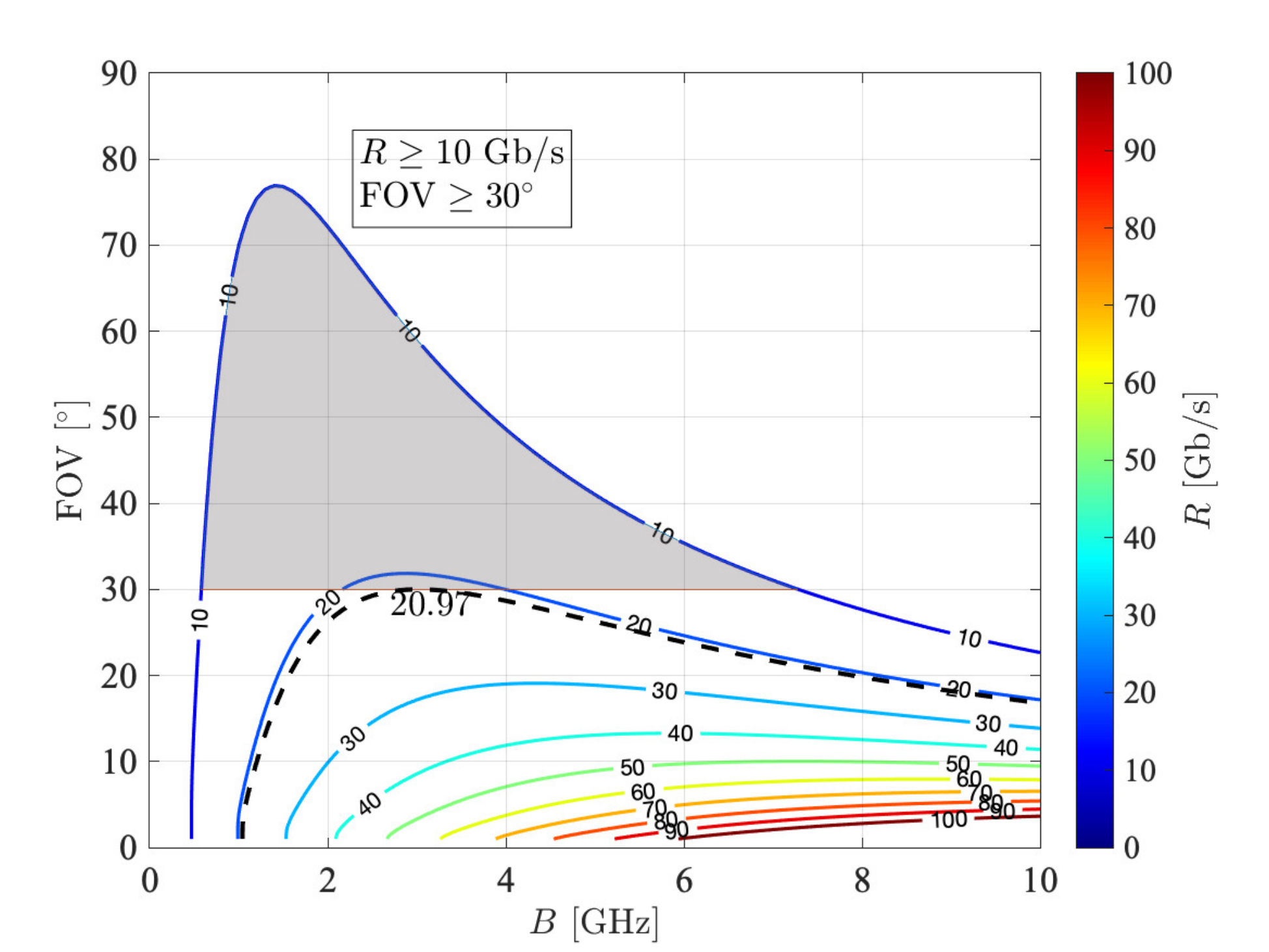}}
    \subfloat[\label{Fig:Cont_Multi_e} Config. 5 \\($N_\mathrm{tier}=2$ and $4\times4$ PD array)]{\includegraphics[width=0.33\textwidth, keepaspectratio=true]{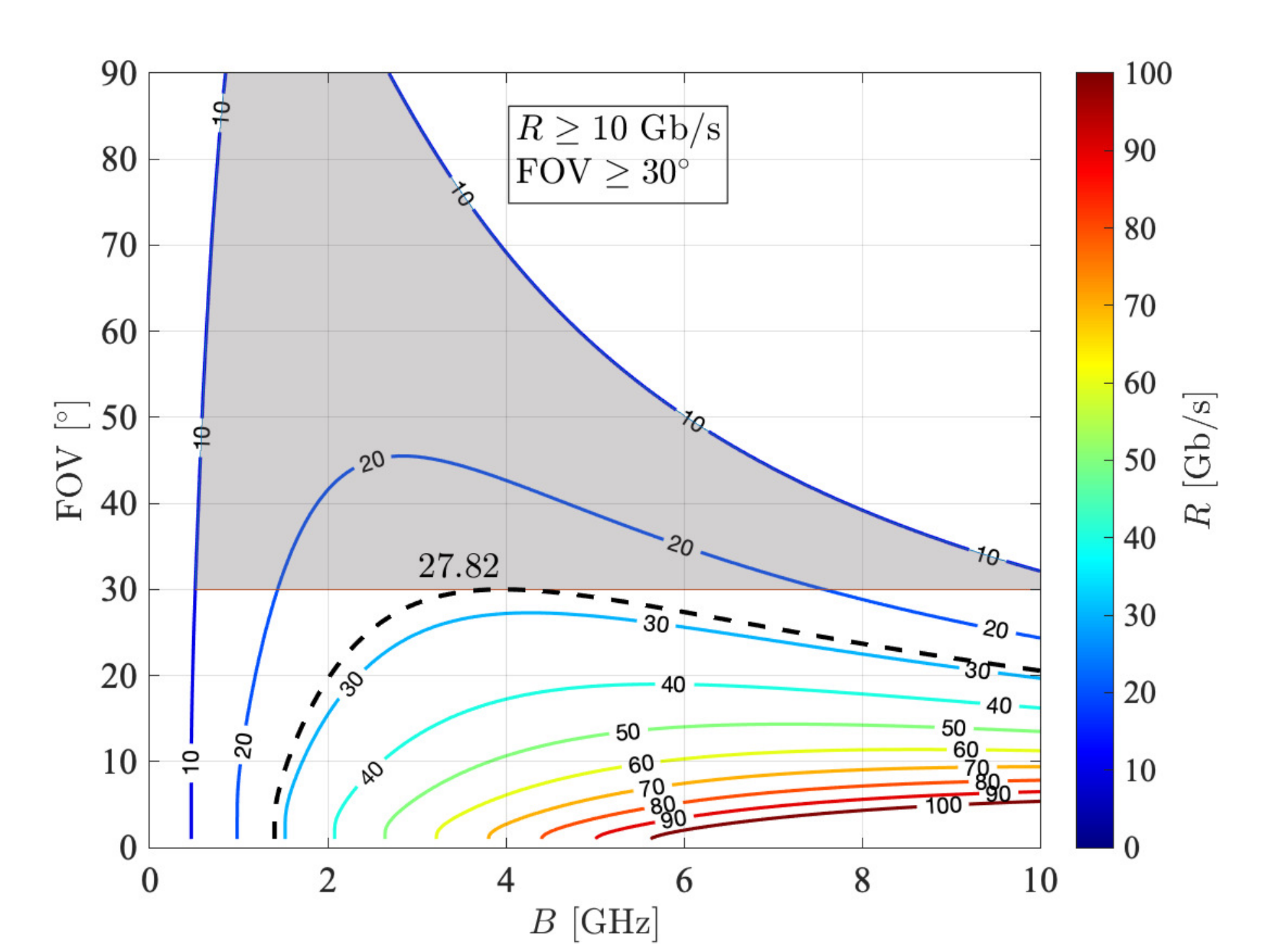}}
    \subfloat[\label{Fig:Cont_Multi_f} Config. 6 \\($N_\mathrm{tier}=3$ and $2\times2$ PD array)]{\includegraphics[width=0.33\textwidth, keepaspectratio=true]{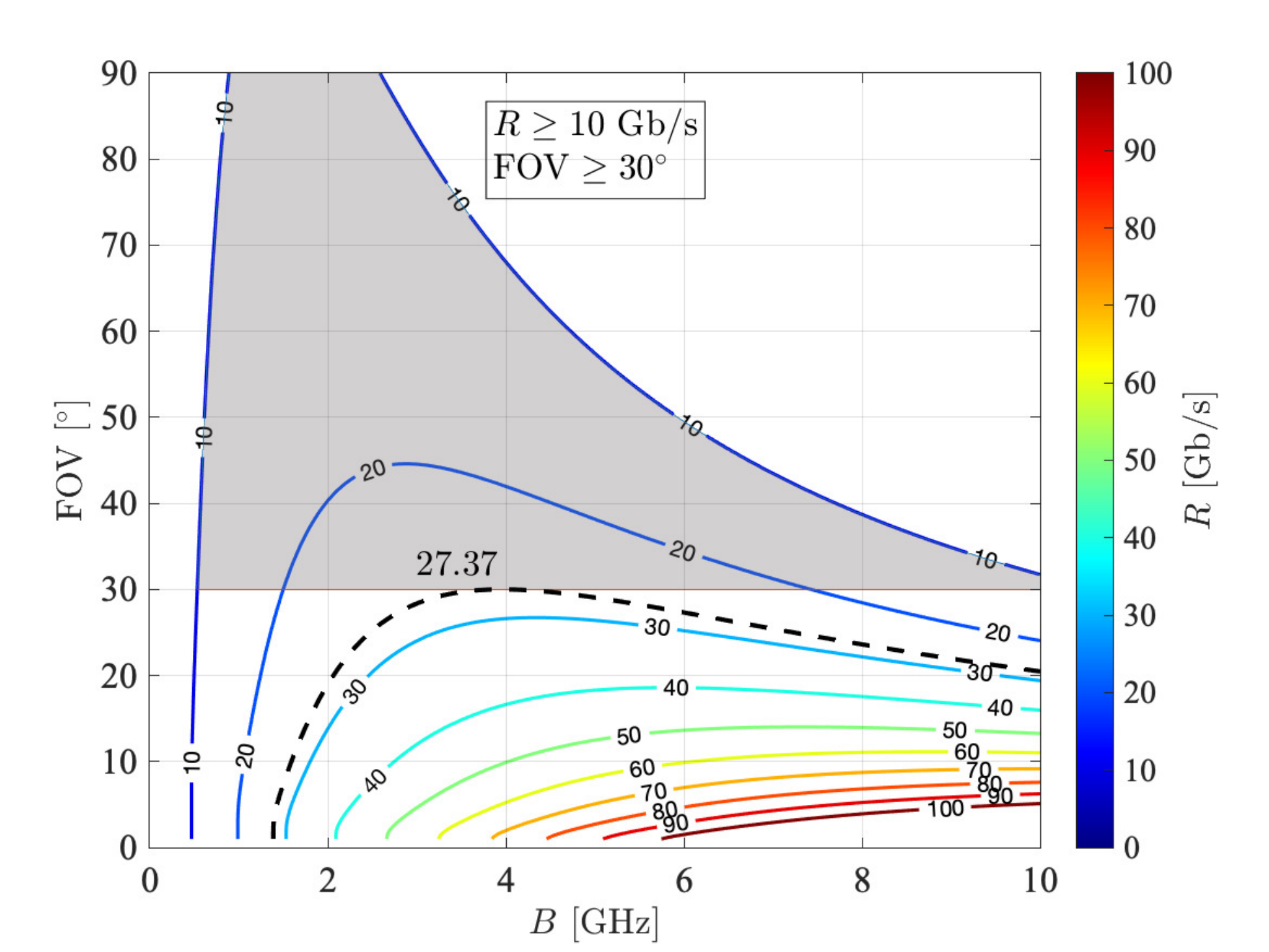}}
    \caption{Contour plots of the achievable data rate $R$ as a function of the bandwidth $B$ and $\mathrm{FOV}$ for Configs.~1--6.}
    \label{Fig:Cont_Multi}
    \vspace{-20pt}
\end{figure}

Comparing $2$-tier ADR Configs.~4 and 5, as shown in Figs.~\subref*{Fig:Cont_Multi_d} and \subref*{Fig:Cont_Multi_e}, it can be observed that for the same number of tiers, the design space expands when the PD array size and hence the total number of PDs used for each receiver element increases. Besides, comparing $1$-tier ADR Config.~3 in Fig.~\subref*{Fig:Cont_Multi_c} with $2$-tier ADR Config.~5 in Fig.~\subref*{Fig:Cont_Multi_e} points out that the incorporation of an additional tier results in a larger design space, although a smaller PD array size is used in Config.~5. This is also evident when moving on to $3$-tier ADR Config.~6, as shown in Fig.~\subref*{Fig:Cont_Multi_f}. Therefore, in a multi-tier \ac{ADR}, increasing the number of tiers allows the use of a smaller PD array size to meet the same data rate and \ac{FOV} requirements.

%---------------------------------------------------------------------------------------------------
\subsection{Rate Maximisation Under Joint FOV and Overall Dimensions Constraints}
Fig.~\ref{Fig:FeasibleRegion2} displays various realisations of the feasible region for the optimisation problem in \eqref{Eq:Max_2}. The feasible region on the $B$-$\mathrm{FOV}$ plane is the intersection of the overlapping areas formed by the three constraints in \eqref{Eq:Max_2b}, \eqref{Eq:Max_2c} and \eqref{Eq:Max_2d}. The boundary of the area resulting from the constraint $A_{\mathrm{ADR}} \leq A_{\max}$ is determined by the nonlinear function $\mathrm{FOV}= f_A ^{-1}(B)$ in \eqref{Eq:f_L}. Similarly, the nonlinear function $\mathrm{FOV}= f_L ^{-1}(B)$ in \eqref{Eq:f_A} defines the boundary of the area due to the constraint $L_{\mathrm{ADR}} \leq L_{\max}$.
The boundary curves can have up to three intersection points within the range of interest and one or two of the boundary conditions can be dominant depending on the values of $\mathrm{FOV}_\mathrm{min}$, $A_{\max}$ and $L_{\max}$.

\begin{figure}[t!]
    \captionsetup[subfigure]{justification=centering}
    \centering
    \subfloat[\label{Fig:FeasibleRegion2_a} $\mathrm{FOV}_\mathrm{min}=20^{\circ}$, \\$L_{\max}=1$~cm, $A_{\max}=10$~cm$^2$] {\includegraphics[width=0.33\textwidth, keepaspectratio=true] {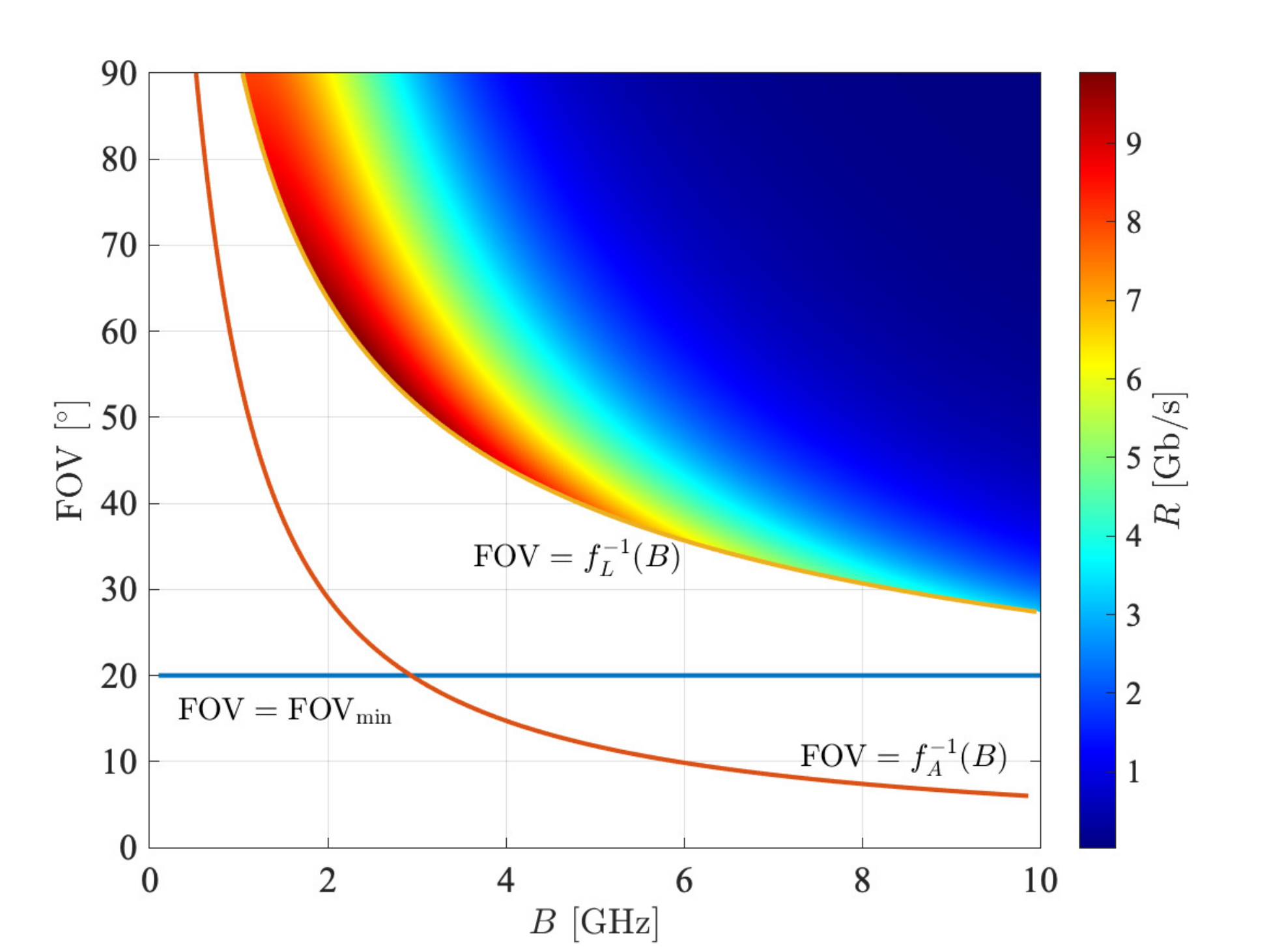}}\hfill \subfloat[\label{Fig:FeasibleRegion2_b} $\mathrm{FOV}_\mathrm{min}=15^{\circ}$, \\$L_{\max}=5$~cm, $A_{\max}=1$~cm$^2$] {\includegraphics[width=0.33\textwidth, keepaspectratio=true] {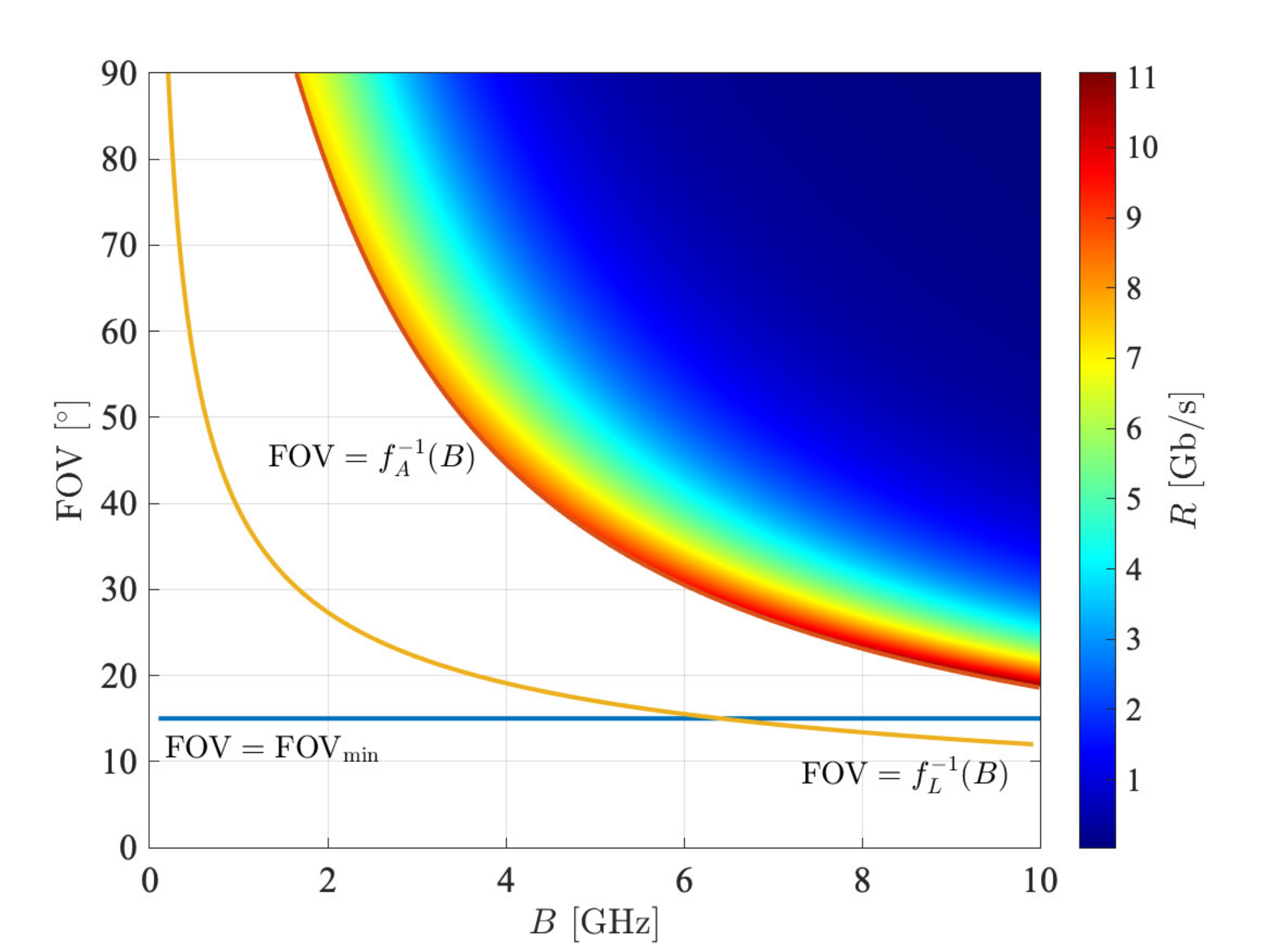}}\hfill
    \subfloat[\label{Fig:FeasibleRegion2_c} $\mathrm{FOV}_\mathrm{min}=30^{\circ}$, \\$L_{\max}=1$~cm, $A_{\max}=10$~cm$^2$] {\includegraphics[width=0.33\textwidth, keepaspectratio=true] {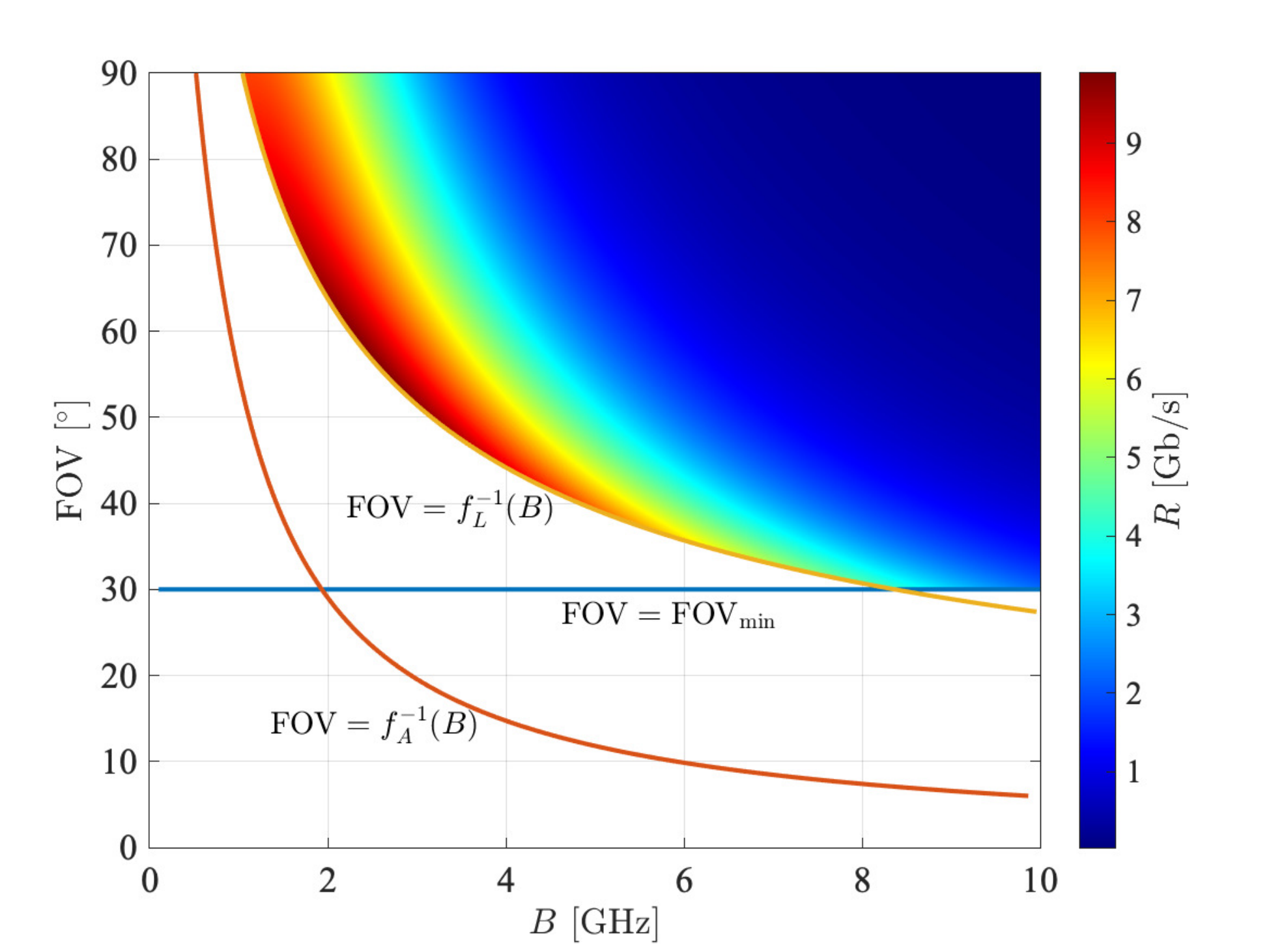}}\\
    \subfloat[\label{Fig:FeasibleRegion2_d} $\mathrm{FOV}_\mathrm{min}=30^{\circ}$, \\$L_{\max}=5$~cm, $A_{\max}=2$~cm$^2$] {\includegraphics[width=0.33\textwidth, keepaspectratio=true] {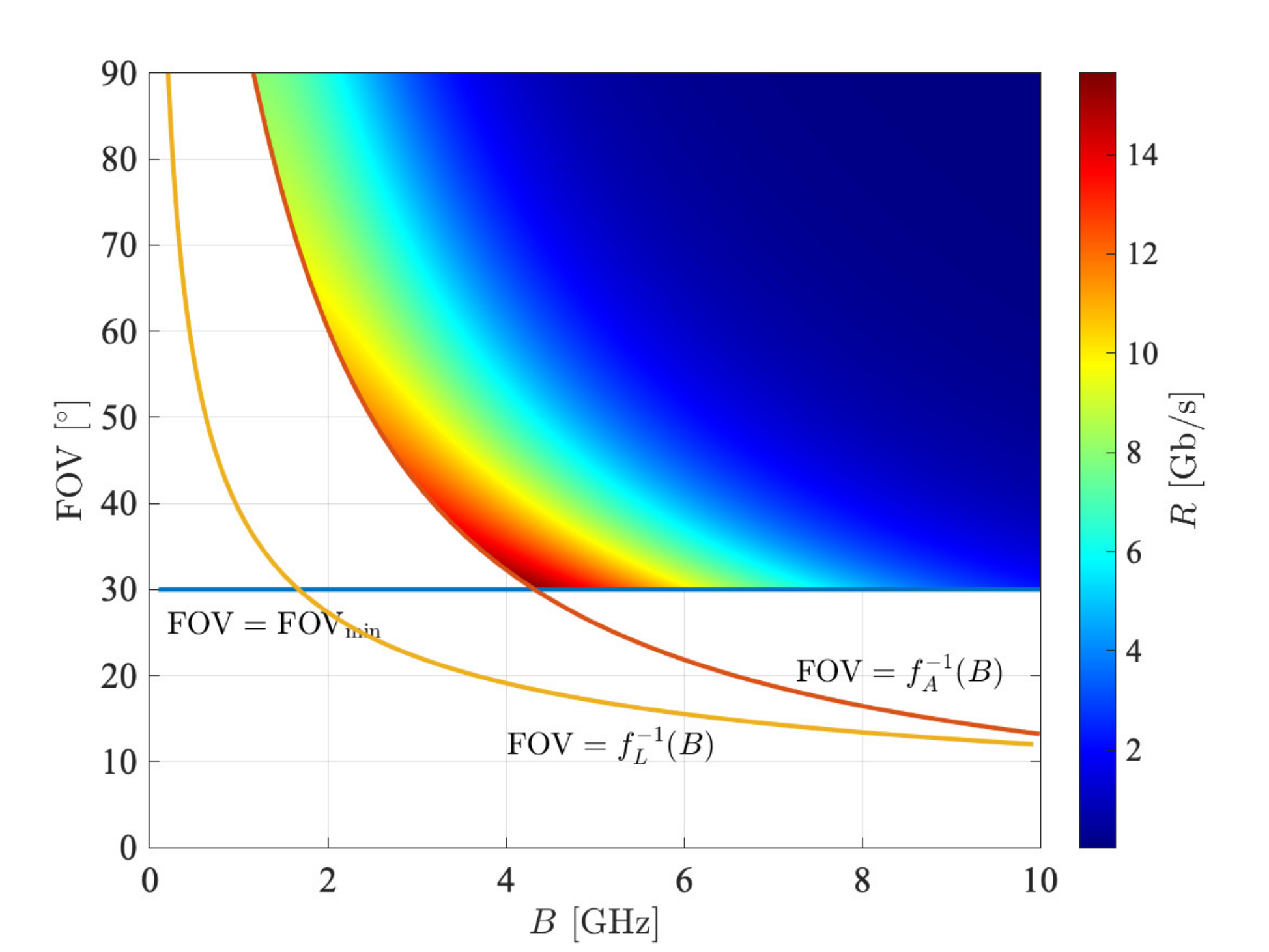}}\hfill
    \subfloat[\label{Fig:FeasibleRegion2_e} $\mathrm{FOV}_\mathrm{min}=30^{\circ}$, \\$L_{\max}=2$~cm, $A_{\max}=5$~cm$^2$] {\includegraphics[width=0.33\textwidth, keepaspectratio=true] {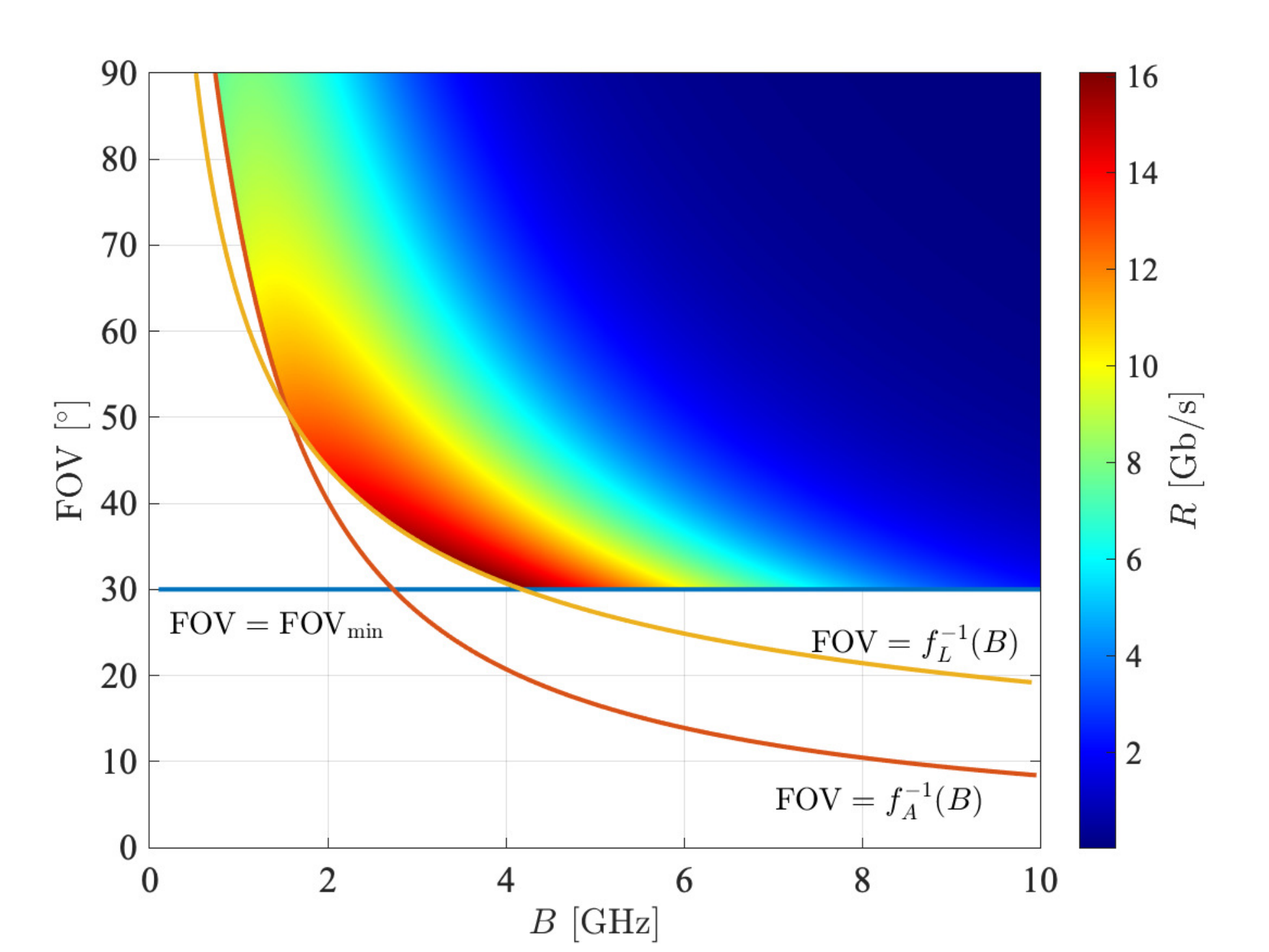}}\hfill
    \subfloat[\label{Fig:FeasibleRegion2_f} $\mathrm{FOV}_\mathrm{min}=30^{\circ}$, \\$L_{\max}=5$~cm, $A_{\max}=10$~cm$^2$] {\includegraphics[width=0.33\textwidth, keepaspectratio=true] {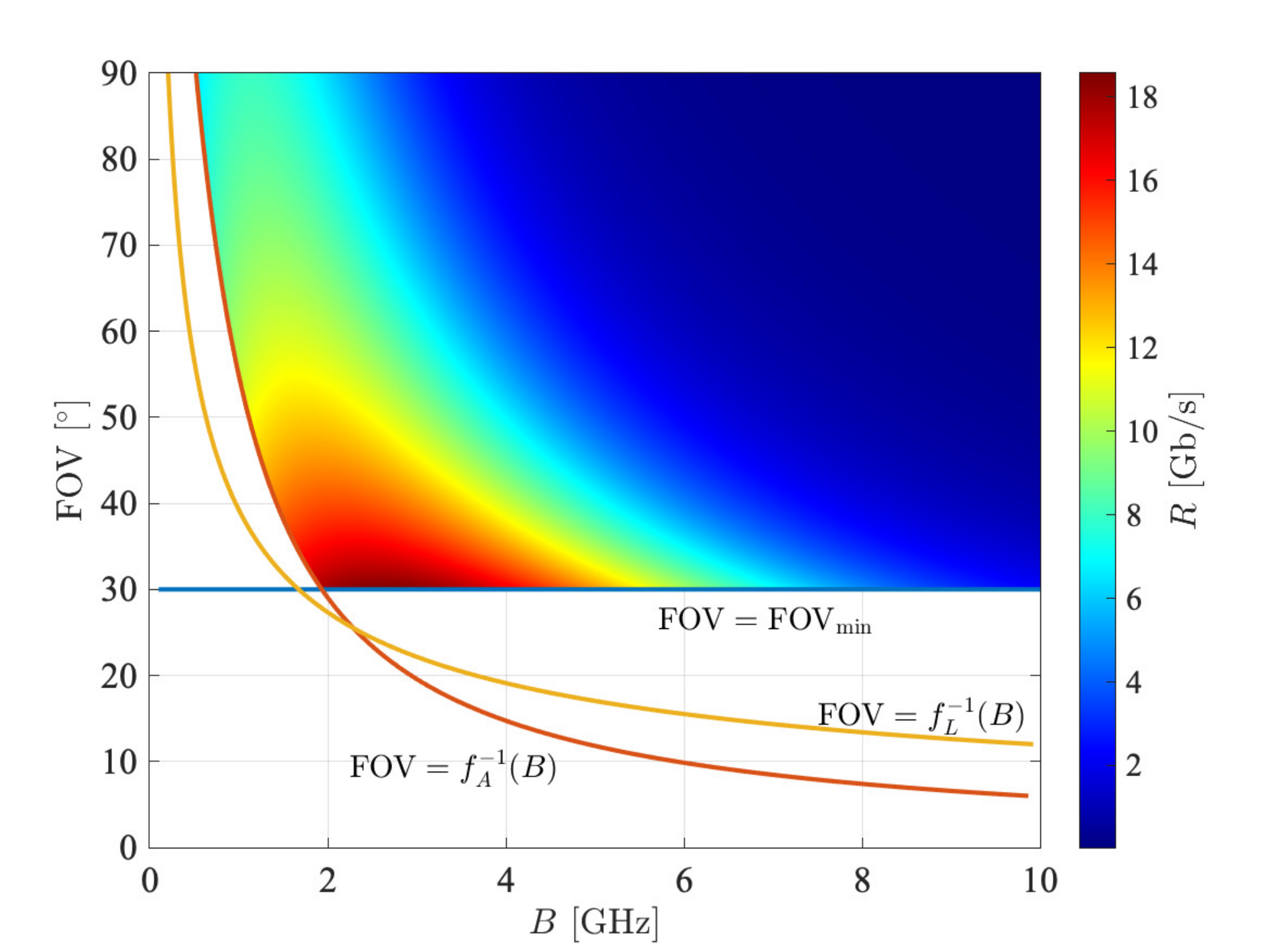}}
    \caption{Various realisations of the feasible region for the optimisation problem in \eqref{Eq:Max_2} under Config.~2.}
    \label{Fig:FeasibleRegion2}
    \vspace{-20pt}
\end{figure}

The examples provided in Fig.~\ref{Fig:FeasibleRegion2} represent six possible outcomes of the feasible region arising from different configurations of the constraints. In Figs.~\subref*{Fig:FeasibleRegion2_a} and~\subref*{Fig:FeasibleRegion2_b}, two of the three constraints are inactive and one of them solely controls the feasible region. In Figs.~\subref*{Fig:FeasibleRegion2_c} and~\subref*{Fig:FeasibleRegion2_d}, on the other hand, one of the constraints is inactive and the intersection of the other two constitutes the feasible region. Figs.~\subref*{Fig:FeasibleRegion2_e} and~\subref*{Fig:FeasibleRegion2_f} illustrate two more examples for the case where the constraints intersect pairwise at three distinct points. In this case, all the three constrains actively participate in determining the feasible region. Note that there are other possible realisations for the feasible region that are not shown in Fig.~\ref{Fig:FeasibleRegion2}. For instance, in each case, depending on how large $\mathrm{FOV}_\mathrm{min}$ is chosen, it may or may not cross the boundary curves of the other two constraints. Furthermore, it can be visually verified that the solution to the rate maximisation problem always lies on the boundary of the feasible region, as already anticipated by Lemma~\ref{Lemma:2}.

\begin{figure}[t!]
    \centering
    \subfloat[\label{Fig:Surf_Rmax_FOV15} $\mathrm{FOV}_\mathrm{min}=15^{\circ}$] {\includegraphics[width=0.48\textwidth, keepaspectratio=true] {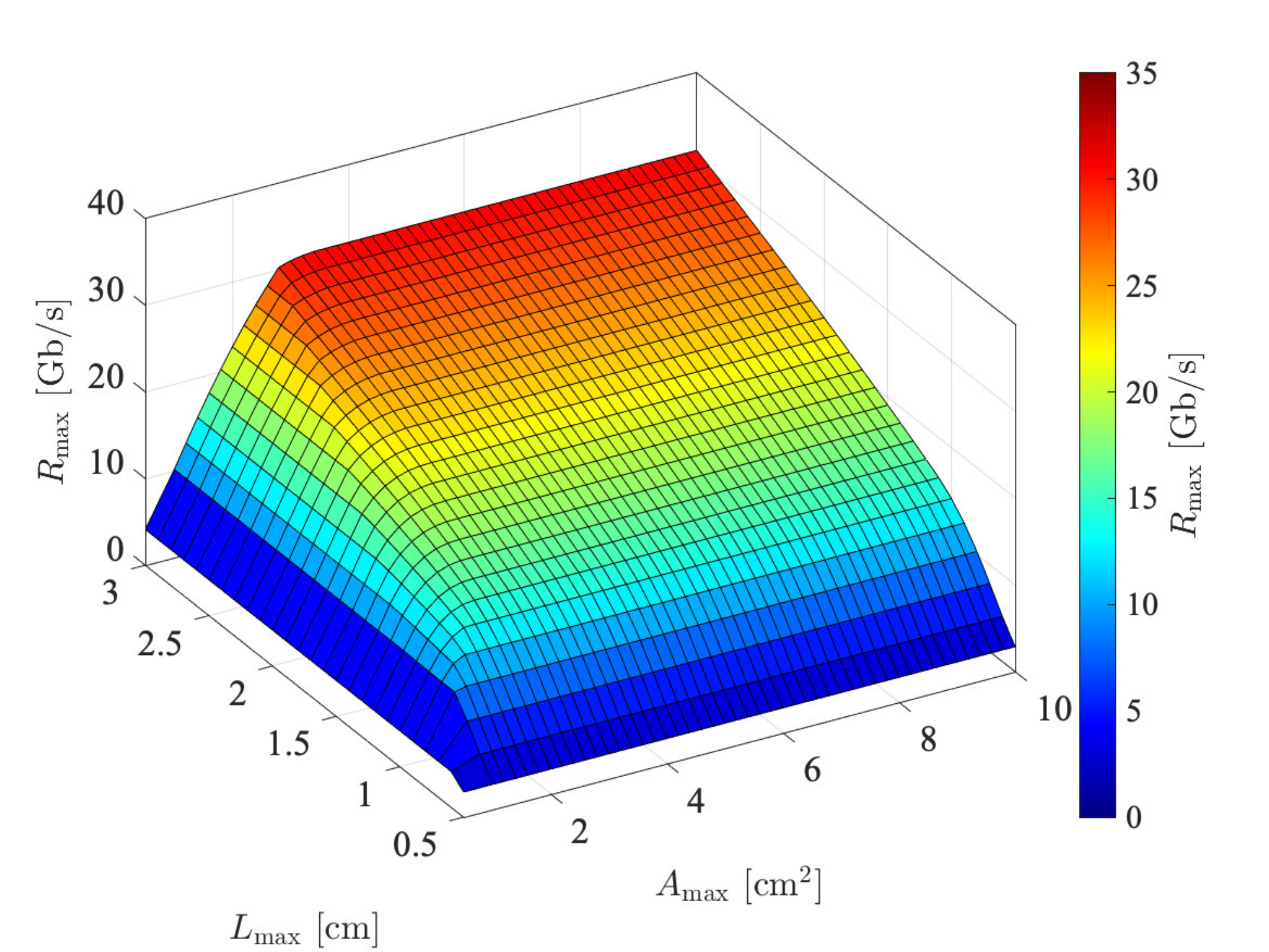}}\hfill
    \subfloat[\label{Fig:Surf_Rmax_FOV30} $\mathrm{FOV}_\mathrm{min}=30^{\circ}$] {\includegraphics[width=0.48\textwidth, keepaspectratio=true] {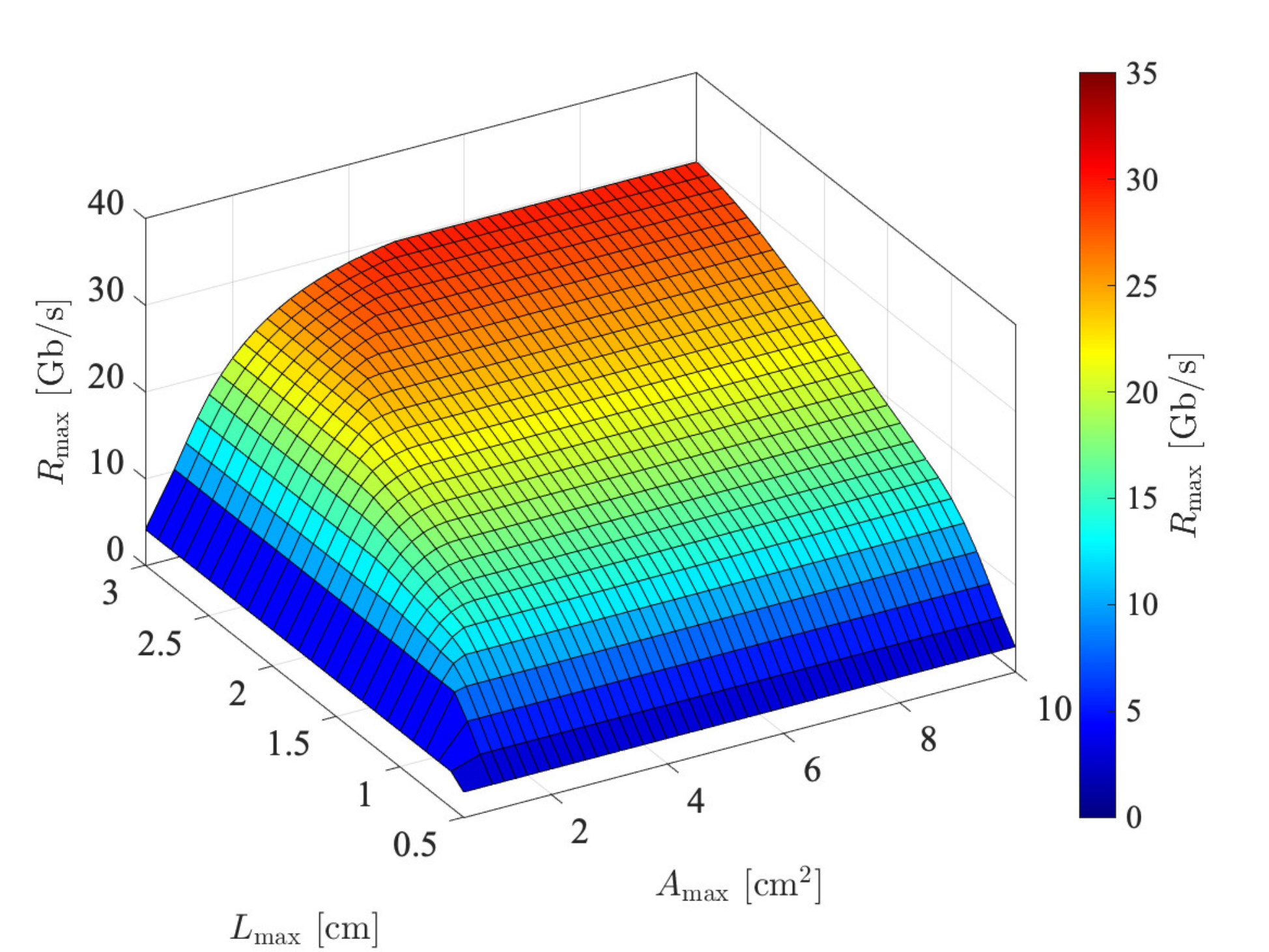}}
    \caption{The maximum achievable data rate $R_\mathrm{max}$ as a function of the dimensions $A_\mathrm{max}$ and $L_\mathrm{max}$ for Config.~3 and $w_0=10$~{\textmu}m.}
    \label{Fig:Surf_Rmax}
    \vspace{-20pt}
\end{figure}

Fig.~\ref{Fig:Surf_Rmax} shows the maximum achievable rate $R_\mathrm{max}$ for Config.~3 with $8\times8$ \ac{PD} arrays when $L_\mathrm{max}$ and $A_\mathrm{max}$ are variable, for two values of $\mathrm{FOV}_\mathrm{min}=15^\circ,30^\circ$, assuming $w_0=10$~{\textmu}m. This figure can be used to identify conditions under which one of the two constraints $L_\mathrm{ADR}\leq L_\mathrm{max}$ or $A_\mathrm{ADR}\leq A_\mathrm{max}$ becomes inactive; for instance, for a given $L_\mathrm{max} = \hat{L}$, there exists $\hat{A}$ such that $A_\mathrm{ADR}\leq A_\mathrm{max}$ is inactive for $A_\mathrm{max}\geq \hat{A}$. Based on Fig.~\subref*{Fig:Surf_Rmax_FOV15}, for $L_\mathrm{max}=2$~cm, the constraint on $A_\mathrm{ADR}$ turns out to be inactive for $A_\mathrm{max}\geq 2.5$~cm$^2$, as $R_\mathrm{max}$ is constant when $A_\mathrm{max}\geq 2.5$~cm$^2$. In addition, it can be observed that $R_\mathrm{max}$ is an increasing function of both $A_\mathrm{max}$ and $L_\mathrm{max}$. For $\mathrm{FOV}_\mathrm{min}=15^\circ$, peak data rates greater than $25$~Gb/s and up to $30$~Gb/s are achieved at the cost of large \ac{ADR} dimensions with $L_\mathrm{max}\geq2.5$~cm and $A_\mathrm{max}\geq3$~cm$^2$. When $\mathrm{FOV}_\mathrm{min}$ goes up to $30^\circ$, $R_\mathrm{max}$ does not exceed $30$~Gb/s for $L_\mathrm{max}\geq2.5$~cm and $A_\mathrm{max}\geq4$~cm$^2$.

%%%%%%%%%%%%%%%%%%%%%%%%%%%%%%%%%%%%%%%%%%%%%%%%%%%%%%%%%%%%%%%%%%%%%%%%%%%%%%%%%%%%%%%%%%%%%%%%%%%%
%%%%%%%%%%%%%%%%%%%%%%%%%%%%%%%%%%%%%%%%%%%%%%%%%%%%%%%%%%%%%%%%%%%%%%%%%%%%%%%%%%%%%%%%%%%%%%%%%%%%
\section{Compact ADR Design Using CPC Length Truncation} \label{Sec:6}
Although \acp{CPC} are superior to other types of concentrators in terms of optical gain, the main drawback of \acp{CPC} lies in their relatively long length compared to the diameter of the collecting aperture. A practical and cost effective solution for reducing the length of \acp{CPC} is truncation \cite{RWinston2005,P_DTsonev2019}. Length truncation slightly reduces the size of the entrance aperture. This effect has been investigated in detail for \ac{2D} and \ac{3D} \acp{CPC} in \cite{Winston1975Principles,ARabl1976Optical,Welford1978Optics}. Since the parabolic surface of a \ac{CPC} near the entrance aperture is almost parallel to the optical axis, it can be truncated well short of the full \ac{CPC} length without a significant reduction in the entrance aperture size. As a result, truncated \acp{CPC} exhibit a minor decrease in the concentration gain, and yet they reach a better optical efficiency because the light undergoes less number of internal reflections as compared to full-length \acp{CPC}.
%---------------------------------------------------------------------------------------------------
\subsection{Modified ADR Design Using Truncated CPCs}
Let $D_1^\mathrm{T}$, $L_{\mathrm{CPC}}^{\mathrm{T}}$, $G_{\mathrm{CPC}}^{\mathrm{T}}$ and $\theta_{\mathrm{CPC}}^{\mathrm{T}}$ denote the diameter of the entrance aperture, the length, the concentration gain and the acceptance angle of a truncated CPC, while $D_1$, $L_{\mathrm{CPC}}$, $G_{\mathrm{CPC}}$ and $\theta_{\mathrm{CPC}}$ are the respective parameters of a full-length CPC. The truncation ratio is denoted by $\tau$ so that $L^{\mathrm{T}}_{\mathrm{CPC}} = \tau L_{\mathrm{CPC}}$ for $0<\tau<1$. Besides the length reduction, truncation rather increases the acceptance angle of a CPC, albeit its impact on the overall angular collection performance is insignificant. The findings in \cite{ARabl1976Optical} have corroborated that $\theta^{\mathrm{T}}_{\mathrm{CPC}}\approx\theta_{\mathrm{CPC}}$ is the case for $\tau \geq 0.5$. Fig.~\ref{Fig:Truncation} shows the normalised concentration gain of a truncated CPC as a function of $\tau$ for different values of $\theta_{\mathrm{CPC}}$, assuming $n_{\mathrm{CPC}}=1.7$ and $D_2=1.5$~mm. It can be observed that the four characteristic curves are nearly overlapping for $\tau\geq0.4$, for which $G_\mathrm{CPC}^\mathrm{T}\geq0.9G_\mathrm{CPC}$. Also, $G_\mathrm{CPC}^\mathrm{T}\approx0.9G_\mathrm{CPC}$ for $\tau=0.6$ for all four values of $\theta_{\mathrm{CPC}}$, which means a $40\%$ length reduction leads to only $10\%$ gain loss. In the following, we use this as a simplifying assumption.

\begin{figure}[t!]
    \centering
    \includegraphics[width=0.7\textwidth, keepaspectratio=true] {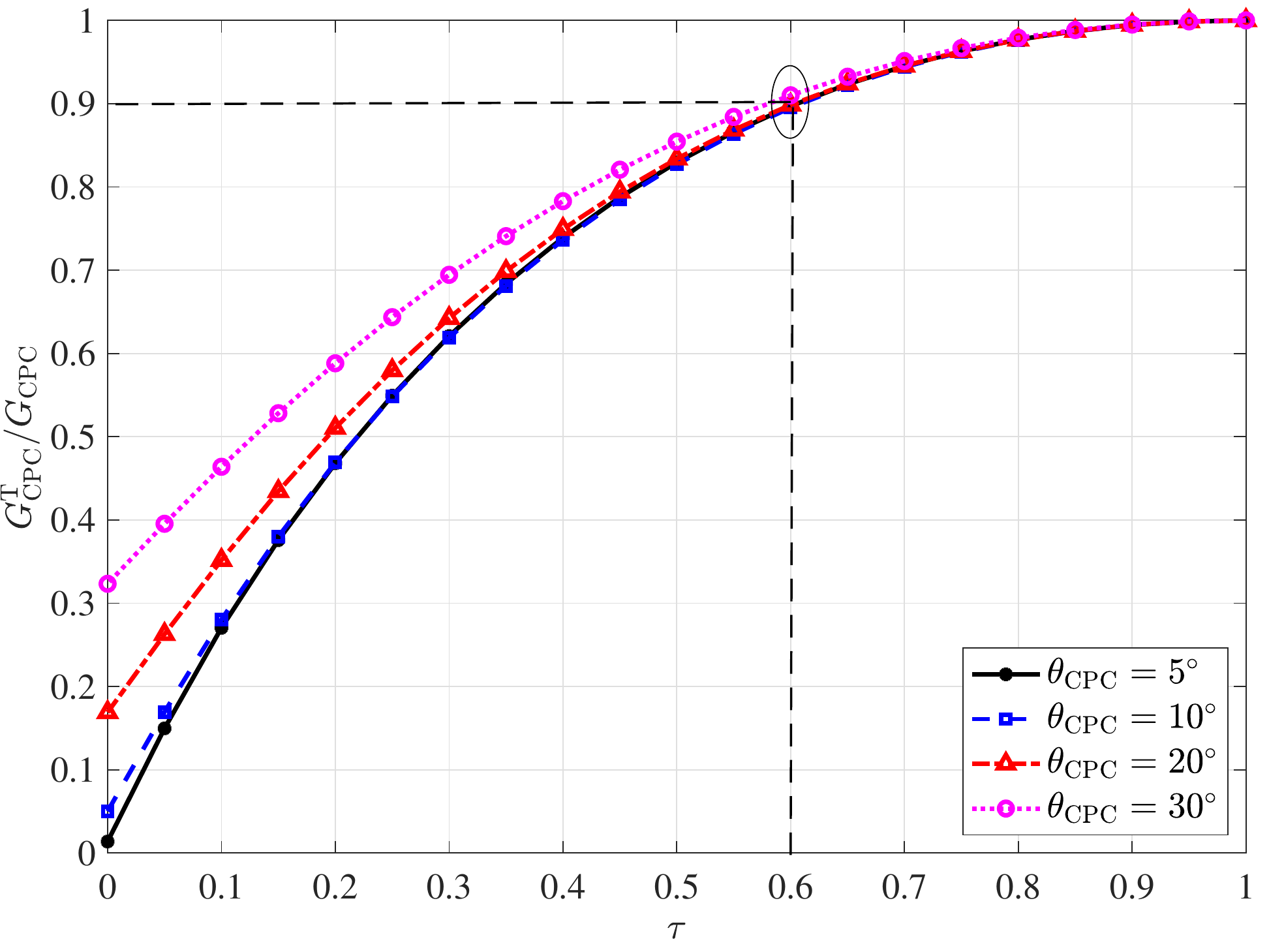}
    \caption{The CPC concentration gain as a function of the truncation ratio $\tau$ for $n_{\mathrm{CPC}}=1.7$ and $D_2=1.5$~mm.}
    \label{Fig:Truncation}
    \vspace{-20pt}
\end{figure}

In this case, the problem statements in \eqref{Eq:Max_1} and \eqref{Eq:Max_2} can be modified by introducing appropriate coefficients into the ADR parameters. In particular, for the modified ADR, it suffices to replace $D_1$ and $L_\mathrm{CPC}$ by $D_1^\mathrm{T}=\sqrt{0.9}D_1$ and $L_\mathrm{CPC}^\mathrm{T}=0.6L_\mathrm{CPC}$ through the use of \eqref{Eq:CPC_D1} and \eqref{Eq:L_CPC2}. Denoting the total effective area and the overall height of this modified ADR by $A_\mathrm{ADR}^\mathrm{T}$ and $L_\mathrm{ADR}^\mathrm{T}$, respectively, $L_\mathrm{ADR}^\mathrm{T}\approx 0.6L_\mathrm{CPC}$ and $A_\mathrm{ADR}^\mathrm{T}=0.9A_\mathrm{ADR}$ based on \eqref{Eq:L_ADR} and \eqref{Eq:A_ADR}. Subsequently, the objective function in \eqref{Eq:Max_1a} and \eqref{Eq:Max_2a} is updated by recalculating the received power $P_\mathrm{r}$ in \eqref{Eq:Pr2} for $D_1^\mathrm{T}$. In addition, the constraints in \eqref{Eq:Max_2c} and \eqref{Eq:Max_2d} change to $L_\mathrm{ADR}^\mathrm{T}\leq L_\mathrm{max}$ and $A_\mathrm{ADR}^\mathrm{T}\leq A_\mathrm{max}$. Note that the \ac{FOV} constraint in \eqref{Eq:Max_1b} and \eqref{Eq:Max_2b} does not change, since $\theta^{\mathrm{T}}_{\mathrm{CPC}}\approx\theta_{\mathrm{CPC}}$. With these modifications, although the main parameters affected including $D_1^\mathrm{T}$, $L_{\mathrm{ADR}}^\mathrm{T}$ and $A_{\mathrm{ADR}}^\mathrm{T}$ are involved in the achievable rate analysis, the essence of the rate maximisation problem is preserved. This means the same solutions already developed for the original problems in \eqref{Eq:Max_1} and \eqref{Eq:Max_2} by means of Theorem~\ref{Theorem:1} and Theorem~~\ref{Theorem:2} would apply to the modified ADR optimisation problems. Now, let us proceed to the maximum achievable rate performance of the modified ADR design based on truncated CPCs with $\tau=0.6$ as discussed.

%---------------------------------------------------------------------------------------------------
\begin{figure}[t!]
    \centering
    \subfloat[\label{Fig:Rmax_Amax}] {\includegraphics[width=0.48\textwidth, keepaspectratio=true] {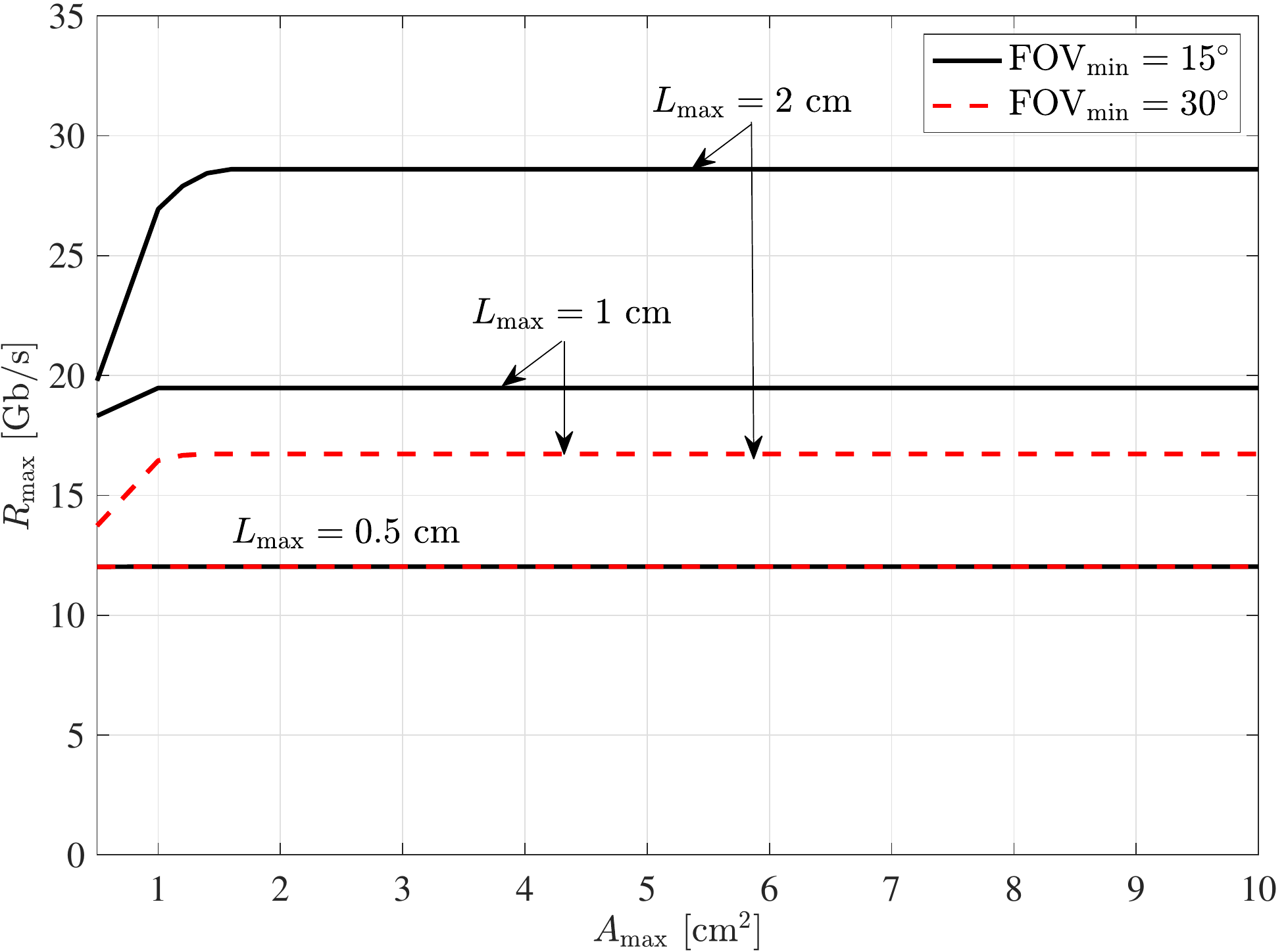}}\hfill
    \subfloat[\label{Fig:Rmax_Lmax}] {\includegraphics[width=0.48\textwidth, keepaspectratio=true] {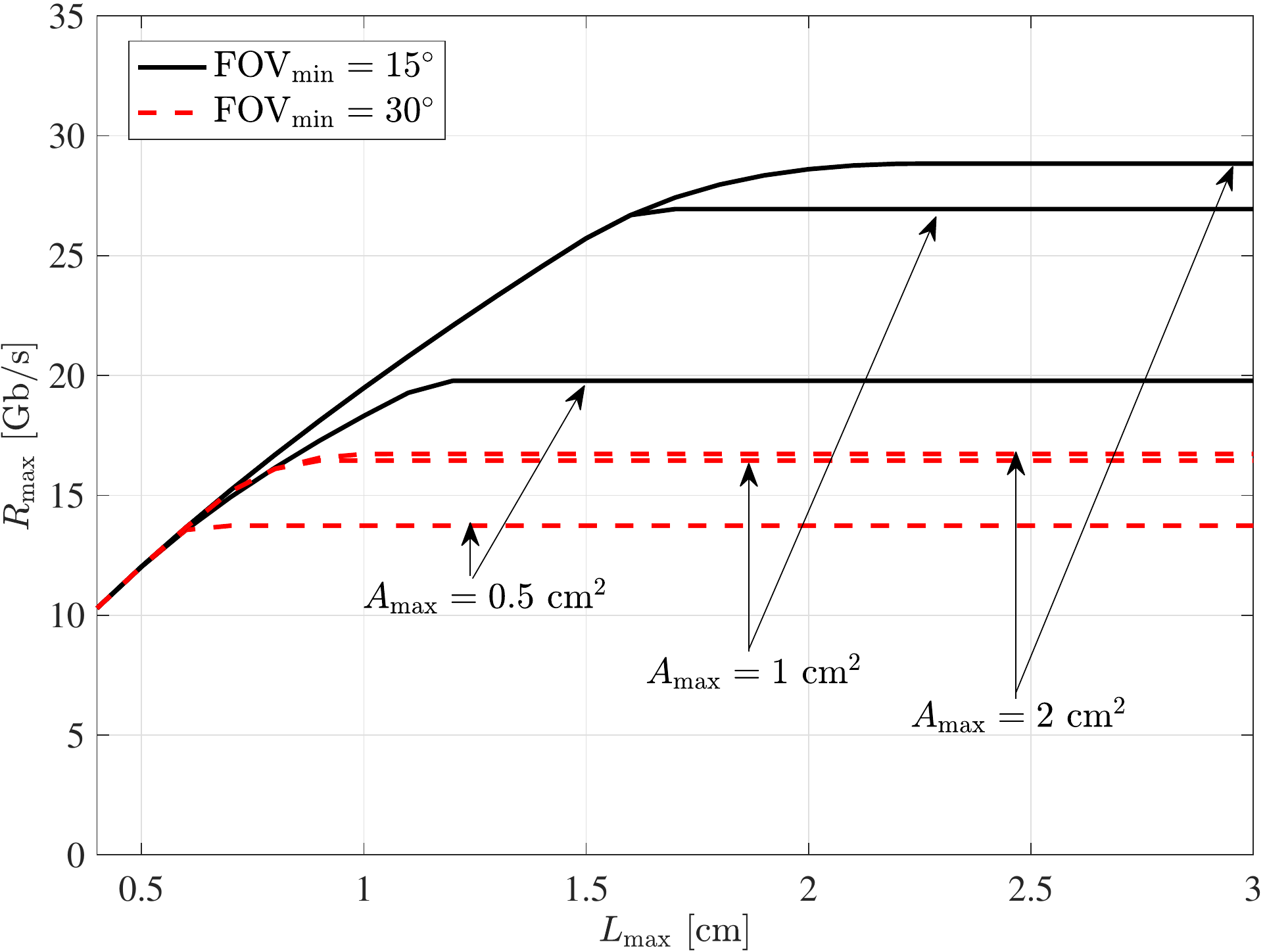}}
    \caption{The maximum achievable data rate $R_\mathrm{max}$ as a function of (a) $A_\mathrm{max}$ and (b) $L_\mathrm{max}$ for Config.~1 with $\tau=0.6$}
    \label{Fig:Rmax}
    \vspace{-20pt}
\end{figure}

\subsection{Performance Evaluation}
We consider the rate maximisation problem under joint \ac{FOV} and overall dimensions constraints as outlined in \eqref{Eq:Max_2}. For a given set of the constrains, the resulting feasible region may fall within one of the various categories shown in Fig.~\ref{Fig:FeasibleRegion2}. To shed light on the impact of each constraint on the maximum achievable data rate, Fig.~\ref{Fig:Rmax} demonstrates $R_\mathrm{max}$ for different values of the constraints, based on Config.~1 in Table~\ref{Tab:2}. In Fig.~\subref*{Fig:Rmax_Amax}, $R_\mathrm{max}$ is plotted against $A_\mathrm{max}$ for $L_\mathrm{max}=0.5,1,2$~cm and $\mathrm{FOV}_\mathrm{min} = 15^\circ,30^\circ$. It can be seen that in the case of $L_\mathrm{max} = 0.5$~cm, the change in $\mathrm{FOV}_\mathrm{min}$ does not make any difference in $R_\mathrm{max}$, since one of the dimensions constraints already dominates the maximum rate performance. Specifically, $L\leq L_\mathrm{max}$ is the most restrictive constraint which limits the performance at $R_\mathrm{max}=12$~Gb/s no matter how large $A_\mathrm{max}$ is chosen. For both cases of $L_\mathrm{max} = 1,2$~cm, on the other hand, the \ac{FOV} constraint does have an influence on the performance in that $R_\mathrm{max}$ improves by decreasing $\mathrm{FOV}_\mathrm{min}$. For $L_\mathrm{max}=1$~cm, the performance is limited at $R_\mathrm{max}=17,19.5$~Gb/s for $\mathrm{FOV}_\mathrm{min}=15^\circ,30^\circ$, respectively, when $A_\mathrm{max}\geq1$~cm$^2$. In the case of $L_\mathrm{max}=2$~cm, $R_\mathrm{max}$ reaches constant values of $17$~Gb/s for $A_\mathrm{max} \geq 1$~cm$^2$ and $28.5$~Gb/s for $A_\mathrm{max}\geq1.5$~cm$^2$, corresponding to $\mathrm{FOV}_\mathrm{min}=15^\circ,30^\circ$.

In Fig.~\subref*{Fig:Rmax_Lmax}, $R_\mathrm{max}$ is evaluated as a function of $L_\mathrm{max}$ for $A_\mathrm{max}=0.5,1,2$~cm$^2$ and $\mathrm{FOV}_\mathrm{min}=15^\circ,30^\circ$. For $A_\mathrm{max}=1,2$~cm$^2$, it is evident that the choice of $\mathrm{FOV}_\mathrm{min}$ has no effect on the maximum rate performance when $L_\mathrm{max}\leq0.7$~cm. By decreasing $A_\mathrm{max}$ to $0.5$~cm$^2$, this is the case as long as $L_\mathrm{max}\leq0.6$~cm. After $L_\mathrm{max}$ exceeds these thresholds, $R_\mathrm{max}$ varies with all three constraints. For sufficiently large values of $L_\mathrm{max}$, however, the rate maximisation is mainly controlled by the \ac{FOV} constraint as well as the total effective area constraint. For $\mathrm{FOV}_\mathrm{min} = 30^\circ$, the performance reaches no greater than $R_\mathrm{max}=13.8,16.8,17$~Gb/s for $A_\mathrm{max}=0.5,1,2$~cm$^2$, respectively. With $\mathrm{FOV}_\mathrm{min}=15^\circ$, these limits are improved to $R_\mathrm{max}=20,27,29$~Gb/s at the cost of halving the minimum \ac{FOV}. In this case, $R_\mathrm{max}$ retains its growing trend with $L_\mathrm{max}$ over a wider range of $L_\mathrm{max}$ while approaching its upper limit. In fact, the constraint on $\mathrm{FOV}$ is relaxed enough allowing the two constraints on $L_\mathrm{ADR}$ and $A_\mathrm{ADR}$ to take control of the rate maximisation across the boundary of the feasible region.

\begin{figure}[t!]
    \captionsetup[subfigure]{justification=centering}
    \centering
    \subfloat[Modified ADR: \\No constraint on dimensions \label{Fig:Rmax_FOVmin_a}] {\includegraphics[width=0.32\textwidth, keepaspectratio=true] {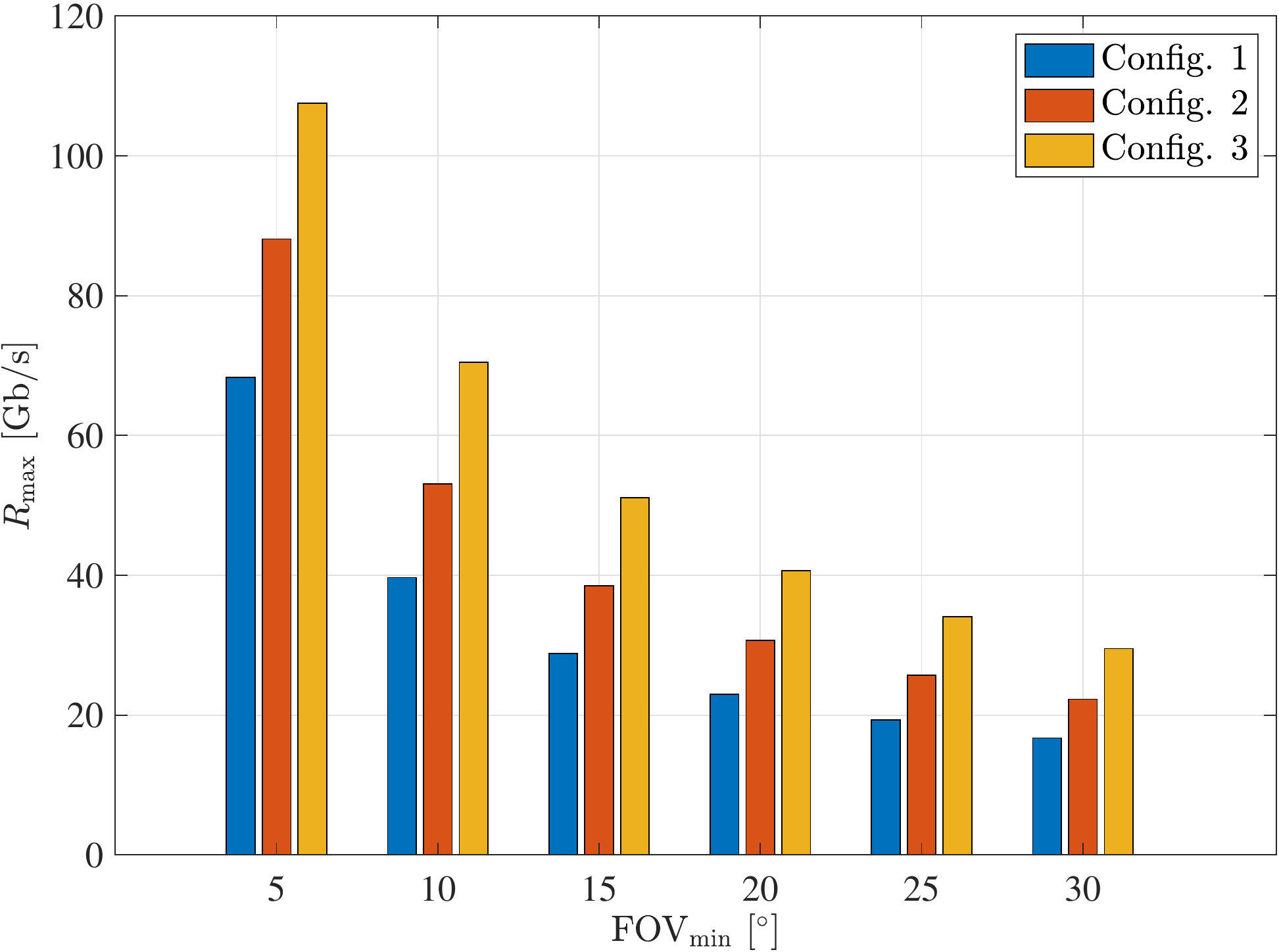}}\hfill
    \subfloat[Modified ADR: \\$L_{\max}=2$~cm, $A_{\max}=4$~cm$^2$ \label{Fig:Rmax_FOVmin_c}] {\includegraphics[width=0.32\textwidth, keepaspectratio=true] {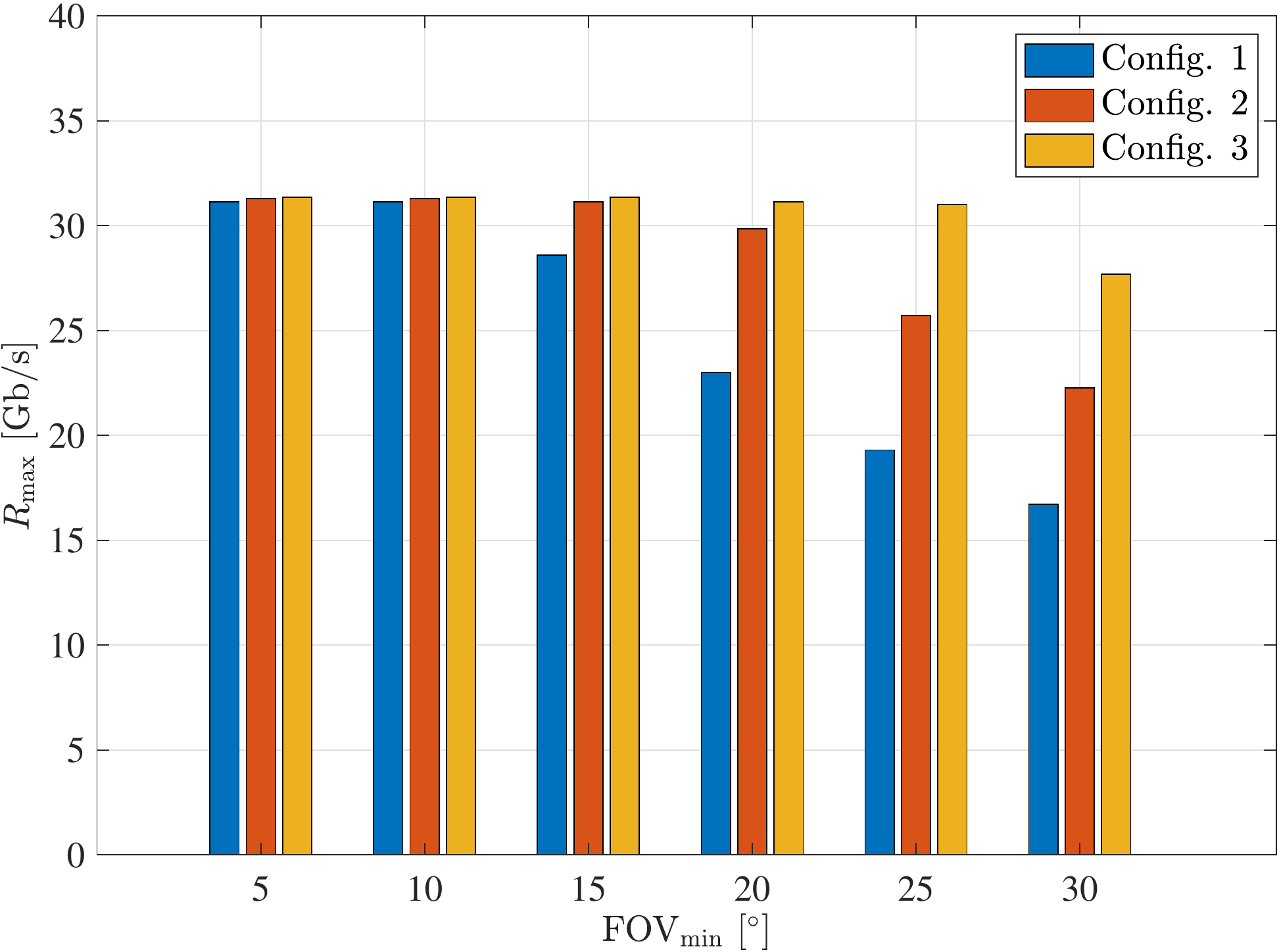}}\hfill
    \subfloat[Modified ADR: \\$L_{\max}=0.5$~cm, $A_{\max}=0.5$~cm$^2$ \label{Fig:Rmax_FOVmin_e}] {\includegraphics[width=0.32\textwidth, keepaspectratio=true] {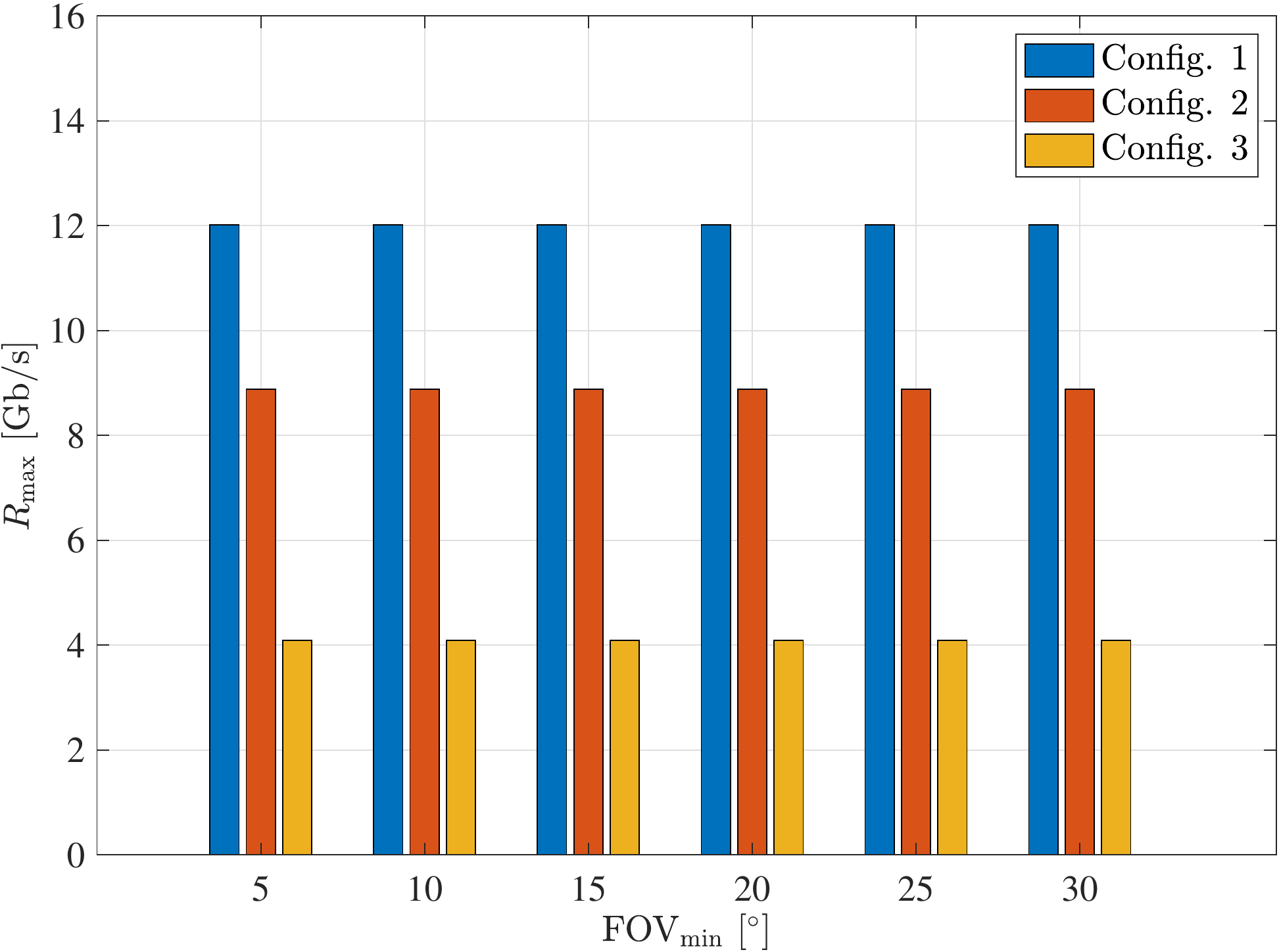}}\\
    \subfloat[Original ADR: \\No constraint on dimensions \label{Fig:Rmax_FOVmin_b}] {\includegraphics[width=0.32\textwidth, keepaspectratio=true] {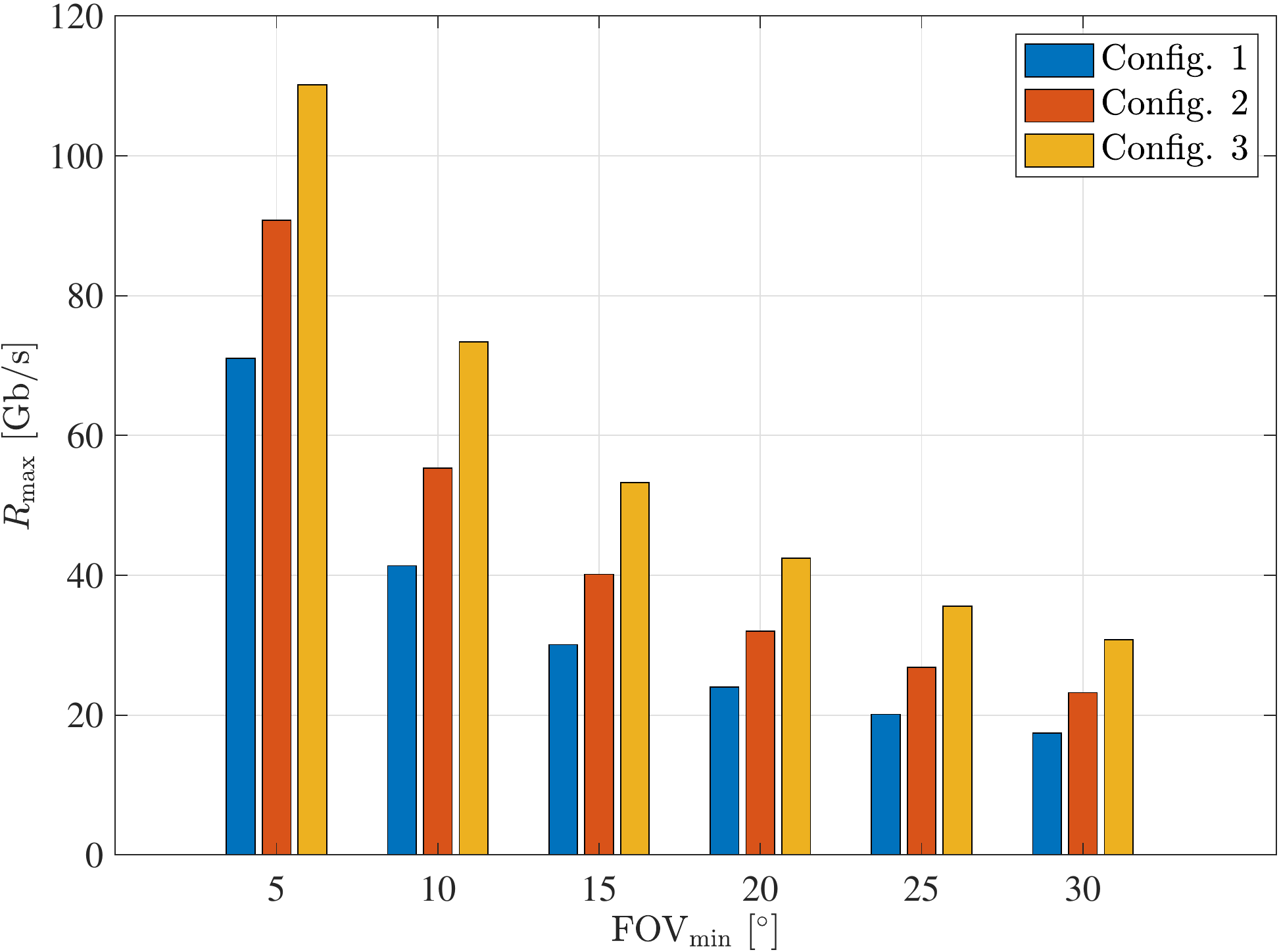}}\hfill
    \subfloat[Original ADR: \\$L_{\max}=2$~cm, $A_{\max}=4$~cm$^2$ \label{Fig:Rmax_FOVmin_d}] {\includegraphics[width=0.32\textwidth, keepaspectratio=true] {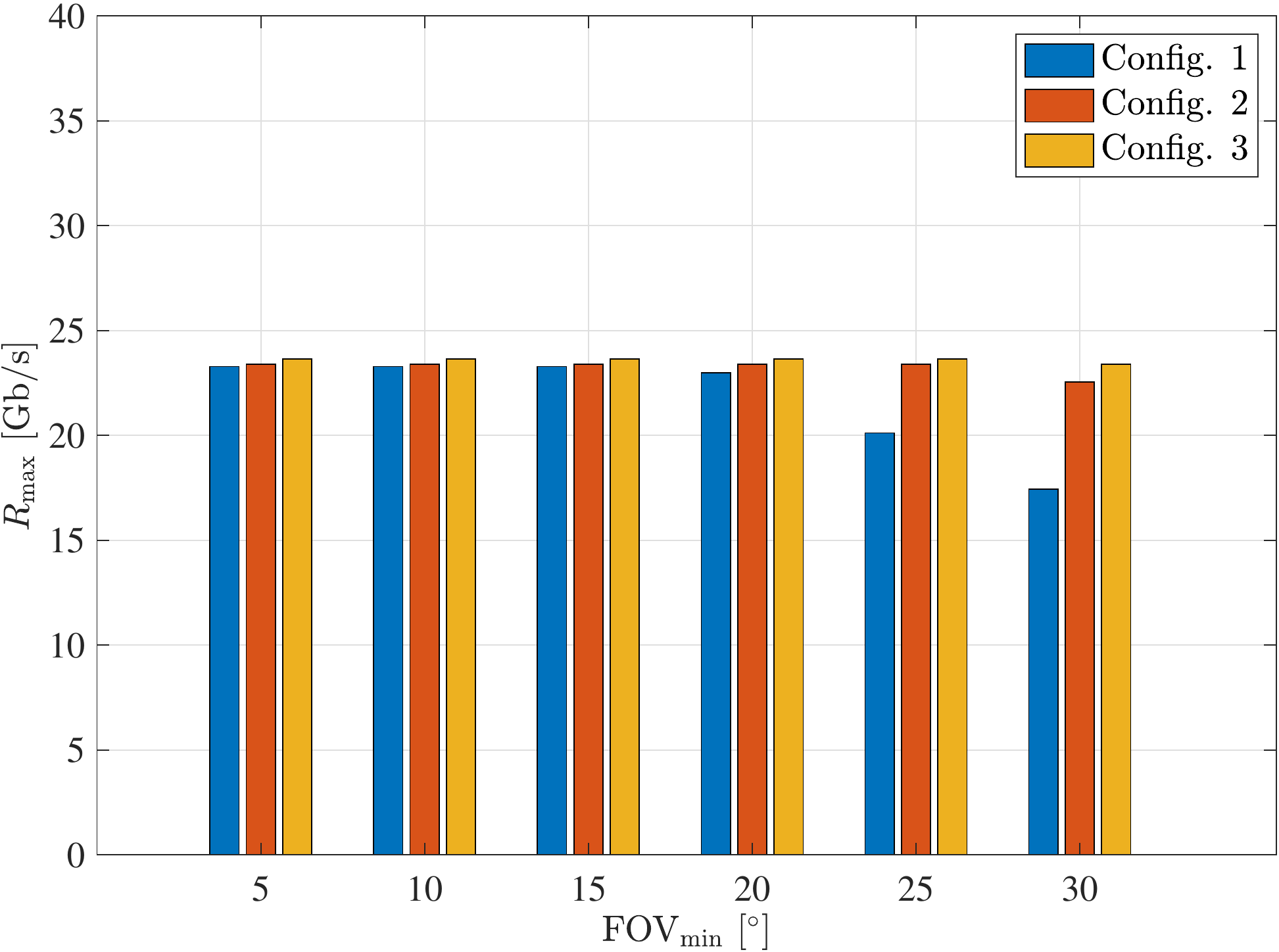}}\hfill
    \subfloat[Original ADR: \\$L_{\max}=0.5$~cm, $A_{\max}=0.5$~cm$^2$ \label{Fig:Rmax_FOVmin_f}] {\includegraphics[width=0.32\textwidth, keepaspectratio=true] {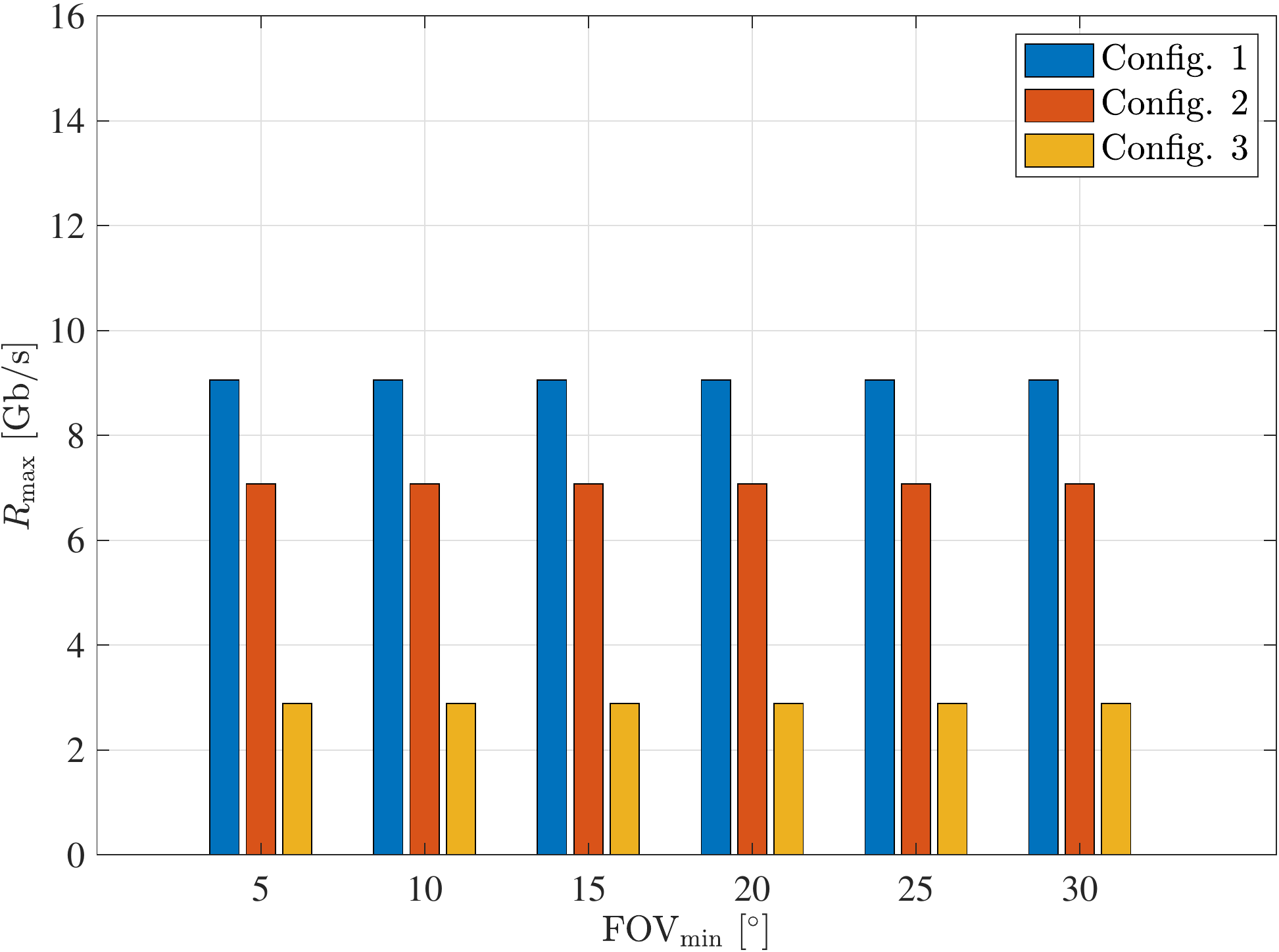}}
    \caption{The maximum achievable rate $R_\mathrm{max}$ vs. $\mathrm{FOV}_\mathrm{min}$ for different PD array sizes and $w_0=10$~{\textmu}m.}
    \label{Fig:Rmax_FOVmin}
    \vspace{-20pt}
\end{figure}

Fig.~\ref{Fig:Rmax_FOVmin} compares the modified \ac{ADR} based on truncated \acp{CPC} versus the original \ac{ADR} using full-length \acp{CPC} in terms of the maximum achievable rate performance. Recall that Configs.~1--3 represent a $1$-tier \ac{ADR} with $2\times2$, $4\times4$ and $8\times8$ \ac{PD} arrays, respectively, per receiver element. To make a fair comparison, we use the same set of $\mathrm{FOV}_\mathrm{min}$, $L_\mathrm{max}$ and $A_\mathrm{max}$ for both \ac{ADR} designs. Depending on the values of these parameters, three operating regimes are distinguished as \ac{NCD} as shown in Figs.~\subref*{Fig:Rmax_FOVmin_a} and \subref*{Fig:Rmax_FOVmin_b}, \ac{MCD} as shown in Figs.~\subref*{Fig:Rmax_FOVmin_c} and \subref*{Fig:Rmax_FOVmin_d}, and \ac{SCD} as shown in Figs.~\subref*{Fig:Rmax_FOVmin_e} and \subref*{Fig:Rmax_FOVmin_f}. When there is no constraint on dimensions, both \ac{ADR} designs achieve almost the same performance and follow an identical decreasing trend with $\mathrm{FOV}_\mathrm{min}$. In this case, the performance of the modified \ac{ADR} is only slightly lower, which is attributed to the $10\%$ loss in the optical gain of \acp{CPC} because of the $40\%$ length truncation. Also, Configs.~1 and 3 mark the lowest and the highest performance levels, respectively. Notwithstanding, Configs.~1--3 attain $R_\mathrm{max}>15$~Gb/s with $\mathrm{FOV}_\mathrm{min}=30^\circ$. The modified \ac{ADR} design manifests its advantage when there are constraints on dimensions. Under \ac{MCD}, for $L_{\max}=2$~cm and $A_{\max}=4$~cm$^2$, the modified \ac{ADR} realises considerably higher performance bounds as compared to the original \ac{ADR}. In the case of \ac{SCD}, for $L_{\max}=0.5$~cm and $A_{\max}=0.5$~cm$^2$, each configuration for each \ac{ADR} design yields an equal performance level for all values of $\mathrm{FOV}_\mathrm{min}$. It is noteworthy that, in contrast to \ac{NCD}, the order of Configs.~1--3 in holding the lowest-to-highest performance levels is reversed under \ac{SCD}. This indicates that under extremely limiting constraints on dimensions, increasing the \ac{PD} array size in fact adversely affects the receiver performance. The modified \ac{ADR} of $L_{\max}=0.5$~cm and $A_{\max}=0.5$~cm$^2$ is a compact \ac{ADR} design that achieves $R_\mathrm{max}=12$~Gb/s with $\mathrm{FOV}_\mathrm{min}=30^\circ$ based on a $2\times2$ \ac{PD} array. The highest performance achieved by the original \ac{ADR} under the same conditions is just above $9$~Gb/s.

%%%%%%%%%%%%%%%%%%%%%%%%%%%%%%%%%%%%%%%%%%%%%%%%%%%%%%%%%%%%%%%%%%%%%%%%%%%%%%%%%%%%%%%%%%%%%%%%%%%%
%%%%%%%%%%%%%%%%%%%%%%%%%%%%%%%%%%%%%%%%%%%%%%%%%%%%%%%%%%%%%%%%%%%%%%%%%%%%%%%%%%%%%%%%%%%%%%%%%%%%
\section{Conclusions} \label{Sec:7}
An in-depth study of the fundamental design tradeoffs was conducted for non-imaging \acp{ADR} based on \acp{CPC}. Building on a two-stage diversity combining scheme that reduces the complexity, a unified and tractable analytical framework was developed to meet the challenging requirements for high performance laser-based optical wireless receivers. A non-convex optimisation problem was formulated for maximising the achievable data rate $R$ under constraints on $\mathrm{FOV}$, the overall height $L_\mathrm{ADR}$ and the total top area $A_\mathrm{ADR}$, to find optimum values of $\mathrm{FOV}$ and the \ac{PD} bandwidth $B$ for a given \ac{PD} array size. First, with an objective to achieve $R\geq10$~Gb/s under the minimum \ac{FOV} constraint only, the results evince a tradeoff between $R$ and $\mathrm{FOV}$. In particular, $R=10$~Gb/s is realised with $\mathrm{FOV}\leq65^\circ$, while $R=20$~Gb/s is achieved with $\mathrm{FOV}\leq28^\circ$. Alternatively, when $\mathrm{FOV}$ is fixed, $R$ has an absolute maximum with respect to $B$, indicating that the performance is indeed degraded if the bandwidth is increased beyond a certain threshold. For $\mathrm{FOV}=30^\circ$, the peak data rates of $R=14.00,18.56,24.53$~Gb/s are observed at $B=2.1,2.7,3.5$~GHz for $2\times2$, $4\times4$ and $8\times8$ \ac{PD} arrays, respectively. Hence, when there is no constraint on the receiver dimensions, increasing the \ac{PD} array size improves the maximum rate performance, even though a $2\times2$ \ac{PD} array serves the purpose of fulfilling $R\geq10$~Gb/s. For jointly satisfying $R\geq10$~Gb/s and $\mathrm{FOV}\geq30^\circ$, the design space enlarges by increasing the \ac{PD} array size used for each \ac{ADR} element as well as by adding to the number of \ac{ADR} tiers. It was shown that the design spaces provided by a $2$-tier \ac{ADR} with $4\times4$ \ac{PD} arrays and a $3$-tier \ac{ADR} using $2\times2$ \ac{PD} arrays are almost equally large. 

In the presence of the constraints on dimensions, the results bring out an exacting tradeoff of practical importance for designing wide \ac{FOV} \acp{ADR}: acquiring a small footprint and achieving a high data rate are opposing objectives. For a $1$-tier \ac{ADR} based on $8\times8$ \ac{PD} arrays with $\mathrm{FOV}_\mathrm{min}=30^\circ$, $R_\mathrm{max}=30$~Gb/s provided that $L_\mathrm{max}\geq3$~cm and $A_\mathrm{max}\geq4$~cm$^2$. The same \ac{ADR} configuration reaches $R_\mathrm{max}=10$~Gb/s if $L_\mathrm{max}=1$~cm and $A_\mathrm{max}=2$~cm$^2$. The results also demonstrate the significant impact of choosing small values for $L_\mathrm{max}$ such that the overall height constraint takes over the performance when $L_\mathrm{max}$ reduces to $0.5$~cm. To overcome this challenge, a modified \ac{ADR} solution by means of $40\%$ length truncation for \acp{CPC} was proposed. It turns out that this truncation has only a marginal impact on the overall performance when the dimensions are unconstrained. The modified \ac{ADR} has an advantage over the original \ac{ADR} made of full-length \acp{CPC} under moderately to strictly constrained dimensions. It was found that under extremely limiting constraints on dimensions, the use of larger \ac{PD} arrays significantly degrades the maximum achievable rate. However, with the proposed modification, a compact \ac{ADR} design of $L_{\max}=0.5$~cm and $A_{\max}=0.5$~cm$^2$ is able to achieve $R_\mathrm{max}=12$~Gb/s with $\mathrm{FOV}_\mathrm{min}=30^\circ$ based on a $2\times2$ \ac{PD} array. Future research includes a comparison of imaging versus non-imaging receivers, and performance optimisation of multi-beam optical wireless networks by using \acp{ADR} with mobility and random orientation.

%%%%%%%%%%%%%%%%%%%%%%%%%%%%%%%%%%%%%%%%%%%%%%%%%%%%%%%%%%%%%%%%%%%%%%%%%%%%%%%%%%%%%%%%%%%%%%%%%%%%
%%%%%%%%%%%%%%%%%%%%%%%%%%%%%%%%%%%%%%%%%%%%%%%%%%%%%%%%%%%%%%%%%%%%%%%%%%%%%%%%%%%%%%%%%%%%%%%%%%%%

%%%%%%%%%%%%%%%%%%%%%%%%%%%%%%%%%%%%%%%%%%%%%%%%%%%%%%%%%%%%%%%%%%%%%%%%%%%%%%%%%%%%%%%%%%%%%%%%%%%%
%%%%%%%%%%%%%%%%%%%%%%%%%%%%%%%%%%%%%%%%%%%%%%%%%%%%%%%%%%%%%%%%%%%%%%%%%%%%%%%%%%%%%%%%%%%%%%%%%%%%
\section*{Acknowledgement}
The authors acknowledge financial support from the Engineering and Physical Sciences Research Council (EPSRC) under grant EP/S016570/1 `Terabit Bidirectional Multi-User Optical Wireless System (TOWS) for 6G LiFi'. 

%%%%%%%%%%%%%%%%%%%%%%%%%%%%%%%%%%%%%%%%%%%%%%%%%%%%%%%%%%%%%%%%%%%%%%%%%%%%%%%%%%%%%%%%%%%%%%%%%%%%

%%%%%%%%%%%%%%%%%%%%%%%%%%%%%%%%%%%%%%%%%%%%%%%%%%%%%%%%%%%%%%%%%%%%%%%%%%%%%%%%%%%%%%%%%%%%%%%%%%%%
%%%%%%%%%%%%%%%%%%%%%%%%%%%%%%%%%%%%%%%%%%%%%%%%%%%%%%%%%%%%%%%%%%%%%%%%%%%%%%%%%%%%%%%%%%%%%%%%%%%%
\appendices
%-----------------------------------------
% Appendix A: Proof of Proposition 1
%-----------------------------------------
\section{Proof of Proposition~\ref{Proposition:1}} \label{App:1}
The partial derivative of $R$ with respect to $\mathrm{FOV}$ is obtained by using \eqref{Eq:Rate}:
\begin{equation}
    \dfrac{\partial R}{\partial \mathrm{FOV}} =
    \dfrac{2B}{\ln2} \left(\dfrac{{R_\mathrm{PD}^2 P_\mathrm{r}}}{{\Gamma N_0 B} + {R_\mathrm{PD}^2 P_\mathrm{r} ^2}}\right) \dfrac{\partial P_\mathrm{r}}{\partial \mathrm{FOV}} \raisepunct{.}
    \label{Eq:PD_R_FOV}
\end{equation}
Based on \eqref{Eq:FOV_theta} and \eqref{Eq:Pr2}, it follows that:
\begin{equation}
    \dfrac{\partial P_\mathrm{r}}{\partial \mathrm{FOV}} = 
    \left(\dfrac{1}{2N_\mathrm{tier}+1}\right) \dfrac{\partial P_\mathrm{r}}{\partial \theta_\mathrm{CPC}} = 
    \left(\dfrac{1}{2N_\mathrm{tier}+1}\right) \dfrac{P_{\mathrm{t}} D_1 \mathrm{FF}}{(w'(D))^2} \exp{\left( -\dfrac{D_1 ^2}{2(w'(D))^2} \right )} \dfrac{\partial D_1}{\partial \theta_\mathrm{CPC}} \raisepunct{.}
    \label{Eq:PD_Pr_FOV}
\end{equation}
According to \eqref{Eq:PD_R_FOV} and \eqref{Eq:PD_Pr_FOV}, the polarity of $\dfrac{\partial R}{\partial \mathrm{FOV}}$ depends on the last product term, $\dfrac{\partial D_1}{\partial \theta_\mathrm{CPC}}$. Based on \eqref{Eq:CPC_D1}, the partial derivative of $D_1$ with respect to $\theta_\mathrm{CPC}$ is given by:
\begin{equation}
\dfrac{\partial D_1}{\partial \theta_\mathrm{CPC}} = \dfrac{2K_1}{ B} \left(-\dfrac{n_\mathrm{CPC} \cos\theta_\mathrm{CPC}}{\sin^2 \theta_\mathrm{CPC}}\right)<0\raisepunct{,}
\end{equation}
because $\cos\theta_\mathrm{CPC}>0$ due to Corollary~\ref{Corollary:1}. As a result, $\dfrac{\partial R}{\partial\mathrm{FOV}}<0$, hence the proof is completed.
%-----------------------------------------
% Appendix B: Proof of Proposition 2
%-----------------------------------------
\section{Proof of Proposition~\ref{Proposition:2}} \label{App:2}
It is clear from \eqref{Eq:L_ADR} that $\dfrac{\partial L_{\mathrm{ADR}}}{\partial B}<0$. By using \eqref{Eq:FOV_theta} and \eqref{Eq:L_ADR}, the partial derivative of $L_{\mathrm{ADR}}$ with respect to $\mathrm{FOV}$ is derived in the form:
\begin{equation}
\begin{aligned}
    \mkern-18mu \dfrac{\partial L_{\mathrm{ADR}}} {\partial \mathrm{FOV}} = 
    &\left(\dfrac{1}{2N_\mathrm{tier}+1}\right) \dfrac{\partial L_{\mathrm{ADR}}}{\partial \theta_\mathrm{CPC}} = \left(\dfrac{1}{2N_\mathrm{tier}+1}\right) \times\\
    &\dfrac{K_1}{B} \left(-\dfrac{2n_\mathrm{CPC} \sin\theta_\mathrm{CPC} +  n_\mathrm{CPC} \sin\theta_\mathrm{CPC}\tan^2\theta_\mathrm{CPC} + \tan^2\theta_\mathrm{CPC}}{\sin^2\theta_\mathrm{CPC}\tan^2\theta_\mathrm{CPC}}\right) \raisepunct{.}
\end{aligned}
\end{equation}
Considering the fact that $\theta_\mathrm{CPC}\leq\dfrac{\pi}{6}$ from Corollary~\ref{Corollary:1}, $\sin\theta_\mathrm{CPC}>0$. Consequently, $\dfrac{\partial L_{\mathrm{ADR}}}{\partial\mathrm{FOV}}<0$.

Also, based on \eqref{Eq:A_ADR}, it can be readily verified that $\dfrac{\partial A_{\mathrm{ADR}}}{\partial B}<0$. By using \eqref{Eq:FOV_theta} and \eqref{Eq:A_ADR}, the partial derivative of $A_{\mathrm{ADR}}$ with respect to $\mathrm{FOV}$ is obtained as follows:
\begin{equation}
\begin{aligned}
    \mkern-18mu &\dfrac{\partial A_{\mathrm{ADR}}} {\partial \mathrm{FOV}}
    = \left(\dfrac{1}{2N_\mathrm{tier}+1}\right) \times \\
    &\frac{K_2}{B^2} \left[ -\dfrac{2 \sin\theta_\mathrm{CPC} \cos\theta_\mathrm{CPC}}{\sin^4\theta_\mathrm{CPC}} \left( {1 + \sum_{i=1}^{N_\mathrm{tier}} 6i\cos(2i\theta_\mathrm{CPC})} \right)
    - \dfrac{1}{\sin^2\theta_\mathrm{CPC}} {\sum_{i=1}^{N_\mathrm{tier}} 12i^2 \sin(2i\theta_\mathrm{CPC})} \right].
\end{aligned}
\end{equation}
Since $\theta_\mathrm{CPC}\leq\dfrac{\pi}{6}$, $\sin\theta_\mathrm{CPC}>0$ and $\cos\theta_\mathrm{CPC}>0$. Also, $2i\theta_\mathrm{CPC}<\dfrac{\pi}{2}$, thus $\sin(2i\theta_\mathrm{CPC})>0$ and $\cos(2i\theta_\mathrm{CPC})>0$. Therefore, $\dfrac{\partial A_{\mathrm{ADR}}}{\partial \mathrm{FOV}}<0$. This completes the proof.

%%%%%%%%%%%%%%%%%%%%%%%%%%%%%%%%%%%%%%%%%%%%%%%%%%%%%%%%%%%%%%%%%%%%%%%%%%%%%%%%%%%%%%%%%%%%%%%%%%%%
%%%%%%%%%%%%%%%%%%%%%%%%%%%%%%%%%%%%%%%%%%%%%%%%%%%%%%%%%%%%%%%%%%%%%%%%%%%%%%%%%%%%%%%%%%%%%%%%%%%%
\bibliographystyle{IEEEtran}
\bibliography{IEEEabrv,refs}
\end{document}